\newcommand{\myvspace}[1]{}
\newcommand{\edge}[2]{\{#1,#2\}}
\newcommand{\fpt}{FPT}
\newcommand{\np}{NP}
\newcommand{\nph}{{\np}-hard}
\newcommand{\nphshort}{{\nph}}
\newcommand{\nphns}{{\np}-hardness}
\newcommand{\pnph}{para-{\nph}}
\newcommand{\wa}{W[1]}
\newcommand{\wah}{{\wa}-hard}
\newcommand{\wahns}{{\wa}-hardness}
\newcommand{\hide}[1]{}
\newcommand{\poly}{{P}}
\newcommand{\prob}[1]{{\sc{#1}}}
\newcommand{\discandi}{p} 
\newcommand{\myfig}[1]{Figure~\ref{#1}}
\newcommand{\vset}[1]{V(#1)}
\newcommand{\eset}[1]{E(#1)}
\newcommand{\onlyfull}[1]{}
\newcommand{\bigo}[1]{O(#1)}
\newcommand{\bigos}[1]{O^*(#1)}
\newcommand{\yes}{Yes}
\newcommand{\no}{NO}
\newcommand{\yesins}{\yes-instance}
\newcommand{\noins}{\no-instance}
\newcommand{\sep}{{,}\ }
\newcommand{\abs}[1]{|#1|}
\newcommand{\setmid}{:}
\newcommand{\myit}[1]{#1}
\newcommand{\thm}{Theorem}
\newcommand{\xs}{U}
\newcommand{\xse}{c}
\newcommand{\xc}{S}
\newcommand{\xce}{s}
\newcommand{\xsize}{\kappa}
\newcommand{\EP}[3]{
\begin{center}
{\small
\begin{tabularx}{0.98\columnwidth}{ll}
\toprule
\multicolumn{2}{l}{\parbox[t]{0.9\columnwidth}{\textsc{#1}}} \\
\midrule
{\bf Input:}   & \parbox[t]{0.85\columnwidth}{#2\vspace*{1mm}}  \\
{\bf Question:}& \parbox[t]{0.85\columnwidth}{#3\vspace*{.5mm}} \\
\bottomrule
\end{tabularx}
}
\end{center}
}
\let\shortcite\cite
\begin{document}
\title{On the Complexity of Constructive Control under Nearly Single-Peaked Preferences}

\author
{
Yongjie Yang
\affil{Chair of Economic Theory, Faculty of Human and Business Sciences,\\ Saarland University, Saarbr\"{u}cken,
Germany\\ Email: yyongjiecs@gmail.com}
}

\begin{abstract}
We investigate the complexity of {\sc{Constructive Control by Adding/Deleting Votes}} (CCAV/CCDV) for $r$-approval, Condorcet, Maximin and Copeland$^{\alpha}$ in $k$-axes and $k$-candidates partition single-peaked elections. In general, we prove that CCAV and CCDV for most of the voting correspondences mentioned above are {\nph} even when~$k$ is a very small constant.
Exceptions are CCAV and CCDV for Condorcet and CCAV for $r$-approval in $k$-axes single-peaked elections, which we show to be fixed-parameter tractable with respect to~$k$.
In addition, we give a polynomial-time algorithm for recognizing $2$-axes elections, resolving an open problem. Our work leads to a number of dichotomy results.
To establish an {\nphns} result, we also study a property of $3$-regular bipartite graphs which may be of independent interest. In particular, we prove that for every $3$-regular bipartite graph,
there are two linear orders of its vertices such that the two endpoints of every edge are consecutive in at least one of the two orders.
\end{abstract}

\keywords{election control\sep complexity\sep single-peaked\sep NP-hard\sep Borda\sep Copeland\sep Maximin}

\acmformat{Yongjie Yang. On the Complexity of Constructive Control under Nearly Single-Peaked Preferences. To appear in Proceedings of the 24th European Conference on Artificial Intelligence (ECAI 2020). 
}

\begin{bottomstuff}
This is the full version of the paper appeared in Proceedings of ECAI 2020 (\url{http://ecai2020.eu/}).

\end{bottomstuff}
\maketitle


\begingroup
\let\clearpage\relax
\section{Introduction}
\label{sec:introduction}
\onlyfull{
Investigating the complexity of strategic voting problems lies at the heart of computational social choice.
In this paper, we focus on two strategic voting problems, namely, {\sc{Constructive Control by Adding Votes}}  (CCAV) and {\sc{Constructive Control by Deleting Votes}} (CCDV).
In both problems we are given an election and a distinguished candidate~${\discandi}$, and the question for CCAV (CCDV) is whether we can make~${\discandi}$ the winner by adding (deleting)
a limited number of votes. The study of the complexity of CCAV and CCDV was initialized by Bartholdi~III~et al.~\shortcite{Bartholdi92howhard}.
Since then, the complexity of CCAV and CCDV for a number of voting correspondences has been investigated (see~\cite{handbookofcomsoc2016Cha7FR} for a survey).
In this paper, we mainly consider $r$-approval, Condorcet, Maximin and Copeland$^{\alpha}$ where~$\alpha$ is a rational number between~$0$ and~$1$.
It is known that in general elections CCAV and CCDV for all these voting correspondences
are {\nph}~\cite{DBLP:journals/jair/FaliszewskiHHR09,DBLP:journals/jair/FaliszewskiHH11,andrewlinphd2012,Bartholdi92howhard}.
In particular, Lin~\shortcite{andrewlinphd2012} derived dichotomy results for $r$-approval with respect to the values of~$r$:
CCAV is {\nph} if and only if $r\geq 4$, and CCDV is {\nph} if and only if $r\geq 3$.

In many real-world applications, voters' preferences fall into the category of some restricted domains. Arguably, single-peaked domain is one of the most famous ones.
Intuitively, an election is single-peaked if there is an order of the candidates, the so-called {\it{axis}}, such that each voter's preference purely increases, or decreases,
or first increases and then decreases along this order. Single-peaked elections have several nice properties. For instance, single-peaked elections always have weak Condorcet winners.
If there are no ties in the comparisons between candidates (e.g., when the number of voters is odd), the Condorcet winner exists.
Moreover, single-peaked elections escape Arrow's impossibility theorem---there is a voting correspondence which satisfies all axiomatic properties in
Arrow's impossibility theorem if voter's preferences are single-peaked~\cite{arrowimpossibility1951,Black48,sean09}.
Due to the importance of single-peaked elections, researchers investigated the complexity of many voting problems in single-peaked elections.
It turned out that many problems which are {\nph} in general elections become polynomial-time solvable when restricted to single-peaked elections
(see, e.g.,~\cite{BrandtBHH2015JAIRbypassingsinglepeakelectionAAAI10,DBLP:journals/iandc/FaliszewskiHHR11}).
In particular, CCAV and CCDV for most of the above voting correspondences are polynomial-time solvable when restricted to single-peaked elections.

Based on these {\nphns} and polynomial-time solvability results, it is natural to explore the complexity border of these problems from general elections to single-peaked elections.
This paper mainly focuses on CCAV and CCDV for numerous voting correspondences in two generalizations of single-peaked elections,
the so-called $k$-axes single-peaked elections ({\it{$k$-axes elections}} for short) and $k$-candidates partition single-peaked elections
({\it{$k$-CP elections}} for short)\onlyfull{, where~$k$ is an integer}. Many generalizations of single-peaked elections, including the two we study,
were recently intensively studied by Erd\'{e}lyi, Lackner, and Pfandler~\shortcite{Erdelyi2017}~\footnote{$k$-axes elections studied in this paper is $(k-1)$-additional axes elections
studied in~\cite{Erdelyi2017}.}. An election is a $k$-axes election if there are~$k$ axes such that every vote is single-peaked with respect to at least one of the axes.
An election is a {\it{$k$-CP election}} if there is a $k$-partition $(C_1,\dots,C_k)$ of the candidates such that the subelection restricted to each~$C_i$ is single-peaked.
Clearly, $1$-axis elections and $1$-CP elections are exactly single-peaked elections.
}

{\prob{Constructive Control by Adding Votes}} (CCAV) and {\prob{Constructive Control by Deleting Votes}} (CCDV) are two of the election control problems studied in the pioneering paper
by Bartholdi III, Tovey, and Trick~\shortcite{Bartholdi92howhard}.
These two problems model the applications where an election controller aims to make a distinguished candidate a winner by adding or deleting a limited number of voters.
Since their seminal work, the complexity of these problems under a lot of prestigious voting correspondences have been studied, and it turned out that many problems are
\nph~\cite{Bartholdi92howhard,DBLP:journals/jair/FaliszewskiHH11,DBLP:journals/jair/FaliszewskiHHR09,andrewlinphd2012}.
However, when restricted to single-peaked elections, many of them become polynomial-time solvable~\cite{BrandtBHH2015JAIRbypassingsinglepeakelectionAAAI10,DBLP:journals/iandc/FaliszewskiHHR11}.
Recall that an election is single-peaked if there is an order of the candidates, the so-called {\it{axis}}, such that each voter's preference purely increases, or decreases,
or first increases and then decreases along this order.
A natural question is that, as preferences of voters are extended from the single-peaked domain to the general domain with respect to a certain concept of nearly single-peakedness,
where does the complexity of these problems change? A large body of results have been reported with respect to some nearly single-peaked domains.
This paper aims to extend these study by investigating the complexity of the above two problems under several important voting correspondences when restricted to the
$k$-axes and $k$-candidates partition single-peaked elections ($k$-axes and $k$-CP elections for short respectively).
Generally speaking, an election is a $k$-axes election if there are~$k$ axes such that every vote is single-peaked with respect to at least one of the axes.
Equivalently, an election is a $k$-axes election if the votes can be partitioned into~$k$ sets each of which induces a single-peaked election.
An election is a {{$k$-CP election}} if there is a $k$-partition $(C_1,\dots,C_k)$ of the candidates such that the subelection restricted to each~$C_i$ is single-peaked.
Clearly, $1$-axis elections and $1$-CP elections are exactly single-peaked elections.
The voting correspondences studied in this paper include $r$-approval, Condorcet, Maximin, and Copeland$^{\alpha}$, where $\alpha$ is a rational number such that~$0\leq \alpha\leq 1$.

Additionally, we also resolve an open question regarding the complexity of recognizing $2$-axes elections by deriving a polynomial-time algorithm.

\onlyfull{
\begin{figure*}
\centering{
\includegraphics[width=0.85\textwidth]{relation.pdf}
}
\caption{Parameters to measure closeness to the single-peaked domain and their relations studied in the literature.
The relations are due to~\protect\cite{Erdelyi2017}.
An arc from a parameter~$X$ to another parameter~$Y$ indicates that for any collection~$E$ of linear preferences, it holds that $X(E)\leq Y(E)$.}
\label{fig-relations}
\end{figure*}
}

\subsection{Related Work}
\onlyfull{In this section, we compare our main results with previous ones in the literature.}

Our study is clearly related to~\cite{BrandtBHH2015JAIRbypassingsinglepeakelectionAAAI10,DBLP:journals/iandc/FaliszewskiHHR11} where many voting problems including particularly
CCAV and CCDV for $r$-approval, Condorcet, Maximin, and Copeland$^1$ were shown to be polynomial-time solvable when restricted to single-peaked elections.
Resolving an open question, Yang~\shortcite{AAMAS17YangBordaSinlgePeaked} recently proved that CCAV and CCDV for Borda are {\nph} even when restricted to single-peaked elections.
\onlyfull{Particularly, Brandt~et~al.~\shortcite{BrandtBHH2015JAIRbypassingsinglepeakelectionAAAI10} proved that CCAV and CCDV for all weakCondorcet-consistent voting correspondences
are polynomial-time solvable when restricted to single-peaked elections. A voting correspondence is weakCondorcet-consistent if it selects exactly the weak Condorcet winners when
they exist~\footnote{We note that the notion of weakCondorcet-consistent has several versions studied in the literature(see~\cite{Felsenthal2014} for further discussions).
Our version here is the one used in~\cite{BrandtBHH2015JAIRbypassingsinglepeakelectionAAAI10}.
\onlyfull{ We shall discuss one of its variants later.}}.
It is known that Copeland$^1$ is weakCondorcet consistent. However, Copeland$^{\alpha}$ where $0\leq \alpha<1$ is not weakCondorcet consistent even in single-peaked elections.
Somewhat superiorly, to the best of our knowledge the complexity of CCAV and CCDV for Copeland$^{\alpha}$ where $0\leq \alpha<1$ in single-peaked elections has remained open to date.
We resolve this open question by giving a polynomial-time algorithm for the problem. To this end, we study several properties of Copeland$^{\alpha}$ winners in single-peaked elections
which may be of independent interest.
}

Our study is also closely related to the work of Yang and Guo~\shortcite{Yangaamas14a,DBLP:journals/jcss/YangG17,DBLP:journals/mst/YangG18}
where CCAV and CCDV for numerous voting correspondences in elections of single-peaked width at most~$k$ and $k$-peaked elections were studied.
Generally, an election has {\myit{single-peaked width~$k$}} if the candidates can be divided into groups, each of size at most~$k$,
such that every vote ranks all candidates in each group consecutively and, moreover, considering each group as a single candidate results in a single-peaked election.
An election is {\myit{$k$-peaked}} if there is an axis~$\lhd$ such that for every vote~$\pi$ there is a $k$-partition of~$\lhd$ such that~$\pi$ restricted to each component of the partition
is single-peaked. Obviously, $k$-CP elections are a subclass of $k$-peaked elections.
In addition, it is known that any election of single-peaked width~$k$ is a $k'$-CP election for some $k'\leq k$~\cite{Erdelyi2017}.
However, there are no general relation between $k$-axes elections and $k$-CP elections, and between $k$-axes elections and elections with single-peaked width~$k$~\cite{Erdelyi2017}.
\onlyfull{
Yang and Guo~\shortcite{Yangaamas14a} proved that CCAV and CCDV for Copeland$^{\alpha}$, where $0\leq \alpha<1$, in elections with single-peaked width~$k$ are {\nph} for every $k\geq 2$.
It then follows from the relation between nearly single-peaked elections studied in~\cite{Erdelyi2017} that CCAV and CCDV for Copeland$^{\alpha}$,
where $0\leq \alpha<1$, are {\nph} in $k$-CP elections for every $k\geq 2$. For Copeland$^1$ and Maximin, Yang and Guo~\shortcite{Yangaamas14a} proved that CCAV and CCDV in elections with
single-peaked~$k$ is polynomial-time solvable if $k=2$, but become {\nph} for every $k\geq 3$. The polynomial-time solvability results are based on the facts that Maximin and Copeland$^1$
are both weakCondorcet consistent and, moreover, every election with single-peaked width~$2$ has at least one weak Condorcet winner. However, Copeland$^{\alpha}$, where $0\leq \alpha<1$,
is not weakCondorcet consistent even in single-peaked elections~\cite{DBLP:journals/iandc/FaliszewskiHHR11}.
In contrast to their polynomial-time solvability results in elections with single-peaked width~$2$, we show that CCAV and CCDV for Copeland$^1$ and Maximin remain {\nph} in $2$-CP elections.
Due to the {\nphns} of CCAV and CCDV in elections with single-peaked width $k\geq 3$ and the relation between these two concepts of nearly single-peaked elections studied in~\cite{Erdelyi2017},
it holds that CCAV and CCDV for Copeland$^1$ and Maximin are {\nph} in $k$-CP elections for every $k\geq 2$.
For Condorcet, Yang and Guo proved that CCAV and CCDV are {\fpt} with respect to the parameter single-peaked width.
In contrast, we show that the problems are {\nph} in $k$-CP elections for every $k\geq 3$.
Yang and Guo~\shortcite{DBLP:journals/mst/YangG18} also studied CCAV and CCDV in $k$-peaked elections.
For Condorcet, Maximin and Copeland$^{\alpha}, 0\leq \alpha\leq 1$, they proved that CCAV is {\nph} in $3$-peaked elections and CCDV is {\nph} in
$4$-peaked elections\footnote{Precisely, they achieved {\wahns} results with respect to the solution size.\onlyfull{ {\wa} is a super class of {\fpt}.
Unless {\fpt}={\wa}, no {\wah} problem admits an {\fpt}-algorithm.}}. As $k$-CP elections are a special case of $k$-peaked elections, our {\nphns} results extend theirs.
On the one hand, we fill several gaps left in~\cite{DBLP:journals/mst/YangG18}. On the other hand we show {\nphns} for even special cases of $2$-peaked elections.
Yang and Guo~\shortcite{DBLP:journals/jcss/YangG17} also derived dichotomy results for CCAV and CCDV for $r$-approval in $k$-peaked elections, with respect to the values of~$k$ and~$r$.
For instance, they developed a polynomial-time algorithm for CCAV for $r$-approval in $2$-peaked elections if~$r$ is a constant, but proved that the problem becomes {\nph} if~$r$
is not a constant. As $2$-CP elections are $2$-peaked elections, their polynomial-time algorithm applies to CCAV for $r$-approval in $2$-CP elections for all constants~$r$.
In addition, they proved that CCAV for $r$-approval in $k$-peaked elections for every $k\geq 3$ and $r\geq 4$ is {\nph}.
We strengthen this result by showing that the problem is {\nph} in $k$-CP elections for every $k\geq 3$ and $r\geq 4$.
Moreover, Yang and Guo proved that CCDV for $r$-approval in $2$-peaked elections is {\nph} if and only if $r\geq 3$.
We strengthen their result by showing that CCDV for $r$-approval remains {\nph} in $k$-CP elections for every $r\geq 3$ and $k\geq 2$.}

In addition to CCAV and CCDV, many other problems restricted to single-peaked or nearly single-peaked domains have been extensively and intensively studied in the literature in the last decade
(see, e.g.,~\cite{DBLP:journals/jair/BetzlerSU13,DBLP:conf/ecai/CornazGS12,DBLP:conf/ijcai/CornazGS13,DBLP:conf/ijcai/YuCE13} for {\prob{Winner Determination}}, \cite{DBLP:conf/aaai/Walsh07}
for {\prob{Possible/Necessary Winner Determination}}, \cite{AAMAS15Yangmanipulationspwidth} for {\prob{Manipulation}}, \cite{DBLP:conf/aaai/MenonL16} for {\sc{Bribery}},
and~\cite{DBLP:journals/ai/FaliszewskiHH14,DBLP:conf/atal/Yang17} for some other important strategic voting problems).
Approval-Based multiwinner voting problems restricted to analogs of single-peaked domains have also been investigated from the
complexity perspective very recently~\cite{DBLP:conf/ijcai/ElkindL15,DBLP:conf/atal/LiuG16,DBLP:conf/aaai/Peters18}.

Finally, we point out that a parallel line of research on the complexity of single-crossing and nearly single-crossing domains has advanced immensely too
(see, e.g.,~\cite{DBLP:journals/aamas/MagieraF17,DBLP:journals/tcs/SkowronYFE15}).

We also refer to the book chapters~\cite{structuredpreferencesElkindLP,Hemaspaandra2016} and references therein for important development on these studies.

\onlyfull{
Many other problems pertaining to voting in nearly single-peaked elections have also been studied in the literature,
see, e.g.,~\cite{DBLP:conf/ecai/CornazGS12,DBLP:conf/ijcai/CornazGS13,AAMAS15Yangmanipulationspwidth,DBLP:conf/ijcai/YuCE13,DBLP:journals/mst/YangG18} and references therein for further details.
Moreover, voting problems in other restricted elections such as single-crossing elections have also been investigated recently, see, e.g.,~\cite{DBLP:journals/aamas/MagieraF17}.}

\subsection{Our Contributions}
Our contributions are summarized as follows.

\begin{itemize}\itemsep=2pt
\item We study CCAV and CCDV in $k$-axes and $k$-CP elections under $r$-approval, Condorcet, Copeland$^{\alpha}$, and Maximin.
\item We show that many problems already become {\nph} even when~$k$ is a very small constant. However, there are several exceptions.
(See Table~\ref{tab_our_results} below for the concrete results.)
In addition, our results reveal that from the parameterized complexity point of view, CCAV and CCDV for some voting correspondences behave completely differently.
For instance, for $r$-approval,  CCAV in $k$-axes elections is fixed-parameter tractable ({\fpt}) with respect to~$k$,
but CCDV is already {\nph} even for $k=2$, meaning that CCDV restricted to $k$-axes elections is even {\pnph} with respect to~$k$.
Our results also reveal that when restricted to different domains, the same problem may behave differently.
For instance, for Condorcet, we show that both CCAV and CCDV in $k$-axes elections are {\fpt} with respect to~$k$, but they become {\pnph} with respect to~$k$ when restricted to $k$-CP elections.
Finally, we would like to point out that our study also leads to numerous dichotomy results for CCAV and CCDV with respect to the values of~$k$.
\item We study the complexity of determining whether an election is a $k$-axes election.
It is known that for $k=1$, the problem is polynomial-time solvable~\cite{Bartholdi1986T,Doignon:1994:PTA:182528.182531,DBLP:conf/ecai/EscoffierLO08}.
Erd\'{e}lyi, Lackner, and Pfandler~\shortcite{Erdelyi2017} proved that the problem is {\nph} for every $k\geq 3$.
We complement these results by showing that determining whether an election is a $2$-axes election is polynomial-time solvable,
filling the last complexity gap of the problem with respect to~$k$.
\end{itemize}

\onlyfull{
\subsection{Extensions of Our Study}
The main reason why we start the analysis with $k$-axes and $k$-CP elections is that we would like to make our study as general as possible.
Particularly, Erdelyi, Lackner, and Pfandler~\cite{Erdelyi2017} investigated the relations among numerous nearly single-peaked domains,
and according to their investigation, hardness results with respect to $k$-axies and $k$-CP domains directly apply to many other nearly-single peaked domains.
The relations among different domains studied in~\cite{Erdelyi2017} are given in Figure~\ref{fig-relations}.
For clarity of exposition, let us recall these notions of single-peakedness first in general. For the precise definitions of these notions we refer to~\cite{Erdelyi2017}.
\begin{description}
\item[Candidate/Vote deletion] An election is a $k$-candidate/vote deletion SP election if there are~$k$ votes/candidates whose deletion results in a single-peaked election.
\item[Global swap] An election is a $k$-global swap SP election if we can swap in total at most~$k$ pairs of consecutive candidates in votes so that the resulting election is single-peaked.
\item[Local swap] An election is a $k$-local swap SP election if we can swap at most $k$ pairs of consecutive candidates in every vote so that the resulting election is single-peaked.
\item[Clones] A clone set is a subset of candidates which are ranked together in all votes. An election is $k$-clones single-peaked if there are $k$
clone sets such that if we combine each of these clone set into one single candidate we obtain a single-peaked election.
\item[SP Width] An election has SP width~$k$  if the candidates can be divided into clone sets, each of size at most~$k$, such that if we
combine each clone set into one single candidate we obtain a single-peaked election.
\end{description}

Let~$A$ and~$B$ denote two domains of nearly single-peaked elections, each associated with an integer parameter measuring the closeness of
elections in this domain to the single-peaked elections. Then, an arc from~$A$ to~$B$ in the figure means that for any election the $B$-parameter of the election is at most the $A$-parameter
of the election.
A direct consequence is that if CCAV/CCDV is {\nph} when restricted to elections with $B$-parameter being a constant~$h$ and there is a
directed path from~$A$ to~$B$ in Figure~\ref{fig-relations}, then CCAV/CCDV is {\nph} when restricted to elections with $A$-parameter being $h$ too.
For instance, we show that CCAV and CCDV for Maximin are {\nph} when restricted to $2$-CP elections.
As there is a directed path from {\sf{global swaps}} to {\sf{candidates partition}}, our results imply that CCAV and CCDV for Maximin restricted to
$2$-global swap single-peaked elections are {\nph} too.
}

\renewcommand\arraystretch{1.2}
\begin{table}
\begin{center}
{
\begin{tabular}{|l|c|c|c|c|c|}\hline
 \multicolumn{6}{|c|}{\prob{Constructive Control by Adding Votes} (CCAV)}\\ \hline
&  SP & $(k\geq 2)$-Axes  &  $2$-CP &$(k\geq 3)$-CP & general
 \\ \hline

\multirow{2}{*}{$r$-approval} &
\multirow{2}{*}{\poly~$\clubsuit$} &
\multirow{2}{*}{{\bf{\fpt}}{ ({\thm}~\ref{thm_r_approval_k_axis_fpt})}}&
\multirow{2}{*}{{\poly} $\diamondsuit$}&
$r\leq 3$: {\poly~$\heartsuit$} &
$r\leq 3$: {\poly~$\heartsuit$}
\\

&
&
&
&
{{$r\geq 4$: \bf\nphshort}{ ({\thm}~\ref{thm_ccav_r_approval_3_cp_nph})}} &
$r\geq 4$: {\nphshort~$\heartsuit$}
\\ \hline

Borda   &
\multicolumn{4}{c|}{{\nphshort}~$\spadesuit$} &
{\nphshort} $\maltese$
 \\ \hline

{Condorcet} &
{\poly~$\sharp$} &
{\bf\fpt}{ ({\thm}~\ref{thm_Condorcet_FPT_k_dimension})} &
\onlyfull{{\bf{\poly}} } open&
{\bf\nphshort} ({\thm}~\ref{thm-CCAV-CCDV-Condorcet-3-CP-NP-hard})&
{\nphshort~$\natural$}
\\ \hline

Copeland$^{\alpha\in [0,1)}$ &
open &
{\bf{\nphshort}} ({\thm}~\ref{thm_CCAV_CCDV_Copeland_k_additional_Axis_NP_hard})&
\multicolumn{2}{c|}{\nphshort~$\yen$} &
{\nphshort}~$\P$
\\ \hline

Copeland$^1$&
{\poly}~$\sharp$ &
{\bf{\nphshort}} ({\thm}~\ref{thm_CCAV_CCDV_Copeland_k_additional_Axis_NP_hard})&
\multicolumn{2}{c|}{{\bf{\nphshort}} ({\thm}~\ref{thm-CCAV-CCDV-Copeland-1-maximin-2-cp-np-hard})}&
{\nphshort}~$\P$
\\ \hline

Maximin &
{\poly}~$\sharp$ &
{\bf{\nphshort}} ({\thm}~\ref{thm_CCAV_CCDV_Copeland_k_additional_Axis_NP_hard})&
\multicolumn{2}{c|}{{\bf{\nphshort}} ({\thm}~\ref{thm-CCAV-CCDV-Copeland-1-maximin-2-cp-np-hard})}&
{\nphshort}~$\S$
\\ \hline
\end{tabular}
}
\bigskip

{
\begin{tabular}{|l|c|c|c|c|c|}\hline
 \multicolumn{6}{|c|}{\prob{Constructive Control by Deleting Votes} (CCDV)}\\ \hline

&
SP & $(k\geq 2)$-Axes  & $2$-CP & $(k\geq 3)$-CP & general \\ \hline

\multirow{2}{*}{$r$-approval} &
\multirow{2}{*}{\poly~$\clubsuit$} &
\multicolumn{3}{c|}{$r\leq 2$: {\poly}~$\heartsuit$}&
$r\leq 2$: {\poly~$\heartsuit$}\\

& &
\multicolumn{3}{c|}{$r\geq 3$: {\bf\nphshort}{ ({\thm}~\ref{thm_ccdv_r_approval_2_cp_np_hard})}}&
$r\geq 3$: {\nphshort~$\heartsuit$}\\ \hline

Borda   &
\multicolumn{4}{c|}{{\nphshort}~$\spadesuit$} &
{\nphshort}~$\pounds$ \\ \hline

{Condorcet} &
{\poly~$\sharp$} &
{\bf\fpt}{  ({\thm}~\ref{thm_Condorcet_FPT_k_dimension})} &
\onlyfull{{\bf{\poly}}} open &
{\bf\nphshort} ({\thm}~\ref{thm-CCAV-CCDV-Condorcet-3-CP-NP-hard})&
{\nphshort~$\natural$}\\ \hline

Copeland$^{\alpha\in [0,1)}$ &
open &
{\bf{\nphshort}} ({\thm}~\ref{thm_CCAV_CCDV_Copeland_k_additional_Axis_NP_hard})&
\multicolumn{2}{c|}{\nphshort~$\yen$}&
{\nphshort}~$\P$\\ \hline

Copeland$^1$&
{\poly~$\sharp$} &
{\bf{\nphshort}} ({\thm}~\ref{thm_CCAV_CCDV_Copeland_k_additional_Axis_NP_hard})&
\multicolumn{2}{c|}{{\bf{\nphshort}} ({\thm}~\ref{thm-CCAV-CCDV-Copeland-1-maximin-2-cp-np-hard})}&
{\nphshort}~$\P$\\ \hline

Maximin &
{\poly~$\sharp$} &
{\bf{\nphshort}} ({\thm}~\ref{thm_CCAV_CCDV_Copeland_k_additional_Axis_NP_hard})&
\multicolumn{2}{c|}{{\bf{\nphshort}} ({\thm}~\ref{thm-CCAV-CCDV-Copeland-1-maximin-2-cp-np-hard})}&
{\nphshort}~$\S$\\ \hline
\end{tabular}
}
\caption{A summary of the complexity of CCAV and CCDV in general and nearly single-peaked domains.
Here, ``P'' stands for ``polynomial-time solvable''. 
Moreover, ``SP'' stands for ``single-peaked''.
Our results are in boldface. All {\fpt} results are with respect to~$k$.
Results marked by~$\clubsuit$ are from~\protect\cite{DBLP:journals/iandc/FaliszewskiHHR11},
by~$\diamondsuit$ from~\protect\cite{DBLP:journals/jcss/YangG17},
by~$\heartsuit$  from~\protect\cite{DBLP:conf/icaart/Lin11},
by~$\spadesuit$ from~\protect\cite{AAMAS17YangBordaSinlgePeaked},
by~$\maltese$ from~\protect\cite{Nathan07},
by~$\sharp$ from~\protect\cite{BrandtBHH2015JAIRbypassingsinglepeakelectionAAAI10},
by~$\P$ from \protect\cite{DBLP:journals/jair/FaliszewskiHHR09},
by~$\natural$ from \protect\cite{Bartholdi92howhard},
by~$\S$ from \protect\cite{DBLP:journals/jair/FaliszewskiHH11},
by~$\pounds$ from \protect\cite{DBLP:journals/tcs/LiuZ13},
and by~$\yen$ from \protect\cite{Yangaamas14a}.
}
\label{tab_our_results}
\end{center}
\end{table}



\section{Preliminaries}
In this section, we give the notions used in the paper.
For a positive integer~$i$, we use~$[i]=\{j\in \mathbb{N} \setmid 0<j\leq i\}$ to denote the set of all positive integers no greater than~$i$.
\smallskip

{\bf{Election.}} An {\it{election}} is a tuple $\mathcal{E}=(\mathcal{C},\Pi_{\mathcal{V}})$,
where~$\mathcal{C}$ is a set of {\it{candidates}} and~$\Pi_{\mathcal{V}}$ a multiset of {\it{votes}},
defined as permutations (linear orders) over~$\mathcal{C}$. 
For two candidates $c, c'\in \mathcal{C}$ and a vote $\pi\in \Pi_{\mathcal{V}}$, we say~$c$ is {\it{ranked above}}~$c'$ or~$\pi$ {\it{prefers}}~$c$ to~$c'$ if $\pi(c)< \pi(c')$. 
For two subsets $X,Y\subseteq \mathcal{C}$ of candidates, a vote with preference $X\succ Y$ means that this vote prefers every $x\in X$ to every $y\in Y$.
For brevity, we use $x\succ y$ for $\{x\}\succ \{y\}$.  
Here,~$\pi(c)$ is the {\it{position}} of~$c$ in~$\pi$, i.e., $\pi(c)=\abs{\{c'\in \mathcal{C} \setmid \pi(c')<\pi(c)\}}+1$.
For $C\subseteq \mathcal{C}$ and a vote $\pi\in \Pi_{\mathcal{V}}$, let $\pi(C)=\{\pi(c) \mid c\in C\}$.
In addition, let~$\pi^C: C\rightarrow [\abs{C}]$ be~$\pi$ restricted to~$C$ so that for~$c,c'\in C$, $\pi(c)<\pi(c')$ implies $\pi^C(c)<\pi^C(c')$.
Let $\Pi^C_{\mathcal{V}}=\{\pi^C \mid \pi\in \Pi_{\mathcal{V}}\}$.
Hence, $(C, \Pi^C_{\mathcal{V}})$ is the election $(\mathcal{C}, \Pi_{\mathcal{V}})$ restricted to~$C$.
We use $N_{\mathcal{E}}(c,c')$ to denote the number of votes preferring~$c$ to~$c'$ in~$\mathcal{E}$.
We drop~$\mathcal{E}$ from the notation when it is clear from the context which election is considered.
For two candidates~$c$ and~$c'$ in~$\mathcal{C}$, we say~{\it{$c$ beats~$c'$}} if $N(c,c')>N(c',c)$, and~$c$ {\it{ties}}~$c'$ if $N(c,c')=N(c',c)$.

For a linear order~$\lhd$ over a set~$A$, we say that two elements in~$A$ are {\it{consecutive}} if there are no other elements from~$A$ between them in the order.

\smallskip

{\bf{Voting correspondence.}}
A {\it{voting correspondence}}~$\varphi$ is a function that maps an election $\mathcal{E}=(\mathcal{C},\Pi_{\mathcal{V}})$ to a
non-empty subset~$\varphi(\mathcal{E})$ of~$\mathcal{C}$.
We call the elements in~$\varphi(\mathcal{E})$ the {\it{winners}} of~$\mathcal{E}$ with respect to~$\varphi$.
\onlyfull{If~$\varphi(\mathcal{E})$ contains only one winner, we call the winner the {\it{unique winner}}; otherwise, we call them {\it{nonunique winners}}.}
 The following voting correspondences are related to our study.

\begin{description}\itemsep=5pt
\item[$r$-approval] Each vote approves exactly its top-$r$ candidates. Winners are those with the most total approvals.
Throughout this paper,~$r$ is assumed to be a constant, unless stated otherwise.

\item[Borda] Every vote~$\pi$ gives $m-\pi(c)$ points to every candidate~$c$ and the winners are the ones with the highest total score. Here,~$m$ is the number of candidates.


\item[Copeland$^\alpha$ ($0\leq \alpha \leq 1$)]
For a candidate~$c$, let~$B(c)$ (resp.\ $T(c)$) be the set of candidates beaten by~$c$ (resp.\ tie with~$c$). The Copeland$^\alpha$ score of~$c$ is $|B(c)|+\alpha\cdot |T(c)|$.
A Copeland$^{\alpha}$ winner is a candidate with the highest score.

\item[Maximin]
The Maximin score of a candidate~$c\in \mathcal{C}$
is $\min_{c'\in \mathcal{C}\setminus \{c\}}N(c,c')$. Maximin winners are those with the highest Maximin score.

\item[Condorcet] The {\it{Condorcet winner}} of an election is the candidate that beats all other candidates.
It is well-known that each election has either zero or exactly one Condorcet winner. In addition, both Copeland$^{\alpha}$ and Maximin select only the Condorcet winner if it exists.
In this paper, the Condorcet correspondence refers to as the following one: if the Condorcet winner exists, it is the unique winner; otherwise, all candidates win.
\end{description}

Our discussion also needs the concept of weak Condorcet winner. Precisely, a candidate in an election is a weak Condorcet winner if and only if it is not beaten by anyone else.

\smallskip
{\bf{Nearly single-peaked elections.}} An election~$(\mathcal{C}, \Pi_{\mathcal{V}})$ is {\textit{single-peaked}}
if there is a linear order~$\lhd$ of~$\mathcal{C}$, called an {\it{axis}},
such that for every vote $\pi\in \Pi_{\mathcal{V}}$ and every three candidates $a,b,c\in\mathcal{C}$ with $a\;\lhd\;b\;\lhd\;c$ or $c\;\lhd\;b\;\lhd\;a$,
it holds that $\pi(c)< \pi(b)$ implies $\pi(b)<\pi(a)$. An election $(\mathcal{C}, \Pi_{\mathcal{V}})$ is {\it{$k$-axes single-peaked}}
if there are~$k$ axes $\lhd_1,\dots,\lhd_k$ such that every $\pi\in \Pi_{\mathcal{V}}$ is single-peaked with respect to at least one of $\{\lhd_1,\dots,\lhd_k\}$.
\onlyfull{For simplicity, we call such an election a {\it{$k$-axes election}}.}
In addition, $(\mathcal{C}, \Pi_{\mathcal{V}})$ is a {\it{$k$-CP election}}
if there is a $k$-partition $(C_1,\dots,C_k)$ of~$\mathcal{C}$ such that for all $i\in [k]$, $(C_i, \Pi^{C_i}_{\mathcal{V}})$ is single-peaked.
\onlyfull{For simplicity, we call such an election a {\it{$k$-CP election}}.}


\medskip
{\bf{Problem formulation.}}
For a voting correspondence~$\varphi$, we study the following two problems.

{\EP{Constructive Control by Adding Votes (CCAV)}
{An election $(\mathcal{C}, \Pi_{\mathcal{V}})$, a distinguished candidate $p\in \mathcal{C}$, a multiset~$\Pi_{\mathcal{W}}$ of votes, and a positive integer~$\ell$.}
{Is there $\Pi_{W}\subseteq \Pi_{\mathcal{W}}$ such that $|\Pi_{W}|\leq \ell$ and~${\discandi}$ uniquely wins
$(\mathcal{C}, \Pi_{\mathcal{V}}\cup \Pi_{W})$ with respect to~$\varphi$?}
}

{In the above definition, votes in~$\Pi_{\mathcal{V}}$ and~$\Pi_{\mathcal{W}}$ are referred to as {\it{registered votes}} and {\it{unregistered votes}}, respectively.}
For an instance $((\mathcal{C}, \Pi_{\mathcal{V}}), p\in \mathcal{C}, \Pi_{\mathcal{W}}, \ell)$ of CCAV,
a subset $\Pi_{W}\subseteq \Pi_{\mathcal{W}}$ is called a {\myit{feasible solution}}
of the instance
if~${\discandi}$ uniquely wins $(\mathcal{C}, \Pi_{\mathcal{V}}\cup \Pi_{W})$.

{\EP{Constructive Control by Deleting Votes (CCDV)}
{An election $(\mathcal{C}, \Pi_{\mathcal{V}})$, a distinguished candidate $p\in \mathcal{C}$, and a positive integer~$\ell$.}
{Is there $\Pi_{V}\subseteq \Pi_{\mathcal{V}}$ such that $|\Pi_{V}|\leq \ell$ and~${\discandi}$ uniquely wins the election $(\mathcal{C}, \Pi_{\mathcal{V}}\setminus  \Pi_{V})$
with respect to~$\varphi$?}
}

For an instance $((\mathcal{C}, \Pi_{\mathcal{V}}), p\in \mathcal{C}, \ell)$ of CCAV, a subset $\Pi_{V}\subseteq \Pi_{\mathcal{V}}$ is called a {\myit{feasible solution}}
of the instance
if~${\discandi}$ uniquely wins $(\mathcal{C}, \Pi_{\mathcal{V}}\setminus \Pi_{V})$.
An optimal solution of a {\yesins} of CCAV/CCDV refers to as a feasible solution consisting of the minimum votes.

In this paper, we study CCAV and CCDV in $k$-CP ($k$-axes) elections. For CCAV, we mean that $(\mathcal{C}, \Pi_{\mathcal{V}}\cup \Pi_{\mathcal{W}})$
is a $k$-CP ($k$-axes) election.
\onlyfull{Moreover, for these problems in $k$-axes elections, we assume that the~$k$ axes with respect to which each vote is single-peaked with respect
to at least one of them are given.
Similarly, for $k$-CP elections, we assume that the $k$-partition of~$\mathcal{C}$, restricted to each component of which the votes are single-peaked is given.
Nevertheless, all {\nphns} results hold even when these information are unknown.
The above assumptions make sense as in many real-world applications, the~$k$ axes (resp.\ $k$-partition of~$\mathcal{C}$) are known in advance.
This is actually one of the reasons why domain restricted elections can arise in practice.
For example, in real-world single-peaked political elections, the voters are thought to agree upon that the candidates are ordered on a common known left-right dimension.
}

\smallskip


For {\nphns} results, we are only interested in the minimum values of~$k$ for which CCAV and CCDV in $k$-CP ($k$-axes) elections are {\nph}.
In fact, one can easily show that, for all voting correspondences considered in this paper,
if CCAV and CCDV in $k$-CP (resp.\ $k$-axes) elections are {\nph}, so are they in $(k+1)$-CP (resp.\ $(k+1)$-axes) elections.
\onlyfull{First, there is a trivial reduction from CCAV/CCDV for all aforementioned voting correspondences in $k$-CP elections to CCAV/CCDV in $(k+1)$-CP elections:
given an instance with a $k$-CP election, create one more candidate ranked in the last position in all votes. Moreover, if a $k$-axes election contains at least $k+1$ votes
(this is the case in our proofs), it is also a $(k+1)$-axes election.}
 Our {\nphns} results are based on reductions from the following problem.

 \EP
{Restricted Exact Cover by $3$-Sets (RX3C)}
{A universe $\xs=\{\xse_1,\xse_2,\dots,\xse_{3{\xsize}}\}$ and a collection $\xc=\{\xce_1,\xce_2,\dots,\xce_{3\xsize}\}$ of $3$-subsets of~$\xs$ such that each~$\xse_i\in \xs$
occurs in exactly three elements of~$\xc$.}
{Is there an $\xc'\subseteq \xc$ such that $|\xc'|={\xsize}$ and each $\xse_i\in \xs$ appears in exactly one element of~$\xc'$?}

The RX3C problem is {\nph}~\cite{DBLP:journals/tcs/Gonzalez85}.
\smallskip

 {\bf{Parameterized complexity.}} A {\it{parameterized problem instance}} is a tuple $(I,\kappa)$, where~$I$ denotes the main part and~$\kappa$
 is a parameter which is often an integer.
 A parameterized problem is {\fpt} if any of its instance $(I, \kappa)$ can be determined in $\bigos{f(\kappa)\cdot |I|^{\bigo{1}}}$ time,
 where~$f$ is a computable function and~$|I|$ is the size of the main part.
  A parameterized problem is {\pnph} if there is a constant~$c$ such that the problem is {\nph} for any parameter greater than~$c$.
 For more detailed introduction to parameterized complexity, we refer to~\cite{DBLP:conf/coco/DowneyF92,DBLP:conf/dagstuhl/DowneyF92}.
%
%
%
\onlyfull{
\subsection{Related Work}
In what follows, general elections are elections where the domain of the votes is not restricted, i.e., each vote can be any permutation over the set of candidates.
\onlyfull{Our paper is clearly
related to the literature where the complexity of CCAV and CCDV for $r$-approval, Condorcet, Maximin and Copeland$^{\alpha}$ in general elections was established.
In particular,}{Bartholdi III}, Tovey, and Trick~\shortcite{Bartholdi92howhard} initialized the study of CCAV and CCDV,
and proved that CCAV and CCDV for Condorcet are {\nph} in general elections.
Later, CCAV and CCDV for Borda, Copeland$^{\alpha}$ and Maximin were all proved to be {\nph} in general
elections~\cite{DBLP:journals/jair/FaliszewskiHHR09,DBLP:journals/jair/FaliszewskiHH11,Nathan07}.
For $r$-approval in general elections, CCAV is {\nph} if $r\geq 4$ and polynomial-time solvable otherwise, and CCDV is {\nph} if $r\geq 3$ and
polynomial-time solvable otherwise~\cite{DBLP:conf/icaart/Lin11}.

\onlyfull{Recently, voting problems in single-peaked elections have been extensively studied. In particular,} Brandt~et~al.~\shortcite{BrandtBHH2015JAIRbypassingsinglepeakelectionAAAI10} a
nd Faliszewski~et~al.~\shortcite{DBLP:journals/iandc/FaliszewskiHHR11} proved that CCAV and CCDV for all aforementioned voting correspondences, except Borda,
become polynomial-time solvable when restricted to single-peaked elections. Very recently, Yang~\shortcite{AAMAS17YangBordaSinlgePeaked} proved that CCAV and CCDV for Borda are {\nph}
even in single-peaked elections.

\onlyfull{
Given the polynomial-time solvability of CCAV and CCDV in single-peaked elections and the {\nphns} of the problems in general elections, researchers studied the complexity of CCAV and CCDV
(and many other voting problems) in nearly single-peaked elections, aiming to pinpoint the complexity borders of the problems with respect to the distance of the given election to
a single-peaked election. In particular,
}
Yang and Guo~\shortcite{DBLP:journals/jcss/YangG17,DBLP:journals/mst/YangG18} studied CCAV and CCDV for $r$-approval, Condorcet, Copeland$^{\alpha}$ and Maximin in $k$-peaked elections.
Generally speaking, an election is a {\it{$k$-peaked election}} if there is an axis~$\lhd$ such that for every vote~$\pi$ there is a $k$-partition of~$\lhd$ such that~$\pi$ restricted to
each part of the partition is single-peaked. We refer to~\cite{DBLP:journals/jcss/YangG17} for further details. Obviously, $k$-CP elections are a special case of $k$-peaked elections.
The work of Yang and Guo~\shortcite{DBLP:journals/jcss/YangG17,DBLP:journals/mst/YangG18} reveals that CCAV and CCDV for $r$-approval, Condorcet, Copeland$^{\alpha}$ and Maximin are {\nph}
even in $k$-peaked elections for~$k$ being a very small constant.
In particular, they derived dichotomy results for CCAV and CCDV for $r$-approval in $k$-peaked elections, with respect to the values of~$k$ and~$r$.
For instance, they developed a polynomial-time algorithm for CCAV for $r$-approval in $2$-peaked elections if~$r$ is a constant,
but proved that the problem turns out to be {\nph} if~$r$ is not bounded by a constant. As $2$-CP elections are $2$-peaked elections,
their results imply that CCAV for $r$-approval in $2$-CP elections are polynomial-time solvable for all constants~$r$.
In addition, they proved that CCDV for $r$-approval is  {\nph} if and only if $r\geq 3$ in $2$-peaked elections.
We strengthen their results by showing that CCDV for $r$-approval remains {\nph} in $k$-CP elections for every $r\geq 3$ and $k\geq 2$.
For Condorcet, Maximin and Copeland$^{\alpha}$, Yang and Guo proved that CCAV is {\nph} in $3$-peaked elections and CCDV is {\nph} in
$4$-peaked elections\footnote{Precisely, they achieved {\wahns} results for these problems with respect to the solution size. {\wah} is a class established in parameterized complexity.
Unless {\fpt}={\wa}, no {\wah} problem admits an {\fpt}-algorithm}.
Later, Yang and Guo~\shortcite{Yangaamas14a} further studied CCAV and CCDV for Condorcet, Copeland$^{\alpha}$ and Maximin in elections with single-peaked width~$k$.
Recall that an election has single-peaked width~$k$ if the candidates can be divided into groups, each being of size at most~$k$,
such that every vote ranks all candidates in each group consecutively and, moreover, considering each group as a single candidate results in a single-peaked election.
In~\cite{Yangaamas14a}, it is proved that CCAV and CCDV for Condorcet is {\fpt} in elections with single-peaked width~$k$ with respect to~$k$.
For Copeland$^1$ and Maximin, CCAV and CCDV are polynomial-time solvable if $k=2$ and {\nph} if $k\geq 3$. For Copeland$^{\alpha}$ where $0\leq \alpha<1$, CCAV and CCDV remain {\nph} when $k=2$.
It is proved by Erd\'{e}lyi, Lackner and Pfandler~\shortcite{Erdelyi2017} that every election with single-peaked width~$k$ is a $k'$-CP election for some $k'\leq k$.
It follows from the results in~\cite{Yangaamas14a} that CCAV and CCDV for Maximin and Copeland$^1$ in $3$-CP elections and
for Copeland$^{\alpha}$, where $0\leq \alpha<1$, in $2$-CP elections are {\nph}.
We strengthen their results by showing that CCAV and CCDV for Copeland$^1$ and Maximin are {\nph} in $k$-CP elections for every $k\geq 2$.

In addition to CCAV and CCDV, many other voting problems have also been studied in numerous nearly single-peaked models.
We refer to~\cite{DBLP:conf/ecai/CornazGS12,DBLP:conf/ijcai/CornazGS13,AAMAS15Yangmanipulationspwidth,DBLP:conf/ijcai/YuCE13} and references therein for further details.
Moreover, voting problems in other restricted elections such as single-crossing elections have also been investigated recently,
see, e.g.,~\cite{DBLP:journals/tcs/SkowronYFE15,DBLP:conf/ecai/MagieraF14}.

}

\section{$\lowercase{r}$-Approval}
In general elections, CCAV and CCDV for $r$-approval are {\nph} even when~$r$ is a constant ($r\geq 4$ for CCAV and $r\geq 3$ for CCDV)~\cite{andrewlinphd2012}.
However, when restricted to single-peaked elections, both problems become polynomial-time solvable (even when~$r$ is not a constant)~\cite{DBLP:journals/iandc/FaliszewskiHHR11}.
%
We complement these results by first showing that CCAV for $r$-approval in $k$-axes elections is {\fpt} with respect to~$k$.
Our {\fpt}-algorithm is based on the following two observations

\begin{observation}
\label{obs-r-approval-candidates-consecutive-single-peaked}
If a vote is single-peaked with respect to an axis~$\lhd$, then all approved candidates in the vote lie consecutively in~$\lhd$.
\end{observation}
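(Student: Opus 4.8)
The plan is to argue by contradiction directly from the definition of single-peakedness stated in the preliminaries. Let $\pi$ be the vote in question, single-peaked with respect to $\lhd$, and let $A$ be the set of its top-$r$ (i.e.\ approved) candidates. Suppose $A$ is \emph{not} consecutive in $\lhd$. Consecutiveness means that for all $a,c\in A$ and every $b$ with $a\;\lhd\;b\;\lhd\;c$ we have $b\in A$; its negation yields three candidates $a,b,c$ with $a\;\lhd\;b\;\lhd\;c$ such that $a,c\in A$ but $b\notin A$. Concretely, if the approved candidates do not form a contiguous block, then listing them in axis order reveals a gap, i.e.\ an approved candidate $a$, a later non-approved candidate $b$, and a still-later approved candidate $c$. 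By relabelling we may always take the flanking approved pair as $a,c$ with $a\;\lhd\;b\;\lhd\;c$, so only the first form of the single-peaked condition is needed.

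The next step is to translate membership in $A$ into ranking positions. Since $r$-approval approves exactly the top-$r$ candidates of a vote, every approved candidate is ranked above every non-approved candidate in $\pi$. In particular, because $a,c\in A$ and $b\notin A$, we have $\pi(a)<\pi(b)$ and $\pi(c)<\pi(b)$. Now I invoke single-peakedness: with $a\;\lhd\;b\;\lhd\;c$ and $\pi(c)<\pi(b)$, the defining implication forces $\pi(b)<\pi(a)$. This contradicts $\pi(a)<\pi(b)$. Hence no such triple $a,b,c$ exists, and therefore $A$ is consecutive in $\lhd$.

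I expect there to be essentially no substantive obstacle here; the argument is a one-shot application of the definition. The only points requiring care are (i) correctly reading off the exact form of the single-peaked condition given in the excerpt, namely that for $a\;\lhd\;b\;\lhd\;c$ the relation $\pi(c)<\pi(b)$ implies $\pi(b)<\pi(a)$ (note that smaller $\pi$-value means ranked higher), and (ii) justifying that failure of consecutiveness genuinely produces an approved--unapproved--approved triple along the axis. Both are routine, so the proof is short.
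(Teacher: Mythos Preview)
Your proof is correct; it is a clean, direct application of the single-peakedness definition given in the preliminaries, and the contradiction via the triple $a\lhd b\lhd c$ with $a,c$ approved and $b$ unapproved is exactly the right mechanism. Note that the paper itself does not supply a proof for this observation---it is stated without argument---so your write-up simply fills in a detail the authors left as evident.
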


\begin{observation}
\label{obs-ccav-yes-instance-solution-approving-distinguished-candidate}
For every {\yesins} of the CCAV problem, any optimal solution consists of only unregistered votes approving the distinguished candidate.
\end{observation}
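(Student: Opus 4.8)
The plan is to prove this by a simple removal argument that exploits the monotonicity of $r$-approval scores under vote deletion. First I would fix an arbitrary yes-instance $((\mathcal{C}, \Pi_{\mathcal{V}}), p, \Pi_{\mathcal{W}}, \ell)$ and let $\Pi_W \subseteq \Pi_{\mathcal{W}}$ be any optimal solution, i.e.\ a feasible solution of minimum cardinality; the goal is then to show that every vote in $\Pi_W$ approves $p$. Toward a contradiction, I would suppose that some vote $\pi \in \Pi_W$ does \emph{not} approve $p$, and compare the two elections $(\mathcal{C}, \Pi_{\mathcal{V}} \cup \Pi_W)$ and $(\mathcal{C}, \Pi_{\mathcal{V}} \cup \Pi_W')$, where $\Pi_W' = \Pi_W \setminus \{\pi\}$ is strictly smaller.

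The first step is to record how the scores change: deleting a single vote from an $r$-approval election lowers by exactly one the score of each of the $r$ candidates that vote approves, and leaves all other scores unchanged. Since $\pi$ does not approve $p$, the score of $p$ is \emph{identical} in both elections, whereas the score of every other candidate $c$ in the smaller election is at most its score in the larger one. The second step combines this with the feasibility of $\Pi_W$: in $(\mathcal{C}, \Pi_{\mathcal{V}} \cup \Pi_W)$ the score of $p$ strictly exceeds that of every $c \neq p$. Chaining the relations, for each $c \neq p$ the score of $p$ in the smaller election equals its score in the larger one, which strictly exceeds $c$'s score in the larger election, which is in turn at least $c$'s score in the smaller election. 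Hence $p$ remains the unique winner of $(\mathcal{C}, \Pi_{\mathcal{V}} \cup \Pi_W')$, so $\Pi_W'$ is a feasible solution with $|\Pi_W'| < |\Pi_W|$, contradicting optimality. This forces every vote in any optimal solution to approve $p$.

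I do not expect a genuine obstacle here, since the claim is a structural simplification rather than a deep fact: the entire argument rests on the observation that adding a vote which avoids $p$ can never help $p$ relative to its competitors. The one point I would state carefully is the direction of the monotonicity---that removing a vote never raises any candidate's score and, crucially, never lowers $p$'s score when the removed vote avoids $p$---after which preservation of the \emph{unique} winner follows verbatim. Together with Observation~\ref{obs-r-approval-candidates-consecutive-single-peaked}, this lets the subsequent $\fpt$-algorithm restrict attention to $p$-approving unregistered votes, which is presumably the reason the observation is isolated here.
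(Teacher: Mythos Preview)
Your argument is correct and is exactly the natural exchange argument one would expect; the paper itself states Observation~\ref{obs-ccav-yes-instance-solution-approving-distinguished-candidate} without proof, so there is nothing to compare against beyond noting that your write-up supplies the obvious justification the paper omits.
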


The above observations suggest that to solve an instance, we need only to focus on a limited number of candidates---the candidates at most~``$r$ far away''
from the distinguished candidate~$p$ in the $k$-axes of the given instance.

\begin{theorem}
\label{thm_r_approval_k_axis_fpt}
CCAV for $r$-approval in $k$-axes elections is {\fpt} with respect to the combined parameter~$k+r$.
\end{theorem}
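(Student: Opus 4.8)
The plan is to reduce a given instance to an integer linear program whose number of variables is bounded by a function of $k+r$, and then invoke Lenstra's algorithm for integer programming in bounded dimension. First, by Observation~\ref{obs-ccav-yes-instance-solution-approving-distinguished-candidate}, it suffices to look for feasible solutions consisting only of unregistered votes that approve~$p$; so I would begin by discarding every vote in~$\Pi_{\mathcal{W}}$ that does not approve~$p$. The point of this step is that adding a vote approving~$p$ raises the score of~$p$ by one and raises the score of each of its other $r-1$ approved candidates by one, so the scores of~$p$ and of these few other candidates are the only ones that can change.

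Next I would exploit the $k$-axes structure to bound the variety of useful added votes. Fix axes $\lhd_1,\dots,\lhd_k$ witnessing that the election is $k$-axes single-peaked, and let~$\pi$ be any surviving unregistered vote, single-peaked with respect to some~$\lhd_i$. By Observation~\ref{obs-r-approval-candidates-consecutive-single-peaked}, the $r$ candidates approved by~$\pi$ form an interval of $r$ consecutive candidates of~$\lhd_i$, and since~$\pi$ approves~$p$ this interval contains~$p$. Hence the approved set of~$\pi$ is one of the at most~$r$ size-$r$ intervals of~$\lhd_i$ containing~$p$ (namely, those in which $p$ occupies one of the $r$ positions). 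Taking the union over the $k$ axes, the approved set of every relevant vote lies in a family $\mathcal{T}=\{T_1,\dots,T_t\}$ with $t\le kr$, and every candidate other than~$p$ that appears in some~$T_j$ belongs to the set~$R$ of candidates at distance at most $r-1$ from~$p$ along some axis, where $|R|\le k(2r-1)$.

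Then I would set up the ILP. Group the surviving votes by their approved set, let $n_j$ be the number available with approved set~$T_j$, and introduce variables $x_j$ with $0\le x_j\le n_j$ recording how many such votes are added. After adding, $\mathrm{score}(p)=\mathrm{score}_{\mathcal V}(p)+\sum_j x_j$, each $c\in R\setminus\{p\}$ has $\mathrm{score}(c)=\mathrm{score}_{\mathcal V}(c)+\sum_{j:\,c\in T_j}x_j$, and every candidate outside~$R$ keeps its original score. Requiring that~$p$ be the unique winner within budget~$\ell$ is then equivalent to the linear system $\sum_j x_j\le \ell$; $\sum_{j:\,c\notin T_j}x_j> \mathrm{score}_{\mathcal V}(c)-\mathrm{score}_{\mathcal V}(p)$ for each $c\in R\setminus\{p\}$; and the single inequality $\sum_j x_j> \max_{c\notin R}\bigl(\mathrm{score}_{\mathcal V}(c)-\mathrm{score}_{\mathcal V}(p)\bigr)$ handling all candidates with fixed score at once. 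This is a system of $O(kr)$ inequalities in $t\le kr$ integer variables, and the original instance is a yes-instance precisely when it is feasible.

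Since the number of variables is bounded by a function of $k+r$, feasibility is decidable in \fpt\ time by Lenstra's algorithm, which yields the theorem. The main obstacle, and the only place where the $k$-axes hypothesis is genuinely used, is the second step: bounding the number of distinct approved sets by $O(kr)$. This rests on combining both observations to conclude that every useful added vote approves an interval around~$p$ in one of the $k$ axes, so that despite an arbitrarily large candidate set only $O(kr)$ candidates and $O(kr)$ vote types matter. Everything afterward is a routine encoding into an integer program of bounded dimension, with the strict inequalities replaced by the equivalent integer-rounded ones.
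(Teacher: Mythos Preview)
Your proposal is correct and follows essentially the same approach as the paper: restrict to unregistered votes approving~$p$, use the two observations to bound the set of candidates whose score can change, encode the problem as an integer linear program with a number of variables bounded by a function of~$k+r$, and apply Lenstra's algorithm. The only cosmetic difference is that you index variables by the distinct approved sets (at most~$kr$ of them), whereas the paper indexes by arbitrary subsets of the bounded set~$B$ of nearby candidates (up to $2^{2k(r-1)}$ variables); your bookkeeping is tighter, but the underlying idea is the same.
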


\begin{proof}
Let~$\mathcal{C}$,~$\Pi_{\mathcal{V}}$,~$\Pi_{\mathcal{W}}$,~$\discandi \in \mathcal{C}$,~$\ell$ be the components of the input of a CCAV instance as in the definition\onlyfull{ of the problem},
where $(\mathcal{C}, \Pi_{\mathcal{V}}\cup \Pi_{\mathcal{W}})$ is a $k$-axes election.
For each $c\in \mathcal{C}$, let~${\sf{sc}}(c)$ be the score of~$c$ with respect to~$\Pi_{\mathcal{V}}$, i.e.,~${\sf{sc}}(c)$ is the number of votes in~$\Pi_{\mathcal{V}}$ approving~$c$.
Let~$\Pi_{\discandi}$ be the multiset of all votes $\pi\in \Pi_{\mathcal{W}}$ such that $\pi(\discandi)\leq r$.
For each vote $\pi\in \Pi_{\discandi}$, let~$C(\pi)$ be the set of candidates ranked in the top-$r$ positions, i.e., $C(\pi)=\{c\in \mathcal{C} \mid \pi(c)\leq r\}$.
Moreover, let $B=\bigcup_{\pi\in \Pi_{\discandi}}C(\pi)\setminus \{\discandi\}$.
Due to Observation~\ref{obs-ccav-yes-instance-solution-approving-distinguished-candidate}, any optimal solution consists of only votes from~$\Pi_{\discandi}$.
Moreover, adding a vote in~$\Pi_{\discandi}$ never prevents~${\discandi}$ from winning.
Hence, if the given instance is a {\yesins}, there must be a feasible solution consisting of exactly $\min \{|\Pi_{\discandi}|, \ell\}$ votes.
We reset $\ell:=\min \{|\Pi_{\discandi}|, \ell\}$, and seek a feasible solution with~$\ell$ votes in~$\Pi_{\discandi}$.
Obviously, the final score of~${\discandi}$ is~${\sf{sc}}(\discandi)+\ell$.
If there is a candidate $c\in \mathcal{C}$ such that ${\sf{sc}}(c)\geq {\sf{sc}}(\discandi)+\ell$, the given instance must be a {\noins}.
Assume that this is not the case.
The question is then whether there are~$\ell$ votes in~$\Pi_{\discandi}$ such that for every $c\in B$ at most ${\sf{sc}}(\discandi)+\ell-{\sf{sc}}(c)-1$ of the votes approve~$c$.
This can be solved in {\fpt} time with respect to~$|B|$. To this end, we give an integer linear programming (ILP) formulation with the number of variables being bounded by a function of~$|B|$.
We call a vote $\pi\in \Pi_p$ a $\beta$-vote if $\beta=C(\pi)\setminus \{p\}$.
First, we create for each subset $\beta\subseteq B$ an integer variable~$x_{\beta}$ which indicates the number of $\beta$-votes that are included in the solution.
The restrictions are as follows. Let~$n_{\beta}$ be the number of $\beta$-votes in in~$\Pi_p$.
First, for each variable~$x_{\beta}$, we require that $0\leq x_{\beta}\leq n_{\beta}$.
Second, the sum of all variables should be~$\ell$, i.e., $\sum_{\beta\subseteq B} x_{\beta}=\ell$.
Third, for each $c\in B$, it must be that ${\sf{sc}}(c)+\sum_{c\in \beta}x_{\beta}\leq {\sf{sc}}(\discandi)+\ell-1$.
By the result of Lenstra~\shortcite{lenstra83}, this ILP can be solved in {\fpt} time with respect to~$|B|$.
Due to Observation~\ref{obs-r-approval-candidates-consecutive-single-peaked},~$B$ contains at most $k\cdot 2(r-1)$ candidates. The theorem follows.
\end{proof}

Note that the {\fpt}-algorithm in the proof of Theorem~\ref{thm_r_approval_k_axis_fpt} does not need any $k$-axes of the given election.
What important is that when the given election is a $k$-axes single-peaked, the cardinality of the set~$B$ is bounded from above by $k\cdot 2(r-1)$.
The framework in the proof does not apply to CCDV for $r$-approval in $k$-axes elections.
The reason is that any optimal solution of CCDV consists of only votes disapproving the distinguished candidate~${\discandi}$.
Hence, we cannot only confine ourselves to a limited number of candidates.

Now we consider $k$-CP elections. Yang and Guo~\shortcite{DBLP:journals/jcss/YangG17} developed a polynomial-time algorithm for CCAV for $r$-approval in $2$-peaked elections.
As $2$-CP elections are a special case of $2$-peaked elections, their polynomial-time algorithm directly applies to CCAV for $r$-approval in $2$-CP elections.
However, if~$k$ increases just by one, we show that the problem becomes {\nph} even for $r=4$.
{Yang and Guo~\shortcite{DBLP:journals/jcss/YangG17} also proved that CCAV for $r$-approval in $3$-peaked elections is {\nph} for every $r\geq 4$.
Their proof is via a reduction from the {\prob{Independent Set on Graphs of Maximum Degree $3$}} problem and, more importantly, the election  constructed in their proof is not a $3$-CP election.}
{We use a completely different reduction to show our result. Particularly, our reduction is from the RX3C problem.
The following lemma is easy to see.

\begin{lemma}
\label{lem-single-peaked-consecutive}
Let~$\lhd$ be a linear order over~$\mathcal{C}$ and let $C\subseteq \mathcal{C}$  be a subset of candidates that are consecutive in~$\lhd$.
Then we can construct a linear order~$\pi$ over~$\mathcal{C}$ such that all candidates in~$C$ are ranked above all candidates not in~$C$,
and~$\pi$ is single-peaked with respect to~$\lhd$.
\end{lemma}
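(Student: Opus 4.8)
The plan is to reduce single-peakedness to a clean combinatorial property of the ranking and then exhibit $\pi$ explicitly. First I would record the standard characterization: a linear order $\pi$ over $\mathcal{C}$ is single-peaked with respect to $\lhd$ if and only if for every $t$ the set of its top-$t$ candidates, $T_t=\{c\in\mathcal{C}\setmid \pi(c)\leq t\}$, is consecutive in $\lhd$. This is exactly the kind of statement that is ``easy to see''. Assuming single-peakedness, if some $T_t$ failed to be consecutive there would be $a,c\in T_t$ and $b\notin T_t$ with $a\lhd b\lhd c$; then $\pi(c)\leq t<\pi(b)$ together with the defining implication would force $\pi(b)<\pi(a)\leq t$, contradicting $b\notin T_t$. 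Conversely, if all $T_t$ are consecutive, then for any $a\lhd b\lhd c$ with $\pi(c)<\pi(b)$ the set $T_{\pi(b)-1}$ contains $c$ but not $b$, so being consecutive it cannot contain $a$ either, giving $\pi(a)\geq\pi(b)$ and hence $\pi(b)<\pi(a)$ (the case $c\lhd b\lhd a$ is symmetric).

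With this in hand the construction is direct. Write $\lhd$ as $v_1\lhd v_2\lhd\cdots\lhd v_m$, and since $C$ is consecutive let $C=\{v_i,v_{i+1},\dots,v_j\}$. I would define $\pi$ to rank the candidates, from most to least preferred, as
\[
 v_i,\ v_{i+1},\ \dots,\ v_j,\quad v_{j+1},\ v_{j+2},\ \dots,\ v_m,\quad v_{i-1},\ v_{i-2},\ \dots,\ v_1,
\]
that is, first the block $C$ in axis order, then the part of $\mathcal{C}$ lying to the right of $C$ in increasing axis order, and finally the part lying to the left of $C$ in decreasing axis order (either of the latter two runs may be empty when $C$ touches an end of $\lhd$, which causes no difficulty).

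To finish I would simply check that every top-$t$ set of this $\pi$ is consecutive in $\lhd$: the successive prefixes are $\{v_i\},\{v_i,v_{i+1}\},\dots,\{v_i,\dots,v_j\}$ while listing $C$, then $\{v_i,\dots,v_{j+1}\},\dots,\{v_i,\dots,v_m\}$ while extending rightward, then $\{v_{i-1},\dots,v_m\},\dots,\{v_1,\dots,v_m\}$ while extending leftward, each obtained from the previous by adjoining an element adjacent to the current block and hence consecutive in $\lhd$. By the characterization, $\pi$ is single-peaked with respect to $\lhd$; and since $T_{|C|}=C$, all candidates of $C$ are ranked above all candidates not in $C$, as required. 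The only place any care is needed is the characterization itself, and the construction crucially uses that $C$ is consecutive: this is precisely what guarantees that $\mathcal{C}\setminus C$ splits into a left run and a right run of $\lhd$, each of which can be appended one element at a time while keeping every prefix consecutive. There is no real obstacle beyond this routine bookkeeping.
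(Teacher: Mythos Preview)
Your argument is correct: the prefix characterization of single-peakedness is stated and proved cleanly, and your explicit construction places $C$ on top while keeping every top-$t$ set an interval of $\lhd$. The paper does not actually give a proof of this lemma---it is introduced with ``The following lemma is easy to see'' and left without argument---so there is nothing to compare your approach against; you have simply supplied the omitted details, and the construction you chose (peak at the left end $v_i$ of $C$, then sweep right to $v_m$, then left to $v_1$) is a perfectly natural one.
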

}

\begin{theorem}
\label{thm_ccav_r_approval_3_cp_nph}
CCAV for $r$-approval in $3$-CP elections is {\nph} for every $r\geq 4$.
\end{theorem}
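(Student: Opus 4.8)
The plan is to reduce from RX3C. Given an RX3C instance with universe $\xs=\{\xse_1,\dots,\xse_{3\xsize}\}$ and collection $\xc=\{\xce_1,\dots,\xce_{3\xsize}\}$ of $3$-subsets (each element in exactly three sets), I would build a CCAV instance for $4$-approval whose underlying election is a $3$-CP election. The distinguished candidate~$\discandi$ should be made winnable by adding exactly~$\xsize$ votes precisely when there is an exact cover $\xc'\subseteq \xc$ of size~$\xsize$. The natural encoding is: create one candidate~$u_i$ for each element $\xse_i\in\xs$, and design one unregistered vote~$\pi_{\xce_j}$ per set $\xce_j=\{\xse_{a},\xse_{b},\xse_{c}\}$ that approves exactly~$\{\discandi,u_a,u_b,u_c\}$ (this is why I need $r=4$: three element-candidates plus~$\discandi$). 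Adding~$\xsize$ such votes raises~$\discandi$'s score by~$\xsize$ and distributes one approval to each element-candidate it covers; an exact cover corresponds exactly to hitting every $u_i$ exactly once, while any non-cover of size~$\xsize$ must double-approve some~$u_i$, pushing that candidate's score too high relative to~$\discandi$. I would set the registered votes and score thresholds (via padding candidates ranked below everything, or auxiliary registered votes) so that the budget $\ell=\xsize$ forces~$\discandi$'s final score to be exactly one more than every~$u_i$'s achievable score, making uniqueness hinge on no~$u_i$ being approved twice.

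First I would fix the scores. Using padding/dummy registered votes, arrange that each element-candidate~$u_i$ already has score $\mathsf{sc}(u_i)=\xsize-2$ from $\Pi_{\mathcal V}$ while $\mathsf{sc}(\discandi)=0$, and every other candidate has strictly smaller score; then after adding~$\xsize$ cover-votes, $\discandi$ reaches~$\xsize$, each~$u_i$ reaches at most $(\xsize-2)+(\text{times covered})$, and $\discandi$ wins uniquely iff every~$u_i$ is covered at most once, i.e. exactly once given that $3\xsize$ element-approvals are spread over $3\xsize$ slots with each vote contributing three. Since each chosen set contributes three distinct element-approvals and $\xsize$ sets give $3\xsize$ element-approvals total over $3\xsize$ candidates, ``each covered at most once'' is equivalent to ``each covered exactly once,'' which is precisely an exact cover. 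I must double-check the exact numerical thresholds so that covering some~$u_i$ twice gives it score $\geq\xsize$, tying or beating~$\discandi$ and destroying unique victory; the constant offsets in the dummy votes will be tuned for this.

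The main obstacle, and the whole point of restricting to $3$-CP, is arranging the candidate set into a $3$-partition $(C_1,C_2,C_3)$ so that every vote I construct is single-peaked on each block. Here I would invoke Lemma~\ref{lem-single-peaked-consecutive}: if within each block the approved (top-ranked-within-block) candidates occupy a consecutive interval of that block's axis, the vote can be completed single-peaked on that block. The delicate part is that a single unregistered vote~$\pi_{\xce_j}$ approves three element-candidates $u_a,u_b,u_c$ coming from an arbitrary triple, and these need not be consecutive under any one fixed axis. This is exactly where the three blocks buy me freedom: I would distribute the~$3\xsize$ element-candidates and the dummy/padding candidates across $C_1,C_2,C_3$ and choose three axes so that, for each set~$\xce_j$, its three element-candidates and~$\discandi$ are split across the blocks in a way that leaves each block's approved candidates consecutive. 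A clean way is to place~$\discandi$ together with a reservoir of dummies in one block and to engineer the element-candidates' block assignment and per-block orders using a combinatorial design on~$\xc$; I expect the bulk of the technical work to be verifying that three blocks suffice to make every $\pi_{\xce_j}$ single-peaked on each~$C_i$ while keeping the score bookkeeping intact, and I anticipate this is where the argument from the earlier $3$-regular-bipartite-graph ordering property (mentioned in the abstract) or a direct explicit construction will do the heavy lifting.

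Once the $3$-CP structure and the score arithmetic are both pinned down, correctness follows in the two standard directions: an exact cover yields~$\xsize$ votes making~$\discandi$ the unique $4$-approval winner, and conversely any feasible solution of size~$\leq\xsize$ must, by the threshold argument of the previous paragraph, use exactly~$\xsize$ cover-votes that avoid double-covering any element, hence give an exact cover. Since RX3C is {\nph}, this establishes {\nphns} of CCAV for $4$-approval in $3$-CP elections, and the extension to every $r\geq 4$ follows by padding each vote with $r-4$ additional approved filler candidates placed consecutively in a dedicated block so that single-peakedness and the score separation are preserved.
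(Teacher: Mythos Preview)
Your reduction idea---one candidate $u_i$ per element, one unregistered vote per set approving $\{p,u_a,u_b,u_c\}$, budget $\ell=\kappa$---is the natural first attempt, and your score arithmetic is fine. The gap is exactly where you flag it: you have no construction showing that this simple encoding can be realized as a $3$-CP election, and in fact it cannot in general. With one candidate per element, a single vote approves three element-candidates coming from an \emph{arbitrary} triple; for the vote to be single-peaked on each block you need those of its approved candidates that land in the same block to be consecutive on that block's axis. Splitting elements over blocks so that every set becomes ``one per block'' is a rainbow $3$-coloring of a $3$-uniform $3$-regular hypergraph, which need not exist; and if two or three elements of a set share a block, the consecutiveness constraints across all sets quickly become unsatisfiable. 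Your appeal to Lemma~\ref{lem-3-regular-bipartite-decomposable} is misplaced: that lemma concerns two linear orders covering all edges of a $3$-regular bipartite graph and is used in the paper for the $2$-axes/$2$-CP CCDV result (Theorem~\ref{thm_ccdv_r_approval_2_cp_np_hard}), not here.

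The paper avoids this obstacle by abandoning the one-candidate-per-element encoding. For each set $s=\{c_x,c_y,c_z\}$ it creates \emph{set-local copies} $c_x(s),c_y(s),c_z(s)$ (plus a filler $s'$); these four candidates are created together and placed consecutively on the $C_2$-axis, so the vote $\pi_s$ approving $\{c_x(s),c_y(s),c_z(s),p\}$ is trivially single-peaked on each block. Separately, for each $c_x\in U$ there are element-candidates $c_x^1,\dots,c_x^4$ sitting consecutively on the $C_1$-axis. The coupling between a set and its elements is done not by $\pi_s$ but by three auxiliary unregistered votes $\pi_s^x,\pi_s^y,\pi_s^z$, where $\pi_s^x$ approves $\{c_x(s),c_x^1,c_x^2,p\}$: one candidate in $C_2$, two adjacent candidates in $C_1$, and $p\in C_3$, so again consecutiveness in each block is automatic. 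The budget becomes $\ell=5\kappa$ and the combinatorics is inverted relative to your plan: one adds $\pi_s$ for the $2\kappa$ sets \emph{not} in the cover and the three auxiliary votes for the $\kappa$ sets \emph{in} the cover. The key missing idea in your proposal is precisely this per-set duplication of element-candidates, which is what makes the $3$-CP structure attainable without any hypergraph coloring assumption.
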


{
\begin{proof}
Let $(\xs=\{\xse_1,\dots,\xse_{3\xsize}\}, \xc=\{\xce_1,\dots,\xce_{3\xsize}\})$ be an instance of RX3C\@.
We create a CCAV instance with the following components. Consider first $r=4$.

{\bf{Candidates~$\mathcal{C}$.}} We create in total $24\kappa+5$ candidates.
In particular, for each $\xse_x\in \xs$, we create a set $C(\xse_x)=\{\xse_x^1, \xse_x^2, \xse_x^3, \xse_x^4\}$ of four candidates. Let
\[C_1=\bigcup_{\xse_x\in \xs} C(\xse_x)\] be the set of all these candidates.
Hence, $|C_1|=4\cdot 3\xsize=12\kappa$.
In addition, for each $s=\{\xse_x,\xse_y,\xse_z\}\in \xc$, we first create three candidates $\xse_x(s)$,~$\xse_y(s)$,
and~$\xse_z(s)$ corresponding to~$\xse_x$,~$\xse_y$, and~$\xse_z$, respectively, then we create one candidate~$s'$.
Let~$C_2$ be the set of all these candidates corresponding to all $\xce\in \xc$. Hence, $|C_2|=4\cdot 3\xsize=12\kappa$.
Finally, we create a set~$C_3$ of five candidates denoted by~$\discandi$,~$q_1$,~$q_2$,~$q_3$, and~$q_4$, respectively.
The distinguished candidate is~${\discandi}$.

We now construct the registered and unregistered votes.
For each vote to be created below, we only first specify the approved candidates in the vote, and then we discuss the $2$-axes and use
Lemma~\ref{lem-single-peaked-consecutive} to specify the linear preference of the vote.

{\bf{Registered Votes~$\Pi_{\mathcal{V}}$.}}
First, we create~$5\kappa-1$ votes approving~$q_1$,~$q_2$,~$q_3$, and~$q_4$. Then, for each $s=\{\xse_x, \xse_y, \xse_z\}\in \xc$,
we create $5\kappa-2$ votes approving~$\xse_x(s)$,~$\xse_y(s)$,~$\xse_z(s)$, and~$s'$.
Finally, for each $\xse_x\in \xs$, there are $5\kappa-2$ votes approving~$\xse_x^1$,~$\xse_x^2$,~$\xse_x^3$, and~$\xse_x^4$.

{\bf{Unregistered Votes~$\Pi_{\mathcal{W}}$.}} For each $\xce=\{\xse_x, \xse_y, \xse_z\}\in \xc$, we create four votes as follows:
\begin{itemize}
\item $\pi_{\xce}$ approving~$\xse_x(s)$,~$\xse_y(s)$,~$\xse_z(s)$,~$\discandi$;
\item $\pi_{\xce}^x$ approving~$\xse_x(s)$,~$\xse_x^1$,~$\xse_x^2$,~$\discandi$;
\item $\pi_{\xce}^y$ approving~$\xse_y(s)$,~$\xse_y^1$,~$\xse_y^2$,~$\discandi$; and
\item $\pi_{\xce}^z$ approving~$\xse_z(s)$,~$\xse_z^1$,~$\xse_z^2$,~$\discandi$.
\end{itemize}

Finally, we set $\ell=5\kappa$, i.e., we\onlyfull{ are allowed to} add at most~$5\kappa$ votes.

{The above construction clearly takes polynomial time.} Now we discuss the preferences of the above votes. Let
\[\lhd_1=\left(\xse_1^1,\xse_1^2,\xse_1^3,\xse_1^4,\dots,\xse_{3\xsize}^1, \xse_{3\xsize}^2, \xse_{3\xsize}^3, \xse_{3\xsize}^4\right).\]
Let~$\lhd_2$ be an order of~$C_2$ such that for every~$\xce_i$, $i\in [3\xsize-1]$, all candidates created for~$\xce_i$ are ordered before all candidates created for~$\xce_{i+1}$.
The relative order over the candidates created for each $\xce_i=\{\xse_x, \xse_y, \xse_z\}$ can be any liner order 
such that the candidates~$\xse_x(\xce_i)$,~$\xse_y(\xce_i)$, and~$\xse_z(\xce_i)$ are ranked together. 
Finally, let $\lhd_3=\left(p, q_1, q_2, q_3, q_4\right)$. Clearly, for each $i\in \{1,2,3\}$, the approved candidates restricted to~$C_i$ in each vote lie consecutively on~$\lhd_i$.
Then, due to Lemma~\ref{lem-single-peaked-consecutive} we can specify the preferences of the votes in a way so that $(C_i, \Pi_{\mathcal{V}}^{C_i}\cup \Pi_{\mathcal{W}}^{C_i})$
is single-peaked for each $i\in \{1, 2, 3\}$.

It remains to prove the correctness.

$(\Rightarrow)$ Assume that $S'\subseteq S$ is an exact $3$-set cover of~$\xs$. Consider the election after adding the following~$5\kappa$ votes:
\begin{itemize}
\item All~$2\kappa$ votes~$\pi_{\xce}$ such that $s\not\in \xc'$;
\item For each $s=\{\xse_x, \xse_y, \xse_z\}\in \xc'$, all three votes~$\pi_{\xce}^x$,~$\pi_{\xce}^y$, and~$\pi_{\xce}^z$.
\end{itemize}
As~$S'$ is an exact $3$-set cover, for each $c\in C_2$ at most one of the above added votes approves~$c$.
As a result, each candidate $c\in C_2$ has final score at most $5\kappa-2+1=5\kappa-1$.
Moreover, for each~$\xse_x^i$, $i\in \{1,2\}$ at most one of the above added votes, corresponding to $\xce\in \xc'$ such that $\xse_x\in s$, approves~$\xse_x^i$.
As a result, each candidate in~$C_1$ has final score at most $5\kappa-2+1=5\kappa-1$ too.
As all unregistered votes approve~${\discandi}$ but none of $\{q_1, q_2, q_3, q_4\}$, the final score of~${\discandi}$ is~$5\kappa$ and
the final score of each~$q_1$,~$q_2$,~$q_3$,~$q_4$ is~$5\kappa-1$. In summary,~${\discandi}$ becomes the unique winner in the final election.

$(\Leftarrow)$ Assume that $\Pi_{W}\subseteq \Pi_{\mathcal{W}}$ such that $|\Pi_W|\leq \ell=5\kappa$ and~${\discandi}$ becomes the unique winner after adding all votes in~$\Pi_W$.
As all registered votes disapprove~${\discandi}$ and there are~$5\kappa-1$ registered votes approving~$q_1$, it must be that $|\Pi_W|=5\kappa$.
As each unregistered vote approves~${\discandi}$, the final score of~${\discandi}$ is~$5\kappa$.
Let $\Pi_S=\{\pi_{\xce} \mid \xce\in \xc\}$ and~$\Pi'$ be the multiset of the remaining unregistered votes.
Then,~$\Pi_W$ contains exactly~$2\kappa$ votes in~$\Pi_S$.
The reason is as follows. If~$\Pi_W$ contains less than~$2\kappa$ votes in~$\Pi_S$, then~$\Pi_W$ must contain at least~$3\xsize+1$ votes in~$\Pi'$.
This implies that there are two votes in $\Pi_W\cap \Pi'$ which approve a common candidate $\xse_x^1\in C_1$ for some $\xse_x\in \xs$,
leading~$\xse_x^1$ to have a final score at least $5\kappa-2+2=5\kappa$. This contradicts that~${\discandi}$ is the unique winner.
Moreover, if~$\Pi_W$ contains some vote $\pi_{\xce}\in \Pi_{\xc}$ where $\xce=\{\xse_x, \xse_y, \xse_z\}$,
then none of $\pi_{\xce}^x$,~$\pi_{\xce}^y$,~$\pi_{\xce}^z$ can be included in~$\Pi_W$,
since otherwise due to the construction of the votes, one of~$\xse_x(s)$,~$\xse_y(s)$, and~$\xse_z(s)$ would have a final score at least~$5\kappa$, contradicting that~${\discandi}$
is the unique winner. Hence, if~$\Pi_W$ contains~$t$ votes in~$\Pi_S$, then $|\Pi_W|\leq t+3(3\xsize-t)=9\kappa-2t$. Hence, $t> 2\kappa$ implies $|\Pi_W|< 5\kappa$, a contradiction.
Therefore,~$\Pi_W$ contains exactly~$2\kappa$ votes in~$\Pi_S$. Let $S'=\{s\in \xc \mid \pi_{\xce}\not\in \Pi_W\}$. Due to the above analysis, it holds that $|S'|=3\xsize-2\kappa=\kappa$.
Moreover, for each~$\pi_{\xce}$ where $s=\{\xse_x, \xse_y, \xse_z\}\in \xc'$,
all three votes~$\pi_{\xce}^x$,~$\pi_{\xce}^y$,~$\pi_{\xce}^z$ are in~$\Pi_W$ (otherwise~$\Pi_W$ contains less than~$5\kappa$ votes).
From the fact that each candidate in $C_1\cup C_2$ can be approved by at most one vote in~$\Pi_W$, it follows that~$S'$ is an exact $3$-set cover.

The {\nphns} of CCAV for $r$-approval for every $r\geq 5$ can be obtained from the above reduction by adding some dummy candidates. 
Precisely, for each ${\xse}_x\in \xs$, we make $r-4$ copies of ${\xse}_x^2$ so that they consecutively lie between ${\xse}_x^1$ and ${\xse}_x^2$ in $\lhd_1$.  
For each $\xce\in \xc$, we make $r-4$ copies of~$s'$ and let them lie consecutively with~$s'$ in $\lhd_2$. 
In addition, we create a set~$P$ of $r-4$ candidates which consecutively lie on the left side of~$p$ in~$\lhd_3$.  
Finally, we create a set~$Q$ of~$r-4$ candidates which consecutively lie on the right side of $q_4$ in~$\lhd_3$.
Each vote approves the same candidates as defined above together with $r-4$ certain dummy candidates, who do not have any chance to become a winner by adding at most~$\ell$ votes. 
Precisely, we have the following registered votes.  

\begin{itemize}
\item $5\xsize$ votes approving $\{q_1, q_2, q_3, q_4\}\cup Q$.

\item For each $s=\{\xse_x, \xse_y,\xse_z\}\in \xc$, $5\xsize-2$ votes approving $c_x(s)$, $c_y(s)$, $c_z(s)$, $s'$, and the $r-4$ copies of $s'$.

\item For each $\xse_x\in \xs$, $5\xsize-2$ votes approving $c_x^1$, ${\xse}_x^2$, $\xse_x^3$, $\xse_x^4$, and the $r-4$ copies of $\xse_x^2$.
\end{itemize}

Regarding the unregistered votes, for each $\xce=\{\xse_x, \xse_y, \xse_z\}\in \xc$, we create four votes respectively approving 
\begin{itemize}
\item $\xse_x(s)$, $\xse_y(s)$, $\xse_z(s)$, $p$, and all candidates in~$P$.
\item $\xse_x(s)$, $\xse_x^1$, $\xse_x^2$, and all the $r-4$ copies of $\xse_x^2$.
\item $\xse_y(s)$, $\xse_y^1$, $\xse_z^2$, and all the $r-4$ copies of $\xse_y^2$.
\item $\xse_z(s)$, $\xse_y^1$, $\xse_z^2$, and all the $r-4$ copies of $\xse_z^2$.
\end{itemize}  
The correctness proof is similar.
\onlyfull{To prove the nonunique-winner model, we add one more registered vote of each type.}
\end{proof}
}

Now we turn our attention to CCDV.  
Yang and Guo~\shortcite{Yangaamas14a} proved that CCDV for $r$-approval in $2$-peaked elections is {\nph} even for $r=3$.
We strengthen their result by showing that the problem remains {\nph} even when restricted to elections that are both $2$-axes single-peaked and $2$-CP single-peaked. 
Our reduction is completely different from theirs.
In fact, to establish our result, we resort to a property of $3$-regular bipartite graphs which has not been used in the proof of Yang and Guo~\shortcite{Yangaamas14a}.
The $3$-regular bipartite graph in our reduction comes from the graph-representation of the RX3C problem.
In general, this property says that for every $3$-regular bipartite graph there are two linear orders over the vertices so that
every edge of the graph is between two consecutive vertices in at least one of the two orders.
We believe that this property is of independent interest. Recall that $3$-regular graphs are those whose vertices are all of degree~$3$.

\begin{lemma}
\label{lem-3-regular-bipartite-decomposable}
Let~$G$ be a $3$-regular bipartite graph with vertex set~$\vset{G}$ and edge set~$\eset{G}$.
Then, there are two linear orders~$\lhd_1$ and~$\lhd_2$ over~$\vset{G}$ and a partition $(A_1, A_2)$ of $\eset{G}$ such that for every $i\in \{1,2\} $ and for every edge $\edge{u}{v}\in A_i$,
it holds that~$u$ and~$v$ are consecutive in~$\lhd_i$. 
\end{lemma}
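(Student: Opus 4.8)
The plan is to reformulate the statement as an edge‑partition problem and then solve it greedily from a proper edge‑colouring. First I would observe that a set of edges $A\subseteq\eset{G}$ can be made \emph{consecutive} in a single linear order of $\vset{G}$ if and only if $A$ is a \emph{linear forest}, i.e.\ every vertex meets at most two edges of $A$ and $A$ contains no cycle (equivalently, $(\vset{G},A)$ is a vertex‑disjoint union of paths). Indeed, the consecutive pairs of any linear order form a Hamiltonian path, so a realizable set must induce disjoint subpaths; conversely, given a linear forest one lists the vertices of each path consecutively and appends the remaining vertices, obtaining an order in which each edge of $A$ joins two consecutive vertices (the same construction as in Lemma~\ref{lem-single-peaked-consecutive}). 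It therefore suffices to partition $\eset{G}$ into two linear forests $A_1,A_2$; the orders $\lhd_1,\lhd_2$ are then read off as above.

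Next, since $G$ is cubic and bipartite, by K\"onig's edge‑colouring theorem its edges split into three perfect matchings $M_1,M_2,M_3$. Put $H=M_1\cup M_2$. Being the union of two disjoint perfect matchings of a simple bipartite graph, $H$ is a vertex‑disjoint union of even cycles $C_1,\dots,C_t$, each of length at least four. I would set $A_1:=H\setminus R$ and $A_2:=M_3\cup R$, where $R$ contains exactly one edge of each cycle $C_i$, to be chosen below. Whatever the choice, $A_1$ is automatically a linear forest: deleting one edge turns each $C_i$ into a path, and the cycles are vertex‑disjoint. The entire difficulty is to choose $R$ so that $A_2=M_3\cup R$ is a linear forest as well.

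I would build $R$ greedily, processing $C_1,\dots,C_t$ in any order and maintaining the invariant that the current graph $F:=M_3\cup R$ is a linear forest. The key point is that at the moment $C_i$ is processed, none of its vertices has yet been touched by an edge of $R$ (the previously removed edges lie inside earlier, vertex‑disjoint cycles); hence every vertex of $C_i$ has degree exactly one in $F$—namely its $M_3$‑edge—and so is an endpoint of its path in $F$. Since a path has only two endpoints, any single path of $F$ contains at most two vertices of $C_i$; as $C_i$ has at least four vertices, its vertices are spread over at least two distinct paths of $F$. Consequently, going around $C_i$, some two cyclically consecutive vertices $u,v$ lie in different paths (otherwise all vertices would lie in one), and I add $\edge{u}{v}$ to $R$. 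This raises the degrees of $u,v$ to two and merges two distinct paths, so $F$ stays a linear forest. When all cycles are processed, $A_2=F$ and $A_1=H\setminus R$ are both linear forests and $(A_1,A_2)$ partitions $\eset{G}$; realizing the two linear forests as linear orders as in the first paragraph finishes the proof.

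The step I expect to be the crux is guaranteeing that the side $M_3\cup R$ that accumulates the removed edges never becomes cyclic. The degree bound is free, since $M_3$ and $R$ are each matchings, so the real content is acyclicity; it is resolved by the leaf/endpoint observation above, which crucially exploits that the cycles of $H$ are vertex‑disjoint and of length at least four, so their vertices cannot all be squeezed onto a single path of the partially built forest.
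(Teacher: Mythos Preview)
Your proof is correct and follows essentially the same approach as the paper: both decompose the edges into three perfect matchings via K\"onig's theorem, turn $M_1\cup M_2$ into paths by removing one edge per cycle, and add those edges to $M_3$ while maintaining a linear forest by exploiting that the current cycle's vertices are all degree-one endpoints in the growing forest. The only cosmetic difference is that you locate the removable edge by a counting argument (at least four endpoints spread over paths with two endpoints each), whereas the paper argues directly that if a first chosen edge fails, the next edge along the cycle must succeed.
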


We defer the proof of this lemma to Appendix. Figure~\ref{fig-property-regular-graph} is an illustrating example.

\begin{figure}
\centering
{
\includegraphics[width=0.7\textwidth]{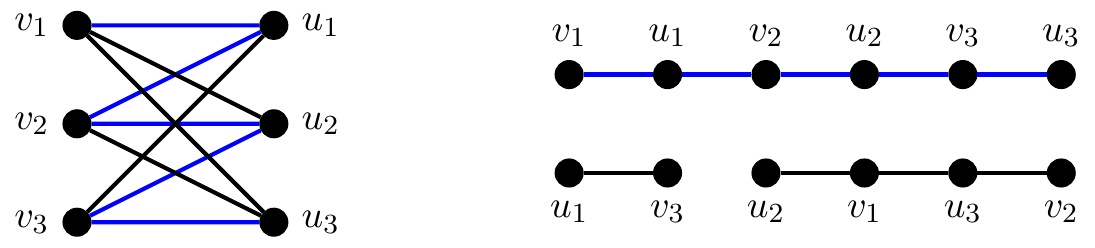}
}
\caption{An illustration of Lemma~\ref{lem-3-regular-bipartite-decomposable}. The left-hand side is a $3$-regular bipartite graph, and the right-hand side showcases 
two linear orders over the vertices so that each edge is connected by two consecutive vertices in at least one of the orders.}
\end{figure}

Now we are ready to unfold the {\nphns} of CCDV for $r$-approval in $2$-axes and $2$-CP single-peaked elections.

\begin{theorem}
\label{thm_ccdv_r_approval_2_cp_np_hard}
For every $r\geq 3$, CCDV for $r$-approval restricted to elections that are both $2$-axes single-peaked and $2$-CP single-peaked is {\nph}.
\end{theorem}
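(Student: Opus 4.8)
The plan is to reduce from RX3C, using its natural bipartite incidence graph as the source of the two orders guaranteed by Lemma~\ref{lem-3-regular-bipartite-decomposable}. Given an RX3C instance $(\xs,\xc)$, I would form the bipartite graph $G$ whose two sides are the elements $\xs$ and the sets $\xc$, with an edge joining $\xse_x$ to $\xce$ whenever $\xse_x\in \xce$. Since every element lies in exactly three sets and every set has exactly three elements, $G$ is $3$-regular, so Lemma~\ref{lem-3-regular-bipartite-decomposable} yields two linear orders $\lhd_1,\lhd_2$ over $\vset{G}$ and a partition $(A_1,A_2)$ of $\eset{G}$ such that the endpoints of each edge in $A_t$ are consecutive in $\lhd_t$. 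The candidate set would consist of one element-candidate per $\xse_x$, one set-candidate per $\xce$, the distinguished candidate $p$, and a constant number of auxiliary candidates used only to calibrate scores and to pad approvals up to $r$. Crucially, the incidence-encoding votes are built so that the two nontrivial candidates each approves are the endpoints of an edge of $G$, hence consecutive in $\lhd_1$ or $\lhd_2$; this is exactly what will let the whole election be single-peaked with respect to only two axes.

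For correctness I would arrange scores so that $p$ has a fixed final score $P$ (recall that in CCDV any optimal solution deletes only votes disapproving $p$, so $p$'s score is untouched), while every element- and set-candidate starts just above $P$ and must be pushed strictly below $P$ by the deletions. The budget would be set to the size $\xsize$ of an exact cover, and the votes organized so that covering a competitor's excess corresponds to selecting the set that covers the underlying element. The heart of the reduction is to exploit the cardinality balance of RX3C ($\abs{\xs}=\abs{\xc}=3\xsize$ with a cover of size exactly $\xsize$): a family of at most $\ell$ deletions that simultaneously drops every competitor below $P$ must account for all $3\xsize$ element-excesses using at most $\xsize$ ``set deletions,'' and since each such deletion can discharge at most three elements, the only way to stay within budget is to choose $\xsize$ sets that partition $\xs$, i.e., an exact cover; conversely an exact cover yields precisely such a deletion.

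The two single-peakedness certificates would then be verified separately for the \emph{same} fixed votes. For the $2$-axes property I would extend $\lhd_1$ and $\lhd_2$ to all candidates (placing $p$ and the auxiliary candidates at fixed ends) and invoke Observation~\ref{obs-r-approval-candidates-consecutive-single-peaked} together with Lemma~\ref{lem-single-peaked-consecutive}: every vote's approved set is consecutive in one of the two extended orders, so its full ranking can be realized single-peaked with respect to that axis. For the $2$-CP property I would use the partition of the candidates into the element-side and the set-side (distributing $p$ and the auxiliaries consistently): because each incidence vote contributes at most one approved candidate to each side, every vote restricted to a side has a consecutive approved set in the induced order, and Lemma~\ref{lem-single-peaked-consecutive} again makes each restricted subelection single-peaked.

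The main obstacle, and the reason the $3$-regular bipartite lemma is needed, is reconciling two competing demands. CCDV is an inherently \emph{monotone covering} problem—deleting more votes can only help $p$—so forcing the feasible solutions to behave like an \emph{exact} cover rather than an arbitrary cover requires a delicately tuned tight budget and a carefully engineered overlap structure among the competitors' excesses; this is where most of the bookkeeping will go. Simultaneously, every approval set must be made consecutive under \emph{only two} axes, which is exactly the guarantee that Lemma~\ref{lem-3-regular-bipartite-decomposable} supplies, and one must check that the score-calibrating auxiliary candidates do not break either certificate. Finally, the extension from $r=3$ to every $r\ge 4$ would follow the padding idea already used in the proof of Theorem~\ref{thm_ccav_r_approval_3_cp_nph}: append $r-3$ fresh dummy candidates to the approved set of every vote, placed consecutively with the existing approvals in both orders and on both sides and chosen so that they can never win within the budget, which preserves both single-peakedness certificates and the correctness of the reduction.
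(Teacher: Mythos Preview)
Your high-level plan---reduce from RX3C via its $3$-regular bipartite incidence graph, obtain two axes from Lemma~\ref{lem-3-regular-bipartite-decomposable}, certify $2$-CP via the element/set bipartition, and pad for $r\ge 4$---is exactly the route the paper takes. The specific instantiation you sketch, however, has a real gap.

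You propose one candidate per $\xse_x$ and one per $\xce$, incidence votes whose two nontrivial approvals are the endpoints of an edge, budget $\ell=\xsize$, and ``set deletions'' that each discharge three element-excesses. The last two requirements are incompatible with the first two. A single deletion can discharge three element-excesses only if the deleted vote approves the three element-candidates $\xse_x,\xse_y,\xse_z$ of some $\xce$; but Lemma~\ref{lem-3-regular-bipartite-decomposable} gives you nothing about such triples---it only makes each element--set \emph{pair} consecutive in one of the two orders. The three elements of a fixed $\xce$ need not be consecutive in either order, so such a vote cannot be realized single-peaked with respect to only two axes. If instead you use only the edge-encoding votes you describe, then each deletion discharges one element-excess and one set-excess, and the tight-budget constraint degenerates to a matching/factor condition that every $3$-regular bipartite graph satisfies; the reduction no longer encodes RX3C. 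Your $2$-CP verification (``at most one approved candidate on each side'') likewise addresses only the incidence votes and is silent about whatever votes are supposed to effect the three-element discharges.

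The paper repairs this by blowing up vertices: each $\xse_x$ becomes two candidates $\xse_x^1,\xse_x^2$ and each $\xce=\{\xse_x,\xse_y,\xse_z\}$ becomes three candidates $\xse_x(\xce),\xse_y(\xce),\xse_z(\xce)$. There are then two vote types: $\pi_\xce$ approves the three set-copies of $\xce$ (consecutive by construction), and $\pi_\xce^x$ approves $\xse_x(\xce),\xse_x^1,\xse_x^2$. Lemma~\ref{lem-3-regular-bipartite-decomposable} is applied to the original bipartite graph and each vertex is then replaced by its block of copies; whenever $\xse_x$ and $\xce$ are consecutive in $\lhd'_i$, the copy $\xse_x(\xce)$ is placed at the boundary of $\xce$'s block so that $\xse_x(\xce),\xse_x^1,\xse_x^2$ become consecutive in the derived axis. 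The budget is $7\xsize$, not $\xsize$, and exactness of the cover is forced by a two-sided count: at least $\xsize$ set-votes and at least $6\xsize$ incidence-votes must go, with equality pinning down an exact cover. The blow-up is equally what makes the $2$-CP certificate work, since the restriction of every vote to each side is then an interval in the obvious per-side order.
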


{
\begin{proof}
Let $(\xs=\{\xse_1,\dots,\xse_{3\xsize}\}, \xc=\{\xce_1,\dots,\xce_{3\xsize}\})$ be an instance of RX3C\@. We create a CCDV instance with the following components as follows. 
We first consider $r=3$ and then we discuss how to extend the reduction for any $r>3$. 

{\bf{Candidates~$\mathcal{C}$}.} We create in total $15\kappa+5$ candidates. In particular, for each $\xse_x\in \xs$, we create two candidates~$\xse_x^1$ and~$\xse_x^2$.
In addition, we create five candidates denoted by~$\discandi$,~$q_1$,~$q_2$,~$q_3$, and~$q_4$, where~${\discandi}$ is the distinguished candidate.
Let
\[C_1=\{\xse_x^i \mid i \in \{1,2\}, \xse_x\in \xs\}\cup \{p, q_1, q_2, q_3, q_4\}.\]
Finally, for each $s=\{\xse_x,\xse_y,\xse_z\}\in \xc$, we create three candidates~$\xse_x(s)$,~$\xse_y(s)$, and~$\xse_z(s)$ corresponding to~$\xse_x$,~$\xse_y$, and~$\xse_z$, respectively.
Let~$C_2$ be the set of all candidates corresponding to elements in~$S$. Let $\mathcal{C}=C_1\cup C_2$.

{\bf{Votes~$\Pi_{\mathcal{V}}$}.} \onlyfull{Similar to the proof of Theorem~\ref{thm_ccav_r_approval_3_cp_nph}, w}We only specify here the approved candidates in each created vote,
then after the correctness proof we utilize Lemma~\ref{lem-single-peaked-consecutive} to specify the linear preferences of all votes so that they are both $2$-axes single-peaked 
and $2$-CP single-peaked. 
First, we create one vote approving~$\discandi$,~$q_1$,~$q_2$ and one vote approving~$\discandi$,~$q_3$,~$q_4$. 
In addition, for each $s=\{\xse_x, \xse_y, \xse_z\}\in \xc$, we create four votes as follows: 
\begin{itemize}
\item $\pi_{\xce}$ approving~$\xse_x(s)$,~$\xse_y(s)$,~$\xse_z(s)$;
\item $\pi_{\xce}^x$ approving~$\xse_x(s)$,~$\xse_x^1$,~$\xse_x^2$;
\item $\pi_{\xce}^y$ approving~$\xse_y(s)$,~$\xse_y^1$,~$\xse_y^2$; and
\item $\pi_{\xce}^z$ approving~$\xse_z(s)$,~$\xse_z^1$,~$\xse_z^2$.
\end{itemize}

It is easy to verify that the winning set is $C_1\setminus \{p, q_1, q_2, q_3, q_4\}$.
{Precisely, every winning candidate has score~$3$,~${\discandi}$ has score~$2$, every~$q_i$ where $i\in [4]$ has score~$1$, and every candidate in~$C_2$ has score~$2$.}
Finally, we set $\ell=7\kappa$, i.e., we\onlyfull{ are allowed to} delete at most~$7\kappa$ votes.

The above construction clearly takes polynomial time. 
In the following, we prove the correctness of the reduction.

$(\Rightarrow)$ Assume that $S'\subseteq S$ is an exact $3$-set cover of~$\xs$. Consider the election after deleting the following $7\kappa$ votes:
\begin{itemize}
\item All~$\kappa$ votes~$\pi_{\xce}$ such that $s\in \xc'$;
\item For each $s=\{\xse_x, \xse_y, \xse_z\}\not\in \xc'$, all three votes~$\pi_{\xce}^x$,~$\pi_{\xce}^y$, and~$\pi_{\xce}^z$.
\end{itemize}
Due to the construction and the fact that~$S'$ is an exact 3-set cover,~${\discandi}$\onlyfull{ still} has score~$2$ and every other candidate has score~$1$ after deleting these votes,
implying that~${\discandi}$ becomes the unique winner.

$(\Leftarrow)$ Assume that~$\Pi_{V}$ is a subset of~$\Pi_{\mathcal{V}}$ with minimal cardinality such that $|\Pi_V|\leq \ell=7\kappa$ and~${\discandi}$ becomes the unique winner after
deleting all votes in~$\Pi_V$. Due to the minimality of~$\Pi_V$, no vote in~$\Pi_V$ approves~${\discandi}$.
Hence,~${\discandi}$\onlyfull{ still} has score~$2$ after deleting all votes in~$\Pi_V$. Let $\Pi_S=\{\pi_{\xce} \mid s\in \xc\}$ and $\Pi^U=\{\pi_{\xce}^x \mid s\in \xc, \xse_x\in s\}$.
For every $\pi_{\xce}\not\in \Pi_V\cap \Pi_S$, all three votes~$\pi_{\xce}^x$,~$\pi_{\xce}^y$,~$\pi_{\xce}^z$, where $s=\{\xse_x, \xse_y, \xse_z\}$, must be included in~$\Pi_V$,
since otherwise one of~$\xse_x(s)$,~$\xse_y(s)$, and~$\xse_z(s)$ would have score~$2$ after deleting all votes in~$\Pi_V$. Let $t=|\Pi_V\cap \Pi_S|$.
It follows from the above analysis that $|\Pi_V|\geq t+3(3\xsize-t)=9\kappa-2t$, implying that $t\geq \kappa$.
On the other hand, as there are $6\kappa$ candidates in $C_1\setminus \{p, q_1, q_2, q_3, q_4\}$ and every vote in~$\Pi^U$ approves two of these candidates,
to decrease their scores to at most~$1$, we need to delete at least $9\kappa-3\xsize=6\kappa$ votes, i.e., $|\Pi_V\cap \Pi^U|\geq 6\kappa$.
This directly implies that $t=\kappa$ and $|\Pi_V|=7\kappa$. Let $S'=\{s\in \xc \mid \pi_{\xce}\in \Pi_V\}$. Clearly, $|S'|=t=\kappa$.
Due to the above analysis, for every $\pi_{\xce}\in \Pi_V$, none of~$\pi_{\xce}^x$,~$\pi_{\xce}^y$, and~$\pi_{\xce}^z$ is in~$\Pi_V$,
since otherwise there would be more than~$7\kappa$ votes in~$\Pi_V$. As a result, if there are two~$s, s'\in \xc'$ which contain a
common element $\xse_x\in \xs$, then~$\xse_x^1$ (and~$\xse_x^2$) would have score at least~$2$ after the deletion of all votes in~$\Pi_V$,
contradicting that~${\discandi}$ is the unique winner. So, the $3$-subsets in~$S'$ must be pairwise disjoint, implying that~$S'$ is an exact $3$-set cover.

Finally, we show that the election constructed above is both a $2$-axes election and a $2$-CP election. 
We first how that it is $2$-axes single-peaked. To this end, we show that there exist two axes~$\lhd_1$ and~$\lhd_2$ over~$\mathcal{C}$ such that for every vote constructed above,
the approved candidates in the vote are consecutive in at least one of~$\lhd_1$ and~$\lhd_2$. To this end, we need an auxiliary graph.
Note that the RX3C instance $(\xs, \xc)$ can be represented by a $3$-regular bipartite graph with vertex-partition $(\xs, \xc)$.
In addition, there is an edge between some $\xse\in \xs$ and $\xce\in \xc$ if and only if $\xse\in \xce$.
Due to Lemma~\ref{lem-3-regular-bipartite-decomposable}, there are two linear orders~$\lhd_1'$ and~$\lhd_2'$ over~$\xs\cup \xc$ such that for every edge~$\edge{\xse}{\xce}$
in the graph where $\xse\in \xce\in \xc$ the two vertices~$\xse$ and~$\xce$ are consecutive in one of these two orders.
We first construct a linear order~$\lhd_1^*$ (resp.~$\lhd_2^*$) over $\mathcal{C}$ based on~$\lhd_1'$ (resp.~$\lhd_2'$).
First, we let~$\lhd_1^*$ (resp.~$\lhd_2^*$) be a copy of~$\lhd_1'$ (resp.~$\lhd_2'$) and then we do the following modification.
\begin{itemize}
\item For each $\xse_x\in \xse$ where $x\in [3\xsize]$, we replace~$\xse_x$ with the two candidates~$\xse_x^1$ and~$\xse_x^2$ corresponding to~$\xse_x$ in~$\lhd_1^*$ (resp.~$\lhd_2^*$).
The relative order between~$\xse_x^1$ and~$\xse_x^2$ in~$\lhd_1^*$ (resp.~$\lhd_2^*$) does not matter.
\item For each $s=\{\xse_x, \xse_y, \xse_z\}\in \xc$ where $\{x,y,z\}\subseteq [3\xsize]$,
we replace~$s$ in~$\lhd_1^*$ (resp.~$\lhd_2^*$) with the three candidates~$\xse_x(s)$,~$\xse_y(s)$, and~$\xse_z(s)$ created for the element~$s$.
The relative order among these three candidates are determined as follows. If~$s$
    is not the first element in~$\lhd_1'$ (resp.~$\lhd_2'$), let~$\xse_{i}$, $i\in [3\xsize]$, be the element ordered
    immediately before~$s$ in~$\lhd_1'$ (resp.~$\lhd_2'$), i.e.,~$\xse_{i}$ and~$s$ are consecutive in~$\lhd_1'$ (resp.~$\lhd_2'$) and $\xse_{i}\lhd_1' s$ (resp.~$\xse_{i}\lhd_2' s$).
    If $i\in \{x,y,z\}$, we require that~$\xse_i(s)$ is ordered before everyone in $\{\xse_x(s), \xse_y(s), \xse_z(s)\}\setminus \{\xse_i(s)\}$ so that the three 
    candidates~$\xse_i^1$, $\xse_i^2$, and $c_i(s)$ are consecutive. 
    Symmetrically, if~$s$ is not the last element in~$\lhd_1'$ (resp.~$\lhd_2'$), and~$\xse_{j}$ denotes the element ordered immediately after~$s$ in~$\lhd_1'$ (resp.~$\lhd_2'$)
    we have the following requirement: if $j\in \{x,y,z\}$, we require that~$\xse_j(s)$ is the last one among~$\xse_x(s)$,~$\xse_y(s)$, and~$\xse_z(s)$, so that the 
    three candidates~$\xse_i^1$, $\xse_i^2$, and $c_i(s)$ are consecutive. See Figure~\ref{fig-a} for an illustration. 
    We order~$\xse_x(s)$,~$\xse_y(s)$, and~$\xse_z(s)$ so that the above requirements are fulfilled.
\end{itemize}

\begin{figure}
\centering
{\includegraphics[width=0.65\textwidth]{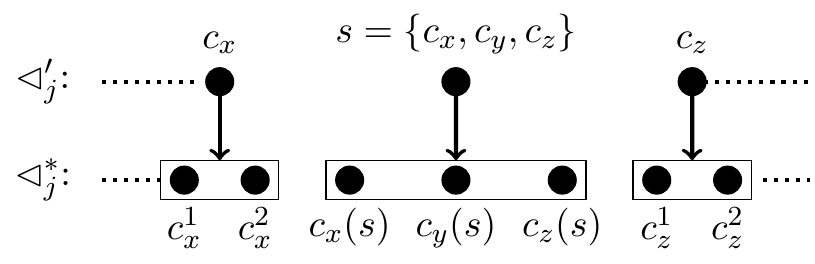}}
\caption{An illustration of the construction of $\lhd_j^*$ from $\lhd_j'$, $j\in [2]$, in the proof of Theorem~\ref{thm_ccdv_r_approval_2_cp_np_hard}.}
\label{fig-a}
\end{figure}

Given the final~$\lhd_1^*$ and~$\lhd_2^*$,  let
    \[\lhd_1=(q_1, q_2, p, q_3, q_4, \lhd_1^*)~\text{and}\] 
    \[\lhd_2=(q_1, q_2, p, q_3, q_4, \lhd_2^*).\]
    Clearly, the three candidates~$p$,~$q_1$, and~$q_2$ are consecutive in both~$\lhd_1$ and~$\lhd_2$,
    and the three candidates~$p$,~$q_3$, and~$q_4$ are consecutive in both~$\lhd_1$ and~$\lhd_2$ too.
    Due to Lemma~\ref{lem-single-peaked-consecutive}, we can complete the linear order of the vote approving exactly~$p$,~$q_1$, and~$q_2$ (resp.~$p$,~$q_3$, and~$q_4$)
    so that it is single-peaked with respect to~$\lhd_1$ and, moreover,~$p$,~$q_1$, and~$q_2$ (resp.~$p$,~$q_3$, and~$q_4$) are the top-$3$ candidates.
    Let~$s\in \xc$ be a $3$-subset in~$S$. In~$\lhd_1$ and~$\lhd_2$, all the three candidates created for~$s$ are consecutive.
    Let $\xse_x\in s$ be an element in~$s$ and let us consider the vote~$\pi_{\xce}^x$ whose top-$3$ candidates are~$\xse_x(s)$,~$\xse_x^1$, and~$\xse_x^2$.
    Clearly,~$\edge{\xse_x}{s}$ is an edge in the above mentioned $3$-regular graph. 
    Then, due to Lemma~\ref{lem-3-regular-bipartite-decomposable},~$\xse_x$ and~$s$ are consecutive in at least one of the original orders~$\lhd_1'$ and~$\lhd_2'$, say,
    without loss of generality,~$\lhd_1'$. Then due to the definition of~$\lhd_1^*$, the three candidates~$\xse_x^1$,~$\xse_x^2$, and~$\xse_x(s)$ are consecutive.
    Therefore, all the three votes created for~$s$ can be completed into linear-order votes which are single-peaked with respect at least one of~$\lhd_1$ and~$\lhd_2$ too,
    and whose top-$3$ candidates are exactly those that are approved in these votes.
    This completes the proof that the constructed election is a $2$-axes election with~$\lhd_1$ and~$\lhd_2$ being the witness.

Now, we show that the above election is also a $2$-CP election. To this end, it suffices to show that $(C_i, \Pi_{\mathcal{V}}^{C_i})$ is single-peaked for each $i\in [2]$.
Let~$\lhd_1$ be an order of~$C_1$ such that for every~$\xse_x$, $x\in [3\xsize-1]$, the two candidates corresponding to~$\xse_x$ are ordered before the two candidates
corresponding to~$\xse_{x+1}$. Moreover, for each $\xse_x\in \xs$,~$\xse_x^1$ is ordered before~$\xse_x^2$.
Furthermore, the candidates~$p$,~$q_1$,~$q_2$,~$q_3$, and~$q_4$ are ordered after all the other candidates in~$C_1$ and they are ordered as $(q_1, q_2, p, q_3, q_4)$.
Let~$\lhd_2$ be an order of~$C_2$ such that for every $\xce\in \xc$, the three candidates corresponding to~$\xce$ are ordered consecutively
(the relative orders among them do not matter).
Clearly, for each $i\in [2]$ the approved candidates restricted to~$C_i$ in every vote lie consecutively on~$\lhd_i$. 
In this case, we can specify the preferences so that each vote restricted to~$C_i$ is single-peaked with respect to~$\lhd_i$ as follows. 
Let~$\pi$ be a vote. Let~$A_1$ and~$A_2$ be the set of approved candidates in~$\pi$ in~$C_1$  and in~$C_2$, respectively. 
Due to Lemma~\ref{lem-single-peaked-consecutive}, we can specify the preference~$\succ_1$ of~$\pi$ restricted to~$C_1$ (resp.\ $C_2$) so that~$A_1$ (resp.\ $A_2$) are 
ranked consecutively above all the other candidates and the preference is single peaked with respect to~$\lhd_1$ (resp.\ $\lhd_2$). 
Then, we define the preference os~$\pi$ over the whole set of candidates as 
\[A_1\succ A_2\succ C_1\setminus A_1\succ C_2\setminus A_2,\]
where the preferences among candidates in~$A_1$, and among candidates in $C_1\setminus A_1$ are specified by~$\succ_1$, and 
the preferences among candidates in~$A_2$, and among candidates in $C_2\setminus A_2$ are specified by~$\succ_2$.  
This preference is $2$-CP single-peaked with respect to the partition $(C_1, C_2)$ as its restriction to $C_1$ and $C_2$ are exactly $\succ_1$ and $\succ_2$ which are single-peaked 
with respect to $\lhd_1$ and $\lhd_2$, respectively, as discussed above. 

Similar to the proof of Theorem~\ref{thm_ccav_r_approval_3_cp_nph}, to extend the above proof to every $r\geq 4$, we add a number of dummy candidates. 
Precisely, we create $r-3$ copies of~$q_1$ and $r-3$ copies of $q_4$ so that the copies and their originate are consecutive in the respectively axis. 
Let $A(q_1)$ and $A(q_4)$ respectively denote the sets of copies of $q_1$ and $q_4$. 
For each $\xse_x\in \xs$, we create $r-3$ copies of $\xse_x^2$ which consecutively lie with $\xse_x^2$ in the respective linear axis. 
For each $\xce\in \xc$, we create a set $A(\xse)$ of $r-3$ candidates. 
Regarding the votes, we extend the two votes approving $\{p, q_1, q_2\}$ and $\{p, q_3, q_4\}$ respectively to that approving $\{p, q_1, q_2\}\cup A(q_1)$ and $\{p, q_3, q_4\}\cup A(q_4)$. 
For each $\xce\in \xc$, we create four votes approving respectively the following candidates 
\begin{itemize}
\item $\xse_x(s)$,~$\xse_y(s)$,~$\xse_z(s)$ and all candidates in $A(\xse)$;
\item $\xse_x(s)$,~$\xse_x^1$,~$\xse_x^2$ and all copies of $\xse_x^2$;
\item $\xse_y(s)$,~$\xse_y^1$,~$\xse_y^2$ and all copies of $\xse_y^2$;
\item $\xse_z(s)$,~$\xse_z^1$,~$\xse_z^2$ and all copies of $\xse_z^2$.
\end{itemize}
The correctness proof if similar. 
\onlyfull{To prove the nonunique-winner model, we remove the vote approving $\{p, q_3, q_4\}\cup A(q_4)$.}
\end{proof}
}


\section{Condorcet Consistent Voting}
In this section, we study CCAV and CCDV for several Condorcet consistent voting correspondences, i.e., voting correspondences which select exactly the Condorcet winner whenever it exists.
We first show some {\fpt} results for Condorcet.
Our results rely on an {\fpt}-algorithm for the {\prob{Mixed Integer Programming  With Simple Piecewise Linear Transformations}} problem (MIPWSPLT).
This problem is a generalization of integer linear programming (ILP) with the entries of the input matrix being replaced with piecewise linear convex or concave functions.
Bredereck~et~al.~\shortcite{DBLP:journals/corr/abs-1709-02850} recently proved that MIPWSPLT is {\fpt} with respect to the number of variables.
To establish our {\fpt} result, we need only a special case of the MIPWSPLT problem which is defined as follows.

\EP{Integer Programming With Simple Piecewise Linear Transformations (IPWSPLT)}
{A collection of $s\cdot t$ piecewise linear concave functions $\left\{f_{i,j} \setmid i\in [s], j\in [t]\right\}$, and a vector $b\in \mathbb{Z}^s$.}
{Is there a vector~$x=\langle x_1, x_2,\dots, x_t\rangle$ of~$t$ integer variables such that for every $i\in [s]$, it holds that
\begin{equation}
\label{equ-generalized-ilp}
\sum_{j=1}^t f_{i,j}(x_j)\leq b_i?
\end{equation}
}


\begin{lemma}[\cite{DBLP:journals/corr/abs-1709-02850}]
\label{lem-generalized-ILP}
IPWSPLT is solvable in~$\bigos{t^{2.5t+o(t)}}$ time, where~$t$ is the number of variables.
\end{lemma}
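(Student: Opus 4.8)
The plan is to obtain Lemma~\ref{lem-generalized-ILP} as a direct specialization of the theorem of Bredereck~et~al.~\cite{DBLP:journals/corr/abs-1709-02850} and then to pin down the stated running time. First I would observe that IPWSPLT is precisely the restriction of MIPWSPLT in which every transformation $f_{i,j}$ is concave (MIPWSPLT allows each to be convex \emph{or} concave), every variable is integral (there are no continuous variables), and every constraint has the form ``$\le b_i$''. Since Bredereck~et~al. prove that MIPWSPLT is $\fpt$ in the number of variables, the same conclusion transfers verbatim to IPWSPLT with parameter $t$, so the only residual task is to certify that the factor depending on the parameter can be taken to be $t^{2.5t+o(t)}$.

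For the running time I would reduce to integer programming in a fixed number of dimensions, mirroring the use of Lenstra's theorem in the proof of Theorem~\ref{thm_r_approval_k_axis_fpt} but tracking the dimension more carefully. Writing each concave piecewise linear $f_{i,j}$ as the pointwise minimum $f_{i,j}(x_j)=\min_p\big(\alpha_{i,j,p}\,x_j+\beta_{i,j,p}\big)$ of the affine extensions of its pieces, I would introduce for each $(i,j)$ an auxiliary quantity $y_{i,j}$ and rewrite $\sum_{j}f_{i,j}(x_j)\le b_i$ as $\sum_j y_{i,j}\le b_i$ together with $y_{i,j}\ge f_{i,j}(x_j)$. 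The crucial accounting is that the only \emph{free} integer variables remain the original $x_1,\dots,x_t$, while the auxiliary values are pinned down by the $x_j$. Feeding the resulting bounded-dimension program to the sharpest algorithms for integer programming in fixed dimension (Kannan's method, combined with Frank--Tardos coefficient reduction) yields a running time of $t^{2.5t+o(t)}$ times a polynomial in the input size, which is exactly the claimed bound.

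The step I expect to be the main obstacle is handling concavity correctly. A constraint $g(x)\le b$ with $g$ concave describes a \emph{non-convex} region, because the sublevel set of a concave function is generally non-convex; equivalently, $y_{i,j}\ge\min_p(\alpha_{i,j,p}x_j+\beta_{i,j,p})$ unfolds into the disjunction ``$y_{i,j}\ge\alpha_{i,j,p}x_j+\beta_{i,j,p}$ for \emph{some} piece $p$'', a union of half-spaces rather than an intersection. Resolving this disjunction is the delicate point: the naive fix of guessing, for each variable, the interval between consecutive breakpoints into which $x_j$ falls does linearize every constraint, but the number of guesses is $(\text{poly})^{t}$, whose dependence on the input size in the base ruins a clean parameterization by $t$ alone. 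The real work --- which is where I would lean on the argument of Bredereck~et~al. --- is to encode the choice of active piece while provably keeping the number of independent integer variables in $O(t)$, since it is exactly this dimension that controls the $t^{2.5t+o(t)}$ factor.
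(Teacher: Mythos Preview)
The paper does not prove this lemma at all: it is stated with the citation \cite{DBLP:journals/corr/abs-1709-02850} and used as a black box. Your first paragraph already matches the paper's ``proof'' exactly---IPWSPLT is a special case of MIPWSPLT, Bredereck~et~al.\ show MIPWSPLT is \fpt\ in the number of variables with the stated bound, done. Everything after that paragraph is you attempting to re-derive the content of the cited paper, which is neither required nor attempted here; your honest identification of the non-convexity obstacle is accurate, but resolving it is precisely the work that \cite{DBLP:journals/corr/abs-1709-02850} does and that this paper simply imports.
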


Note that the result in~\cite{DBLP:journals/corr/abs-1709-02850} holds for the variant of IPWSPLT where
 the less than sign is replaced with the greater than sign or the equal sign in~\eqref{equ-generalized-ilp}.

For a vote~$\pi$ over a set~$\mathcal{C}$ of candidates and a candidate~$c\in \mathcal{C}$, let~${\sf{Ab}}(\pi,c)$ (resp.\ ${\sf{Be}}(\pi, c)$) be the set of
all candidates ranked above (resp.\ below)~$c$ in~$\pi$, i.e.,
${\sf{Ab}}(\pi,c)=\left\{c'\in \mathcal{C} \setmid \pi(c')<\pi(c)\right\}$ (resp.\ ${\sf{Be}}(\pi, c)=\left\{c'\in \mathcal{C} \setmid \pi(c')>\pi(c)\right\}$).

\begin{theorem}
\label{thm_Condorcet_FPT_k_dimension}
CCAV and CCDV for Condorcet in $k$-axes elections are {\fpt} with respect to~$k$ when $k$-axes are given.
\end{theorem}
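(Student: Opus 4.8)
The plan is to reduce both problems to the \prob{Integer Programming With Simple Piecewise Linear Transformations} (IPWSPLT) problem and invoke Lemma~\ref{lem-generalized-ILP}, arranging that the number of \emph{variables} stays bounded by a function of~$k$ while allowing polynomially many \emph{constraints}; this is legitimate because in Lemma~\ref{lem-generalized-ILP} only the number of variables~$t$, not the number of constraints~$s$, is the relevant parameter. First note that under the Condorcet correspondence $p$ uniquely wins an election if and only if $p$ is the Condorcet winner, i.e.\ $N(p,c)>N(c,p)$ for every $c\in\mathcal{C}\setminus\{p\}$. Hence it suffices to decide whether we can add (CCAV) or delete (CCDV) at most~$\ell$ votes so that $p$ beats every other candidate.

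The structural fact I would exploit is that, for a vote~$\pi$ single-peaked with respect to an axis~$\lhd_i$, the set ${\sf{Ab}}(\pi,p)\cup\{p\}$ equals the top-$\pi(p)$ candidates of~$\pi$, which form an interval of~$\lhd_i$ containing~$p$. Consequently, as far as pairwise comparisons with~$p$ are concerned, $\pi$ is determined by only two numbers, its \emph{left reach} and \emph{right reach}, namely how far its above-$p$ interval extends to either side of~$p$ on~$\lhd_i$. In particular, for a candidate~$c$ lying at distance~$j$ to the right of~$p$ on~$\lhd_i$, $\pi$ prefers~$c$ to~$p$ if and only if its right reach is at least~$j$ (symmetrically on the left). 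This yields a \emph{nestedness}: along a fixed axis the sets of votes preferring~$c$ to~$p$ are nested as~$c$ moves away from~$p$.

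With this in hand I would assign every vote to one axis with respect to which it is single-peaked, and introduce, for each of the~$k$ axes, a bounded number of integer variables describing the reach-profile of the votes acted upon on that axis. For each candidate $c\in\mathcal{C}\setminus\{p\}$ I would write one constraint expressing $N(p,c)>N(c,p)$ in the final election. Since $c$ may occupy a different position on each axis, its final margin is a sum over the~$k$ axes, and by the nestedness each summand---the number of acted-upon votes on that axis whose above-$p$ interval covers~$c$---is a piecewise linear, and (when the votes are taken in reach order) concave, function of the axis variables, the registered votes contributing only constants. For CCDV I would additionally guess the total number $d\le\ell$ of deleted votes (polynomially many choices), which fixes the majority threshold $(n-d)/2$, where~$n$ is the number of registered votes, and puts each constraint into the form required by Lemma~\ref{lem-generalized-ILP}; for CCAV the availability of each vote type in~$\Pi_{\mathcal{W}}$ gives box constraints $0\le x\le n_{\text{type}}$ and the budget gives one further linear constraint. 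Applying Lemma~\ref{lem-generalized-ILP} with $t=f(k)$ variables then yields an~$\fpt$ algorithm in~$k$.

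The main obstacle is precisely to keep the variable count at $f(k)$ while remaining faithful, and two points must be resolved. First, a single vote affects candidates on both sides of~$p$ through its left and right reach, so these two reach statistics are \emph{coupled} within a vote and cannot be selected independently. Second, the per-candidate counts must be written as piecewise linear \emph{concave} functions of the chosen variables for the IPWSPLT machinery to apply. The resolution I would pursue is an exchange argument, based on the nestedness of the above-$p$ intervals, showing that some optimal solution acts on each axis according to a constant number of reach thresholds; then ``number of acted-upon votes whose interval reaches a given position'' becomes a $\min$-type expression, hence concave piecewise linear in the thresholds, which both controls the coupling and keeps the number of variables $O(k)$.
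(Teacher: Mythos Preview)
Your proposal follows essentially the same route as the paper: an IPWSPLT formulation with $O(k)$ integer variables, one constraint per candidate in $\mathcal{C}\setminus\{p\}$, using the nestedness of the above-$p$ sets along each axis so that the per-candidate counts become piecewise-linear concave functions of the variables. The exchange argument you sketch is exactly what the paper uses.

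The one point worth flagging is that the ``coupling obstacle'' you worry about does not actually arise. For a vote $\pi$ single-peaked with respect to $\lhd_i$, the interval ${\sf{Ab}}(\pi,p)\cup\{p\}$ extends to \emph{only one side} of $p$: if the peak of $\pi$ lies, say, to the left of $p$ on $\lhd_i$, then by single-peakedness every candidate to the right of $p$ is ranked below $p$, so the right reach is zero (and symmetrically). The paper exploits this by splitting each $\Pi_{\mathcal{V}_i}$ (respectively $\Pi_{\mathcal{W}_i}$) into a left part $\Pi_i^L$ and a right part $\Pi_i^R$, sorting each by $\pi(p)$ so that the above-$p$ sets are nested by inclusion; the exchange argument then yields an optimal solution that selects a prefix from each of the $2k$ ordered lists, giving exactly $2k$ variables $x_i^L,x_i^R$ with no coupling at all. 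Your guessing of the total deletion count $d$ is likewise unnecessary: the paper just writes the majority condition with $\sum_{i,X} x_i^X$ appearing linearly on both sides of the inequality.
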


{
\begin{proof}
We prove the theorem by giving an IPWSPLT formulation for CCDV and CCAV, respectively, with the number of variables being bounded by a function of~$k$.

{Let us consider first CCDV\@. } Let $((\mathcal{C},\Pi_{\mathcal{V}}), \discandi\in \mathcal{C}, \ell)$ be a given
CCDV instance where $(\mathcal{C},\Pi_{\mathcal{V}})$ is a $k$-axes election
with respect to $k$-axes $\lhd_1,\dots,\lhd_{k}$.
We solve the instance as follows. Let $\Pi_{\mathcal{V}_1},\dots,\Pi_{\mathcal{V}_{k}}$ be a partition of~$\Pi_{\mathcal{V}}$ such that
for every $i\in [k]$, all votes in~$\Pi_{\mathcal{V}_i}$ are single-peaked with respect to~$\lhd_i$.
Observe that for each $\pi\in \Pi_{\mathcal{V}_i}$, $i\ [k]$, all candidates ranked above the distinguished candidate~${\discandi}$ lying consecutively on~$\lhd_i$.
Moreover,  either  all of them lie on the left-side of~${\discandi}$ or all of them lie on the right-side of~${\discandi}$ on~$\lhd_i$.
For each~$\Pi_{\mathcal{V}_i}$,~$i\in [k]$, let~$\Pi_{\mathcal{V}_i}^L$ (resp.\ $\Pi_{\mathcal{V}_i}^R$) be the multiset of all votes in~$\Pi_{\mathcal{V}_i}$ where
all candidates ranked above~${\discandi}$ lie on the left-side (resp.\ right-side) of~${\discandi}$ in~$\lhd_i$.
Precisely, for each $i\in [k]$,
\[\Pi_{\mathcal{V}_i}^L=\left\{\pi\in \Pi_{\mathcal{V}_i} \setmid \forall{\left(c\in \mathcal{C}\setminus \{\discandi\}, \pi(c)<\pi(\discandi)\right)}[c\lhd_i p]\right\}~\text{and}\]
\[\Pi_{\mathcal{V}_i}^R=\left\{\pi\in \Pi_{\mathcal{V}_i} \setmid \forall{(c\in \mathcal{C}\setminus \{\discandi\}, \pi(c)<\pi(\discandi))}[p\lhd_i c]\right\}.\]
 For each $i\in [k]$ and $X\in \{L, R\}$, let $t_i^X=|\Pi_{\mathcal{V}_i}^X|$ and $(\pi_{(i,1)}^X, \pi_{(i,2)}^X,\dots,\pi_{(i,t_i^X)}^X)$ be an order over
 $\Pi_{\mathcal{V}_i}^X$ such that $\pi_{(i,x)}^X(\discandi)\geq \pi_{(i,x+1)}^X(\discandi)$ for all $x\in [t_i^X-1]$.
 An observation is that for every~$x$ where $x\in [t_i^X-1]$, ${\sf{Ab}}(\pi_{(i,x+1)}^X, p)\subseteq {\sf{Ab}}(\pi_{(i,x)}^X, p)$, i.e., the candidates ranked above~${\discandi}$
 in~$\pi_{(i,x+1)}^X$ are also ranked above~${\discandi}$ in~$\pi_{(i,x)}^X$.
 This implies that there is an optimal solution such that for each $i\in [k]$ and $X\in \{L, R\}$ such that $t_i^X>0$,
 this solution includes either none of $\Pi_{\mathcal{V}_i}^X$, or it includes all votes~$\pi_{(i,x)}^X$ such that $x\in [y]$ for some positive integer~$y\leq  t_i^X$
 and excludes all the other votes in~$\Pi_{\mathcal{V}_i}^X$.
 Based on the observation, we create an instance of IPWSPLT as follows.
We create in total~$2k$ variables. In particular, for each axis~$\lhd_i$, $i\in [k]$, we create two variables denoted by~$x_i^L$ and~$x_i^R$, where~$x_i^L$ (resp.\ $x_i^R$) indicates
how many votes in~$\Pi_{\mathcal{V}_i}^L$ (resp.\ $\Pi_{\mathcal{V}_i}^R$) are included in the solution. Let $(\xse_1, \xse_2,\dots, \xse_m)$ be any arbitrary but fixed order of
$\mathcal{C}\setminus \{\discandi\}$, where~$m$ is the number of candidates minus one.
For each integer $i\in [k]$, each $X\in \{L, R\}$, and every candidate $c\in \mathcal{C}\setminus \{\discandi\}$,
we define a piecewise linear concave function $f_{i,X,c}: \mathbb{R}_{\geq 0}\rightarrow \mathbb{R}_{\geq 0}$ as follows. First, $f_{i,X,c}(0)=0$.
Second, for each positive integer $x\leq t_i^X$, $f_{i,X,c}(x)$ is the number of votes in $\{\pi_{i,1}^X,\dots,\pi_{i,x}^X\}$ that rank~$c$ above~${\discandi}$ if $t_i^X>0$ and
$f_{i,X,c}(x)=0$ otherwise.
For each integer $x>t_i^X$ we have that $f_{i,X,c}(x)=f_{i,X,c}(t_i^X)$.
Finally, for a real number~$x$ between integers~$y$ and~$y+1$, we have that
\[f_{i,X,c}(x)=f_{i,X,c}(y)+(x-y)\cdot f_{i,X,c}(y+1).\] The restrictions are as follows.
\begin{itemize}
\item For every~$i\in [k]$ and every $X\in \{L, R\}$, we have that $x_i^X\in \mathbb{N}$ and $0\leq x_i^X\leq t_i^X$.
\item Since we seek a feasible solution of size at most~$\ell$, we have that \[\sum_{i\in [k]}(x_{i}^L+x_i^R)\leq \ell.\]
\item To ensure that~${\discandi}$ is the Condorcet winner in the final election, for each candidate $c\in \mathcal{C}\setminus \{\discandi\}$, we have that
\end{itemize}
\[2\left(N\left(c, \discandi\right)-\sum_{\substack{i\in [k]\\ X\in \{L, R\}}}f_{i,X,c}\left(x_i^X\right)\right)<\left |\Pi_{\mathcal{V}}\right|%
-\sum_{\substack{i\in [k]\\ X\in \{L, R\}}} x_i^X.\]
In the above inequality,~$N(c, p)$ is calculated with respect to~$\Pi_{\mathcal{V}}$.
The right side is the number of votes in the final election, and the left side is the double of the number of votes ranking~$c$ above~${\discandi}$ in the final election.
This inequality ensures that~${\discandi}$ beats~$c$ in the final election.
The above programming can be solved in {\fpt} time with respect to~$k$ by the algorithm studied in~\cite{DBLP:journals/corr/abs-1709-02850} (Theorem~2).
{
\medskip

Consider now CCAV\@. Let $((\mathcal{C}, \Pi_{\mathcal{V}}), p\in \mathcal{C}, \Pi_{\mathcal{W}}, \ell)$ be a CCAV instance where $(\mathcal{C}, \Pi_{\mathcal{V}}\cup \Pi_{\mathcal{W}})$
is a
$k$-axes election with respect to $k$-axes $\lhd_1, \lhd_2, \dots, \lhd_k$. The algorithm to solve the instance is similar to the above one for CCDV.
Let $(\Pi_{\mathcal{W}_1},\dots,\Pi_{\mathcal{W}_{k}})$ be a partition of~$\Pi_{\mathcal{W}}$ such that for every $i\in [k]$, all votes in~$\Pi_{\mathcal{W}_i}$ are
single-peaked with respect to~$\lhd_i$.
The observation above holds for~$\Pi_{\mathcal{W}}$ as well.
For each $i\in [k]$, let~$\Pi_{\mathcal{W}_i}^L$ (resp.\ $\Pi_{\mathcal{W}_i}^R$) be the multiset of all votes in ~$\Pi_{\mathcal{W}_i}$ (resp.\ $\Pi_{\mathcal{W}_i}$)
where all candidates ranked above~${\discandi}$ lie on the left-side (right-side) of~${\discandi}$ in~$\lhd_i$. Precisely, for each $i\in [k]$,
\[\Pi_{\mathcal{W}_i}^L=\left\{\pi\in \Pi_{\mathcal{W}_i} \setmid c\in \mathcal{C}\setminus \left\{\discandi\right\}, \pi\left(c\right)<\pi\left(\discandi\right),%
c\lhd_i \discandi\right\}~\text{and}\]
\[\Pi_{\mathcal{W}_i}^R=\{\pi\in \Pi_{\mathcal{W}_i} \setmid c\in \mathcal{C}\setminus \{\discandi\}, \pi(c)<\pi(\discandi), \discandi\lhd_i c\}.\]
For each $i\in k$ and each $X\in \{L, R\}$, let $t_i^X=|\Pi_{\mathcal{W}_i}^X|$ and $(\pi_{(i,1)}^X, \pi_{(i,2)}^X,\dots,\pi_{(i,t_i^X)}^X)$ be
an order of all votes in~$\Pi_{\mathcal{V}_i}^X$
such that $\pi_{(i,x)}^X(\discandi)\leq \pi_{(i,y)}^X(\discandi)$ for all integers~$x$ and~$y$ such that $1\leq x<y\leq t_i^X$.
For each $i\in [k]$, we create two variables denoted by~$x_i^L$ and~$x_i^R$.
In particular,~$x_i^L$ (resp.\ $x_i^R$) indicates how many votes in~$\Pi_{\mathcal{W}_i}^L$ ($\Pi_{\mathcal{W}_i}^R$) are included in the solution.
We define the same piecewise concave functions as in the above algorithm for CCDV. Now, we describe the restrictions.
\begin{itemize}
\item For every $i\in [k]$, we have $0\leq x_i^L\leq t_i^L$ and $0\leq x_i^R\leq t_i^R$.
\item Since we seek a feasible solution of size at most~$\ell$, we have that \[\sum_{i\in [k]}(x_{i}^L+x_i^R)\leq \ell.\]
\item To ensure that~${\discandi}$ is the Condorcet winner in the final election, for each candidate $c\in \mathcal{C}\setminus \{\discandi\}$, we have that
\end{itemize}

\[2\left(N\left(c, \discandi\right)+\sum_{\substack{i\in [k]\\ X\in \{L, R\}}}f_{i,X,c}\left(x_i^X\right)\right)<%
\left |\Pi_{\mathcal{V}}\right |+\sum_{\substack{i\in [k]\\ X\in \{L, R\}}} x_i^X.\]
In the above inequality,~$N(c, p)$ is calculated with respect to~$\Pi_{\mathcal{V}}$.
The above programming can be solved in {\fpt} time with respect to~$k$ by the algorithm studied in~\cite{DBLP:journals/corr/abs-1709-02850} (Theorem~2).
}
\end{proof}
}

Now we consider Condorcet winner in $k$-CP elections.
Yang and Guo~\shortcite{DBLP:journals/jcss/YangG17} proved that CCAV for Condorcet in $3$-peaked elections and CCDV for Condorcet in $4$-peaked elections are {\nph}.
We strengthen their results by showing that CCAV and CCDV for Condorcet are {\nph} in $3$-CP elections, a subclass of $3$-peaked elections.

\onlyfull{
\EP
{Multicolored Exact Intersecting (MEI)}
{A set $N=\{1, 2,\dots, n\}$,~$k$ sets $I_1, I_2, \dots, I_k$ of intervals over~$N$, $k$ non-negative integers $\ell_1, \ell_2, \dots, \ell_k$ summing up to $\ell$,
and a function $b: N\rightarrow \mathbb{N}_{\geq 0}$.}
{Is there a set $I$ of exactly $\ell$ intervals over $N$ such that for every $I_j$ where $j\in [k]$ it holds that $|I\cap I_j|=\ell_j$ and,
moreover, for every $j\in [n]$, there are at most $b(j)$ intervals in $I$ which intersects $j$?}

The MEI problem is a generalization of the {\prob{Independent Set}} problem restricted to interval graphs which is well-known to be polynomial-time solvable~\cite{}.
We show that this generalization can be also solved in polynomial time via a similar algorithm for {\sc{Independent Set}} problem restricted to interval graphs in~\cite{}.

\begin{theorem}
\label{MEI}
The MEI problem is polynomial-time solvable.
\end{theorem}
}

\begin{theorem}
\label{thm-CCAV-CCDV-Condorcet-3-CP-NP-hard}
CCAV and CCDV for Condorcet in $3$-CP elections are {\nph}.
\end{theorem}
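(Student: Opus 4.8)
The plan is to reduce from RX3C, as in the proof of Theorem~\ref{thm_ccav_r_approval_3_cp_nph}, and to build essentially one election that serves (after a small change in the votes) for both CCAV and CCDV. Recall that under the Condorcet correspondence $p$ uniquely wins if and only if $p$ is the \emph{strict} Condorcet winner, i.e.\ $p$ beats every other candidate; so the entire task is to arrange the pairwise majority margins so that ``$p$ beats everyone in the final election'' is equivalent to the existence of an exact cover. The guiding structural observation is that if we place $p$ together with a handful of auxiliary candidates $q_1,\dots,q_4$ into one part $C_3$ of the $3$-partition, then every comparison between $p$ and a candidate of $C_1\cup C_2$ is \emph{across parts}. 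Across parts the relative order of candidates inside a vote is unconstrained, so such margins can be shaped freely; the only residual restriction stems from single-peakedness \emph{within} each part, which (cf.\ Observation~\ref{obs-r-approval-candidates-consecutive-single-peaked}) forces the $C_1$-candidates that a given vote ranks above $p$ to form a contiguous interval on the axis~$\lhd_1$ of~$C_1$, and similarly for~$C_2$.

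The central obstacle is precisely this contiguity: a single control vote can only improve (for CCAV) or worsen (for CCDV) $p$'s margin against a contiguous block of each part, whereas the three elements of an RX3C triple $s=\{\xse_x,\xse_y,\xse_z\}$ need not be consecutive under any fixed axis. I would resolve it exactly as Theorem~\ref{thm_ccav_r_approval_3_cp_nph} does, via per-element and per-incidence copies: element-copy candidates living in $C_1$ with $\lhd_1$ grouping the copies of each element consecutively, and incidence candidates $\xse_x(s),\xse_y(s),\xse_z(s)$ living in $C_2$ with $\lhd_2$ grouping, for each set~$s$, its three incidence candidates consecutively. Then the vote associated with a set~$s$ touches only the $s$-block on $\lhd_2$ and the relevant element blocks on $\lhd_1$, so by Lemma~\ref{lem-single-peaked-consecutive} every vote I construct can be completed to a genuine linear order whose restriction to each $C_i$ is single-peaked with respect to $\lhd_i$. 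Candidate $p$ and the $q_i$'s occupy $C_3$ with axis $(p,q_1,q_2,q_3,q_4)$, so their intra-part comparisons are trivially single-peaked.

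For the margin bookkeeping I would calibrate the registered votes so that, before any control, $p$ fails to strictly beat the gadget candidates (hence is at best a weak, not a strict, Condorcet winner and does not uniquely win), while $p$ is given leads against the $q_i$'s and the remaining candidates large enough that the unavoidable side effects of the control moves cannot flip those comparisons. Each set's control move then nets out to swinging $p$'s comparison against that set's element representatives in the desired direction: for CCAV the unregistered pool contains, for each set~$s$, a vote placing $p$ just above $s$'s incidence/element block; for CCDV the registered votes contain, for each set, a deletable block whose removal produces the same swing. The budget and the $q_i$-thresholds are chosen so that a feasible solution must use the full budget and must repair $p$'s comparison against every element's gadget; since $\kappa$ sets cover $3\kappa$ element-slots among $3\kappa$ elements, ``every element repaired'' forces the chosen sets to be pairwise disjoint, i.e.\ an exact cover, and the $q_i$'s guarantee that under- or mis-spending leaves some candidate tying or beating $p$. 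The two directions are symmetric, the CCDV counting being the standard ``kept $=$ exact cover'' version (e.g.\ delete the blocks of the $2\kappa$ unchosen sets).

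I expect the main difficulty to be the simultaneous bookkeeping: one must (i) link the per-set incidence candidates of a common element so that covering it more than once is either impossible within budget or harmless, (ii) verify that each control move alters only the intended cross-part margins and never flips a comparison between two gadget candidates, or between $p$ and some $q_i$, in the wrong direction, and (iii) confirm that every constructed vote really is single-peaked inside each of $C_1,C_2,C_3$ under the chosen axes. Reconciling (i)--(iii) with a \emph{tight} budget---so that $p$ becomes the strict Condorcet winner exactly when an exact cover exists---is the delicate heart of the argument; the contiguity-respecting per-element and per-set groupings on $\lhd_1$ and $\lhd_2$, together with Lemma~\ref{lem-single-peaked-consecutive}, are precisely what allow (iii) to coexist with the combinatorial encoding in (i)--(ii).
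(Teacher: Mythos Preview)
Your approach is essentially the paper's: reduce from RX3C with one element candidate per $c_x$ in $C_1$, incidence triples $c_x(s),c_y(s),c_z(s)$ in $C_2$, and $p$ in $C_3$, using four votes per set $s$ (one vote $\pi_s$ with the $s$-triple just above $p$, plus three votes $\pi_s^x,\pi_s^y,\pi_s^z$ each putting a single incidence candidate and a single element candidate above $p$) so that the above-$p$ portion in each part is trivially contiguous. The paper's construction is leaner than your sketch---$C_3=\{p\}$ alone suffices for CCAV (no $q_i$'s are needed, since the tight count $\ell=5\kappa$ against $5\kappa-3$ registered anti-$p$ votes already forces exactly $\ell$ additions and at most one added vote ranking any given $C_1$- or $C_2$-candidate above $p$), and one candidate per element is enough in $C_1$; for CCDV a single auxiliary $q$ is added to $C_3$ whose sole role is to make deleting any $p$-on-top vote fatal.
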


{
\begin{proof}
We reduce the RX3C problem to CCAV and CCDV. Let $(U=\{\xse_1, \dots, \xse_{3\xsize}\}, S=\{\xce_1,\dots, \xce_{3\xsize}\})$ be an RX3C instance.
Consider first CCAV\@.
The components of the CCAV instance are as follows.

{\bf{Candidates~$\mathcal{C}$.}} We create in total $12\kappa$+1 candidates. In particular, for each $\xse_x\in \xs$, we create one candidate~$\xse_x'$.
Let \[C_1=\{\xse_x' \setmid \xse_x\in \xs\}.\]
In addition, for each $s=\{\xse_x,\xse_y,\xse_z\}\in \xc$, we create three candidates~$\xse_x(s)$,~$\xse_y(s)$,~$\xse_z(s)$ corresponding to~$\xse_x$,~$\xse_y$,~$\xse_z$, respectively.
Let~$C_2$ be the set of candidates corresponding to the $3$-subsets in~$S$.
Finally, we create a distinguished candidate~${\discandi}$. Let $C_3=\{\discandi\}$.

Let $\lhd_1=(\xse_1',\dots,\xse_2')$ be the order of~$C_1$ according to the indices of the candidates.
Moreover, let~$\lhd_2$ be any arbitrary order of~$C_2$ such that for each~$s\in \xc$ the three candidates corresponding to~$s$ are ordered consecutively.

{\bf{Registered Votes~$\Pi_{\mathcal{V}}$}.} We create in total $5\kappa-3$ votes, each of which ranks~${\discandi}$ in the last position.
The positions of other candidates in a vote are set in a way so that the vote restricted to~$C_1$ and~$C_2$ is single-peaked with respect to~$\lhd_1$ and~$\lhd_2$, respectively.

{\bf{Unregistered Votes~$\Pi_{\mathcal{W}}$}.} The unregistered votes are created according to~$S$.
In particular, for each $s=\{\xse_x, \xse_y, \xse_z\}\in \xc$, we create four votes~$\pi_{\xce}$,~$\pi_{\xce}^x$,~$\pi_{\xce}^y$, and~$\pi_{\xce}^z$ such that:
\begin{itemize}
\item $\pi_{\xce}\left(\left\{\xse_x(s), \xse_y(s), \xse_z(s)\right\}\right)=\{1,2,3\}$, $\pi_{\xce}(\discandi)=4$;
\item $\pi_{\xce}^x\left(\xse_x(s)\right)=1$, $\pi_{\xce}^x(\xse_x')=2$, $\pi_{\xce}^x(\discandi)=3$;
\item $\pi_{\xce}^y(\xse_y(s))=1$, $\pi_{\xce}^y(\xse_y')=2$, $\pi_{\xce}^y(\discandi)=3$; and
\item $\pi_{\xce}^z(\xse_z(s))=1$, $\pi_{\xce}^z(\xse_z')=2$, $\pi_{\xce}^z(\discandi)=3$.
\end{itemize}
The exact positions of~$\xse_x(s)$,~$\xse_y(s)$,~$\xse_z(s)$ in~$\pi_{\xce}$,
and the positions of the remaining candidates in each of the above four votes are set in a way so that the votes restricted to~$\lhd_1$ and~$\lhd_2$ are single-peaked.
Let $\Pi_S=\{\pi_{\xce} \setmid s\in \xc\}$ and $\Pi^U=\{\pi_{\xce}^x \setmid s\in \xc, \xse_x\in s\}$.

Finally, we set $\ell=5\kappa$, i.e., we are allowed to add at most~$5\kappa$ votes from~$\Pi_{\mathcal{W}}$.
{The above construction clearly takes polynomial time.} It remains to prove the correctness.

$(\Rightarrow)$ Assume that $S'\subseteq S$ is an exact set cover of~$U$. Consider the election after adding the following $5\kappa$ votes:
\begin{enumerate}
\item All~$2\kappa$ votes~$\pi_{\xce}$ such that $s\not\in \xc'$;
\item For each $s=\{\xse_x, \xse_y, \xse_z\}\in \xc'$, all the three votes~$\pi_{\xce}^x$,~$\pi_{\xce}^y$, and~$\pi_{\xce}^z$.
\end{enumerate}
Clearly, the final election has in total $10\kappa-3$ votes. Moreover, for each $c\in C_2$ at most one of the above added~$5\xsize$ votes ranks~$c$ above~${\discandi}$.
As a result, there are at most $(5\kappa-3)+1=5\kappa-2$ votes ranking~$c$ above~${\discandi}$, implying that~${\discandi}$ beats every candidate in~$C_2$ in the final election.
As~$S'$ is an exact set cover, for each~$\xse_x'$, among the~$5\xsize$ added votes only the vote~$\pi_{\xce}$, corresponding to $\xce\in \xc'$ such that $\xse_x\in s$, 
ranks~$\xse_x'$ above~${\discandi}$.
Analogous to the above analysis, we know that~${\discandi}$ beats every candidate in~$C_1$ in the final election. In summary,~${\discandi}$ becomes the Condorcet winner after adding the above~$5\kappa$ votes.

$(\Leftarrow)$ Assume that $\Pi_{W}\subseteq \Pi_{\mathcal{W}}$ such that $|\Pi_W|\leq \ell=5\kappa$ and~${\discandi}$ becomes the Condorcet winner after adding all votes in~$\Pi_W$. 
As~$p$ is not the Condorcet winner with respect to the registered vote constructed above, it holds that $\abs{\Pi_W}\geq 1$. 
Then, we can observe that $|\Pi_W|=5\kappa$ must hold, since otherwise there is at least one candidate which ranked above~${\discandi}$ in at least $(5\kappa-3)+1=5\xsize-2$ 
and hence is not beaten by~${\discandi}$ in the final election. Moreover,~$\Pi_W$ contains exactly~$2\kappa$ votes in~$\Pi_S$. The reason is as follows.
If~$\Pi_W$ contains less than $2\kappa$ in~$\Pi_S$, then~$\Pi_W$ contains more than~$3\xsize$ votes from~$\Pi^U$.
This implies that there are two votes in $\Pi_W\cap \Pi^U$ both of which rank a common candidate $c\in C_2$ above~${\discandi}$,
leading to~$c$ not being beaten by~${\discandi}$ in the final election.
On the other hand, if~$\Pi_W$ contains some vote $\pi_{\xce}\in \Pi_S$ where $s=\{\xse_x, \xse_y, \xse_z\}$,
then none of~$\pi_{\xce}^x$,~$\pi_{\xce}^y$, and~$\pi_{\xce}^z$ can be included in~$\Pi_W$, since otherwise due to the construction of the votes,
one of~$\xse_x(s)$,~$\xse_y(s)$,~$\xse_z(s)$ is not beaten by~${\discandi}$ in the final election.
Hence, if~$\Pi_W$ contains~$t$ votes in~$\Pi_S$, then $|\Pi_W|\leq t+3(3\xsize-t)=9\kappa-2t$, which is strictly smaller than~$5\kappa$ if $t>2\kappa$, a contradiction. 
Let $S'=\{s\in \xc \setmid \pi_{\xce}\not\in \Pi_W\}$. Due to the above analysis, it holds that
\[|S'|=3\xsize-|\Pi_W\cap \Pi_S|=3\xsize-2\kappa=\kappa.\]
Moreover, for each~$\pi_{\xce}$ where $s\in \xc'$ and $s=\{\xse_x, \xse_y, \xse_z\}$, all three votes~$\pi_{\xce}^x$,~$\pi_{\xce}^y$,~$\pi_{\xce}^z$ are in~$\Pi_W$
(otherwise~$\Pi_W$ contains less than~$5\kappa$ votes).
As for each candidate~$c\in C_1$ there can be at most one vote in~$\Pi_W$ ranking~$c$ above~${\discandi}$, it follows that~$S'$ is an exact set cover.

{
Consider now the reduction for CCDV for Condorcet in $3$-CP elections.
We first create the same candidates as in the above reduction for CCAV, and then we create one more candidate~$q$ in~$C_3$.
Hence, we have $C_3=\{\discandi, q\}$ now. Let~$\lhd_1$ and~$\lhd_2$ be defined as above. Concerning the votes, we adopt all~$12\kappa$ votes in~$\Pi_S$ constructed above,
with the candidate~$q$ being ranked immediately above~${\discandi}$ (hence, if~$\pi(\discandi)=t$ in a vote~$\pi$ in advance, we have now $\pi(q)=t$ and~$\pi(\discandi)=t+1$).
In addition, we create a multiset of two votes such that~${\discandi}$ and~$q$ are ranked in the $1$st and $2$nd positions, respectively.
Finally, we create a multiset of $5\kappa-1$ votes such that~${\discandi}$ and~$q$ are ranked in the second-last and the last positions, respectively.
Let~$\Pi$ be the multiset of the above~$5\kappa+1$ votes.
The positions of all candidates other than~$\discandi$ and~$q$ in each of~$\Pi$ are set in a way so that this vote restricted to~$\lhd_1$ and~$\lhd_2$ is single-peaked.
In total, we have~$17\kappa+1$ votes. We set~$\ell=7\kappa$, i.e., we are allowed to delete at most~$7\kappa$ votes.
Clearly, the construction can be done in polynomial-time.
Utilizing similar arguments as in the above proof for the correctness of the reduction for CCAV,
we can show that there is an exact set cover of~$U$ if and only if the CCDV instance has a solution of size~$7\kappa$. Precisely, let $S'\subseteq S$ be an exact set cover, then
$\Pi_V=\{\pi_{\xce} \setmid s\in \xc'\}\cup \{\pi_{\xce}^x \setmid s\in \xc\setminus S', \xse_x\in s\}$ is a solution.
One can check that after the deletion of all votes in~$\Pi_V$, for every candidate $c\in \mathcal{C}\setminus \{\discandi\}$,
there are exactly~$5\kappa$ votes ranking~$c$ above~${\discandi}$.
As there remain $17\kappa+1-7\kappa=10\kappa+1$ votes in total,~${\discandi}$ becomes the Condorcet winner.
A significant observation for the proof of the other direction is that any optimal solution of the CCDV instance is disjoint with~$\Pi$, since otherwise~$q$ would beat~${\discandi}$.
Analogous to the above proof for CCAV, we can first show that any solution~$\Pi_V$ contains exactly~$\kappa$ votes in~$\Pi_S$
and~$6\kappa$ votes in~$\Pi^U$, where $\Pi^U=\{\pi_{\xce}^x \setmid s\in \xc, \xse_x\in s\}$.
Then, we can show that $S'=\{s\in \xc \setmid \pi_{\xce}\in \Pi_V\}$ is an exact set cover of~$U$. }
\end{proof}
}

\onlyfull{Yang and Guo~\shortcite{Yangaamas14a} also proved that CCAV and CCDV for Condorcet in elections with single-peaked width~$k$ are {\fpt} with respect to~$k$.
Erd\'{e}lyi, Lackner, and Pfandler~\shortcite{Erdelyi2017} proved that every election with single-peaked width~$k$ is a $k'$-CP election for some $k'\leq k$,
but constructed a $2$-CP election whose single-peaked width is not a constant.
In the proof of the above theorem, we constructed a $3$-CP election whose single-peaked width is not a constant.}

\onlyfull{
Now we discuss the polynomial-time algorithm for CCAV and CCDV for Copeland$^{\alpha}$ in single-peaked elections.
Let $(\mathcal{C},\Pi_{\mathcal{V}})$ be a single peaked election and~$\lhd$ be an axis of it.
For a linear order~$\pi$ over~$m$ elements and a positive integer $i\leq m$, we denote by~$\pi[i]$ the $i$-th element of~$\pi$.
Hence, for a candidate~$c$ in a vote~$\pi$ (resp.\ $\lhd$) it holds that $\pi[\pi(c)]=c$ (resp.\ $\lhd[\lhd(c)]=c$).
In addition, let $(\pi_1,\dots,\pi_n)$ be an order over~$\Pi_{\mathcal{V}}$ according to the positions of the votes' peaks.
Precisely, for every positive integer $x\leq n-1$ it holds that either $\pi_x[1]=\pi_{x+1}[1]$ or $\pi_x[1] \lhd \pi_{x+1}[1]$,
i.e., the peak of~$\pi_{x+1}$ is either the same as~$\pi_x$ or on the right side of the peak of~$\pi_{x}$. We call the candidates $\pi_{\lfloor n/2\rfloor}[1]$, $\pi_{\lceil n/2\rceil}[1]$
and all candidates between them in~$\lhd$ the {\it{median candidates}}.
For two integers~$x$ and~$y$ such that $1\leq x$ and $y\leq |\mathcal{C}|$, let \[\lhd[x,y]=\{c\in \mathcal{C} \setmid \min\{x,y\} \leq \lhd(c) \leq \max\{x,y\}\}.\]
That is, $\lhd[x,y]$ consists of the $x$-th and $y$-th candidates and all candidates between them in~$\lhd$. Let~$l$
be the integer such that $\lhd[l]=\pi_{\left\lfloor n/2\right\rfloor}[1]$ and~$r$ the integer such that $\lhd[r]=\pi_{\left\lceil n/2 \right\rceil}[1]$, i.e.,~$l$ and~$r$
are the positions of the peaks of the median votes in~$\lhd$. Hence~$\lhd[l,r]$ is the set of all median candidates.

The following lemma is due to Yang~\cite{AAMAS15Yangmanipulationspwidth}

\begin{lemma}[\textnormal{Lemma~7 in}~\cite{AAMAS15Yangmanipulationspwidth}]
\label{lem-median}
Let $(\mathcal{C}, \Pi_{\mathcal{V}})$ be a single-peaked election, then every Copeland$^{\alpha}$ winner where $0\leq \alpha\leq 1$ is a median candidate.
\end{lemma}

Now we study a property of Copeland$^{\alpha}$ scores in single-peaked elections. It is fairly easy to check that every median candidate ties with other median candidates.
In fact, a median candidate may also tie with some non-median candidate, and this is the reason why Copeland$^{\alpha}$, $0\leq \alpha<1$, is not weakCondorcet consistent.
Nevertheless, we show that the candidates who are tied with a median candidate~$c$ lie consecutively along~$\lhd$. The following lemma formally summarizes the observation.

\begin{lemma}
\label{lem-median-b}
Let $(\mathcal{C}, \Pi_{\mathcal{V}})$ be a single-peaked election and $m=|\mathcal{C}|$. If there are more than two median candidates, then for every median candidate~$c$
there are two integers~$e_c^L$ and~$e_c^R$ such that $e_c^L\in \{1,2,\dots,l\}$, $e_c^R\in \{r, r+1,\dots,m\}$, and~$c$ ties with all candidates in
$\lhd[e_c^L,e_c^R]\setminus \{c\}$ and beats all other candidates except~$c$. Moreover,
for every $c'\in \lhd[l,\lhd(c)-1]$ it holds that $e_{c'}^R\geq e_c^R$ and $e_{c'}^L\geq e_c^L$.
\end{lemma}

\begin{proof}[Proof (sketch)]
Let~$c$ median candidate. Observe that~$c$ ties with a candidate~$b=\lhd[r']$ for some $r'>r$ if and only if all votes whose peaks are between~$c$ and~$b$ rank~$b$ above~$c$.
Due to single-peakedness, these votes also rank every candidate in~$\lhd[r,r']$ above~$c$. Hence, if~$c$ ties~$b$ then~$c$ ties everyone in~$\lhd[r,r']$.  which lies on the left side of
\end{proof}

\begin{figure}
\begin{center}
\includegraphics[width=0.45\textwidth]{fig-illustration-copeland-score-median-candidates.pdf}
%
\end{center}
\caption{Illustration of Lemma~\ref{lem-median-b}.
}
\label{fig-illustration-median}
\end{figure}

We refer to~\myfig{fig-illustration-median} for an illustration of the above lemma.
Note that if a single peaked election has only one median candidate, then this median candidate is in fact the Condorcet winner.
According to the above lemma, the Copeland$^{\alpha}$ score of a median candidate is determined by the values of~$e_c^L$ and~$e_c^R$.
Precisely, the Copeland$^{\alpha}$ score of~$c$ is exactly
\[\alpha\cdot \left(e_c^R-e_c^L\right)+\left(m-\left(e_c^R-e_c^L+1\right)\right).\] This observation directly leads to the following lemma.
For each median candidate~$c$, let~$A_c$ be the set of all median candidates~$c'$ such that $e_c^R-e_c^L=e_{c'}^R-e_{c'}^L$.

\begin{lemma}
\label{lem-score-charaterization}
If for all median candidates~$c$ it holds that~$e_c\geq r$ (resp.\ $e_c\leq l$), then the Copeland$^{\alpha}$ winning set is $A_{\lhd[r]}$ (resp.\ $A_{\lhd[l]}$).
Otherwise the winning set is as follows. Let~$c$ and~$c'$ be the two consecutive median candidates such that $e_c\in \{r,\dots, m\}$ and $e_{c'}\in \{1,\dots, l\}$. Then,
\begin{itemize}
\item if $e_c-r > l-e_{c'}$, the winning set is~$A_{c'}$;
\item if $e_c-r < l-e_{c'}$, the winning set is~$A_c$;
\item if $e_c-r = l-e_{c'}$, the winning set is~$A_c\cup A_{c'}$.
\end{itemize}
\end{lemma}

It directly follows from the above lemma that Copeland$^{\alpha}$ winners are from median candidates (but not necessarily all of them).
Armed with the above lemmas, we are able to give our result concerning Copeland$^{\alpha}$ in single-peaked elections.
In general, according to Lemma~\ref{lem-score-charaterization} we guess the values of~$l$,~$r$,
and~$e{_\discandi}$. And if~${\discandi}$ is not the left-most or right-most median candidate (according to the guess of~$l$ and~$r$), we also guess~$e_q$ where~$q$ is the direct neighbor
(whether it is the one on the left side or the one on the right side of~${\discandi}$ depends on the guess of~$e{_\discandi}$) of~${\discandi}$ which is also a median candidate.
Then, we need only focus on at most~$6$ candidates:~$\discandi$,~$q$, the two candidates at positions~$e{_\discandi}$ and~$(e{_\discandi}+1)$/$(e{_\discandi}-1)$
(depends on whether the guessed value of~$e{_\discandi}$ is at least~$r$ or at most~$l$), and the two candidates at the positions~$e_q$ and $(e_q-1)$/$(e_q+1)$ in~$\lhd$.
Based on this, we partition all votes into a constant number of subsets each of which have the same impact on the solution
(e.g., all votes in a subset have the same preference over the above discussed candidates and the peaks of them are
either all on the left-side of~$\lhd[r]$ or all on the right-side of~$\lhd[l]$).
Note that according to our guess, all votes whose peaks are between~$\lhd[l]$ and~$\lhd[r]$ are removed and~$\ell$ should be updated accordingly.
Then, a brute-force algorithm or an ILP formulation (assigning to each subset of votes a variable) can solve it in polynomial time.

\begin{theorem}
CCAV and CCDV for Copeland$^{\alpha}$ where $0\leq \alpha<1$ are polynomial-time solvable when restricted to single-peaked elections.
\end{theorem}

\begin{proof}[Proof for CCDV]
Let $I=((\mathcal{C},\Pi_{\mathcal{V}}), p\in \mathcal{C}, \ell)$ be an instance of CCDV\@. We give an algorithm as follows.
Due to Lemma~\ref{lem-median} the problem is equivalent to deleting at most~$\ell$ votes so that~${\discandi}$ is
a separating median candidate and has Copeland$^{\alpha}$ score strictly large than any other median candidate.
For each candidate~$c$, let~$\Pi(c)$ be the multiset of votes ranking~$c$ in the top, i.e., $\Pi(c)=\{\pi\in \Pi_{\mathcal{V}} \setmid \pi(c)=1\}$.
Moreover, let~$\Pi(c,L)$ (resp.\ $\Pi(c, R)$) be the multiset of votes whose peaks are on the left-side (reap.\ right-side)
of~$c$, i.e., $\pi(c,L)=\{\pi\in \Pi_{\mathcal{V}} \setmid \pi[1]\lhd c\}$ (resp.\ $\pi(c,R)=\{\pi\in \Pi_{\mathcal{V}} \setmid c\lhd \pi[1]\}$).

First, the algorithm guesses the values of~$l$ and~$r$ (it may be that $l=r$), whether~${\discandi}$ is a left-separating candidate or a right-separating candidate,
and the value~$e{_\discandi}$.

Consider first the guess where $l=r$. In this case, we direct discard the guess if $p\neq \lhd[l]$ or~${\discandi}$ is not the peak of any vote.
Otherwise, we need only to check if we can delete at most~$\ell$ votes in~$\Pi(\discandi)$ so that in the resulting election $|\Pi(p,L)|=|\Pi(p,R)|$
(i.e., to ensure that~${\discandi}$ is the only median candidate). If this is the case,~$I$ is a {\yesins}. This can be checked in polynomial time.

Due to the above discussion, we focus now on the guesses where $l<r$. We distinguish between the following cases.
\smallskip

{\bf{Case~1.}} $p=\lhd[r]$.

Let~$q$ be the candidate lying immediately on the left side of~${\discandi}$ in~$\lhd$. If $e{_\discandi}\neq r$, we direct discard the guess. Otherwise, it must be that $e{_\discandi}>r$.

we delete at most~$\ell$ votes so that $e_c\in \{r,\dots, m\}$ for all candidates in~$\lhd[r]$and the value of~$e_q$, where~$q$ is the other separating candidate.
Note that it may be that $q=\discandi$.
In the case where $q\neq p$, if~${\discandi}$ is a left-separating (resp.\ right-separating) candidate then~$q$ is the candidate
immediately on the right-side (resp.\ left-side) of~${\discandi}$.
Clearly, there are polynomially many guesses. Fix a combination of the above guesses, we proceed as follows.
Assume that the current guess is that~${\discandi}$ is a left-separating candidate. Hence,~$q$ is the right-separating candidate.
Let~$c$ be the median candidate immediately on the left side of~${\discandi}$ in~$\lhd$, if such a candidate exists. The main step of the algorithm is as follows,

\begin{enumerate}
\item If $e{_\discandi}-r\geq l-e_q$, we directly discard this subcase.

\item We remove all votes whose peak is some candidates in between~$\lhd[l]$ and~$\lhd[r]$, and update~$\ell$ accordingly.
If $\ell<0$ after the updating we directly discard the current guess and proceed to the next one.

\item Let $\Pi^L$ (resp.\ $\Pi^R$) be the multiset of votes whose peas are on the left side of~$\lhd[r]$ (resp.\ right side of~$\lhd[l]$).
Observe that if there is a vote in~$\Pi^L$ which prefers~${\discandi}$ to~$c{_\discandi}$ and~$\lhd[l,r]$ is the set of median candidates, then~${\discandi}$ must beats~$c{_\discandi}$.
Therefore, we remove all such votes in~$\Pi^L$ and update~$\ell$ accordingly. Similarly, we remove all votes in~$\Pi^R$ which prefers~$q$ to~$c_q$, and update~$\ell$.
If~$\ell<0$, we directly discard this subcase.

\item Let $c{_\discandi}=\lhd[e{_\discandi}]$, i.e.,~$c{_\discandi}$ is the candidate at the position~$e{_\discandi}$ in~$\lhd$.
In addition, if~$r<m$, let $c{_\discandi}'=\lhd[e{_\discandi}+1]$.
We need to ensure that~${\discandi}$ ties with~$c{_\discandi}$ and beats~$c_{\discandi}'$ (if~$c_{\discandi}'$ exists).
Symmetrically, for~$q$ we let $c_q=\lhd[e_q]$. And if $e_q>1$ let $c_{q}'=\lhd[e_q-1]$.
We further partition~$\Pi^L$ (resp.\ $\Pi^R$) according to the preferences over~$\discandi$,~$c{_\discandi}$,~$c{_\discandi}'$ (resp.\~$q$,~$c_q$,~$c_{q'}$).
Precisely, let $n^L=|\Pi^L|$ and $n_R=|\Pi^R|$. Let $\left(\Pi^L_1, \Pi^L_2, \Pi^L_3\right)$ be a partition of~$\Pi^L$ such that
\begin{itemize}
\item All votes in~$\Pi^L_1$ have preference $c{_\discandi}\succ {\discandi}\succ c{_\discandi}'$;
\item All votes in~$\Pi^L_2$ have preference $c{_\discandi}\succ c{_\discandi}'\succ \discandi$; and
\item All votes in~$\Pi^L_3$ have preference $c{_\discandi}'\succ c{_\discandi}\succ \discandi$.
\end{itemize}
For each $x\in \{1,2,3\}$, let~$n^L_x$ be the cardinality of~$\Pi^L_x$.
Due to the single-peakedness there are no other types of votes in~$\Pi^L$. We partition~$\Pi^R$ similarly (replace~$L$ with~$R$ and~${\discandi}$ with~$q$).

We enumerate all non-negative integers~$\ell^L_1$,~$\ell^L_2$,~$\ell^L_3$,~$ell^R_1$,~$\ell^R_2$,~$\ell^R_3$ such that the following conditions hold.
Each integer indicates how many votes from the corresponding votes are deleted. First, as we are allowed to delete at most~$\ell$ votes,
it holds that the sum of these integers should be no greater than~$\ell$. Second, in order to ensure that~$\lhd[l,r]$ include exactly the median candidates, we have that
\[n^L-\ell^L_1-\ell^L_2-\ell^L_3=n^R-\ell^R_1-\ell^R_2-\ell^R_3.\]
To ensure that~${\discandi}$ beats~$c{_\discandi}'$, at least one vote in~$\Pi^L_1$ must be remained in the final election. Hence, we have $n^L_1-\ell^L_1>0$.
Similarly, we have $n^R_1-\ell^R_1>0$. All integers such that the above conditions are satisfied can be checked in polynomial-time.
If there is such a combination, the instance is a {\yesins}; otherwise we discard the subcase.
\end{enumerate}

\begin{itemize}
\item $c=p$
\end{itemize}

It remains to show that given the guesses of~$l$,~$r$,~$c$, and~$e_{c'}$ for all candidates in~$\lhd[l, r]$ (median candidates),
how could we delete at most~$\ell$ votes so that the result election coincide with the guesses. We do so by giving an integer linear programming (ILP) formulation with the n
\end{proof}
}

Now we discuss CCAV and CCDV for Copeland$^{\alpha}$, where $0\leq \alpha\leq 1$, and Maximin in $k$-axes elections for small values of~$k$.
In a sharp contrast to the fixed-parameter tractability of CCAV and CCDV for Condorcet in $k$-axes elections, the same problems for both Copeland$^{\alpha}$ and
Maximin are {\nph} even for $k=2$. In particular, Yang and Guo~\shortcite{Yang2014} and Yang~\shortcite{DBLP:phd/dnb/Yang15} established reductions from the X3C problem to CCAV and CCDV for
Copeland$^{\alpha}$, $0\leq \alpha< 1$, in elections with single-peaked width~$2$. It turned out
that the elections constructed in their proofs are $2$-axes single-peaked, as shown in the proof of the following theorem. 
In addition, for CCAV and CCDV for Copeland$^1$ and Maximin, Yang and Guo~\shortcite{Yang2014} and Yang~\cite{DBLP:phd/dnb/Yang15}
proved they are {\nph} in elections with single-peaked~$3$, which are again turned out to be $2$-axes single-peaked. 
Nevertheless, for these two rules, we provide new reductions because of the following reasons. 
First, compared with the reductions in~\cite{Yang2014,DBLP:phd/dnb/Yang15}, the new reductions are simpler with less candidates, votes, and types of votes. The simplicity is not purely because that we use a reduction from the restricted version of the X3C problem. In fact, some proofs in~\cite{Yang2014,DBLP:phd/dnb/Yang15} heavily rely on the assumption that every element in the universe~$\xs$ occurs in an even number of $3$-subsets in the given collection~$\xc$ which is obviously not fulfilled in any RX3C instance. Second, our reductions for CCAV and CCDV are unified reductions in the sense that they apply to both Copeland$^1$ and Maximin, but in~\cite{Yang2014,DBLP:phd/dnb/Yang15} there are separate reductions for 
Copeland$^1$ and Maximin.

A general explanation of the complexity difference of Condorcet, Maximin, and Copeland$^{\alpha}$, $0\leq \alpha< 1$, 
is that to make the distinguished candidate~${\discandi}$ the Condorcet winner, we need only to focus on
the comparisons between~${\discandi}$ and every other candidate. \onlyfull{Hence, in each vote, only the set of candidates ranking above~${\discandi}$ matters;
how the candidates in $\mathcal{C}\setminus \discandi$ are ranked do no play any role in the algorithm.} In other words, if two votes rank the same set of candidates above~${\discandi}$,
they have the same impact on the solution. However, in Copeland$^{\alpha}$ and Maximin this does not hold.

\begin{theorem}
\label{thm_CCAV_CCDV_Copeland_k_additional_Axis_NP_hard}
CCAV and CCDV for Copeland$^{\alpha}, 0\leq \alpha\leq 1$, and Maximin in $2$-axes elections are {\nph}.
\end{theorem}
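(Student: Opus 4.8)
The plan is to split the statement into two parts: the case $0\le\alpha<1$ of Copeland$^{\alpha}$ is handled by importing existing reductions, while Copeland$^1$ and Maximin are handled together by one new reduction from RX3C that serves both rules and both CCAV and CCDV. For Copeland$^{\alpha}$ with $0\le\alpha<1$ I would not reprove hardness from scratch but invoke the reductions from X3C to CCAV and CCDV in elections of single-peaked width~$2$ due to Yang and Guo~\cite{Yang2014} and Yang~\cite{DBLP:phd/dnb/Yang15}. The only remaining task is to verify that the particular elections those reductions output are $2$-axes single-peaked, which I would do by direct inspection of the constructed votes: they fall into two single-peaked classes, and exhibiting one axis per class (placing the two candidates of each width-$2$ clone consecutively and orienting them to match the class) witnesses the $2$-axes property.

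For Copeland$^1$ and Maximin I would give a single reduction from RX3C, mirroring the architecture of the proof of Theorem~\ref{thm_ccdv_r_approval_2_cp_np_hard}. Starting from an RX3C instance $(\xs,\xc)$, I create one element-candidate $\xse_x'$ for each $\xse_x\in\xs$, three set-candidates $\xse_x(s),\xse_y(s),\xse_z(s)$ for each $s=\{\xse_x,\xse_y,\xse_z\}\in\xc$, the distinguished candidate $\discandi$, and a constant number of auxiliary candidates used only to calibrate the baseline pairwise margins (for CCDV, as in the proof of Theorem~\ref{thm-CCAV-CCDV-Condorcet-3-CP-NP-hard}, one extra candidate $q$ ranked just above $\discandi$ forces every optimal deletion to avoid the calibration votes). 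The registered votes are chosen so that initially $\discandi$ loses or ties against every element-candidate by a uniform margin, and the unregistered votes come in the now-familiar families: one vote $\pi_s$ per set together with three linking votes $\pi_s^x,\pi_s^y,\pi_s^z$, where $\pi_s^x$ pairs the set-candidate $\xse_x(s)$ with its element-candidate $\xse_x'$. The intended correspondence is that adding (resp.\ deleting) the vote family associated with a subcollection $\xc'\subseteq\xc$ makes $\discandi$ the unique winner exactly when $\xc'$ is an exact cover, because then every element-candidate is helped precisely once and all pairwise margins against $\discandi$ move uniformly across the winning threshold.

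The two things that must be packaged carefully are the equivalence and the $2$-axes property. For the equivalence I would argue, exactly as in Theorems~\ref{thm_ccav_r_approval_3_cp_nph} and~\ref{thm-CCAV-CCDV-Condorcet-3-CP-NP-hard}, that a feasible solution contains exactly the prescribed number of $\pi_s$ votes ($2\kappa$ when adding, $\kappa$ when deleting) and that the four votes $\pi_s,\pi_s^x,\pi_s^y,\pi_s^z$ attached to a common $s$ cannot be mixed within the solution, which pins the chosen sets down to an exact cover. For the $2$-axes property I would reuse Lemma~\ref{lem-3-regular-bipartite-decomposable} applied to the $3$-regular bipartite graph of the RX3C instance: its two vertex orders $\lhd_1',\lhd_2'$ lift to two axes over $\mathcal C$ (replacing each $\xse_x$ by $\xse_x'$ and each $s$ by its three set-candidates, and inserting $\discandi$ and the auxiliaries at the ends) so that for every linking vote $\pi_s^x$ the pair $\xse_x'$ and $\xse_x(s)$ --- an edge of the graph --- is consecutive on at least one axis; Lemma~\ref{lem-single-peaked-consecutive} then completes each vote into a single-peaked linear order with the intended top candidates.

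The genuinely delicate point, and the one I expect to be the main obstacle, is the simultaneous calibration for Copeland$^1$ and Maximin. These two correspondences aggregate the pairwise-majority information differently --- Copeland$^1$ through the number of beaten and tied opponents, Maximin through a single minimum over opponents --- so the baseline margins and the per-vote swings must be tuned so that both the ``beats every opponent'' threshold (for Copeland$^1$) and the ``strictly largest minimum support'' threshold (for Maximin) flip precisely on the exact-cover condition, and never on a near-cover in which some element is covered twice and another left uncovered. Making one construction serve both rules (and both CCAV and CCDV) is the crux of the argument; by contrast the structural $2$-axes part is a mechanical application of Lemmas~\ref{lem-3-regular-bipartite-decomposable} and~\ref{lem-single-peaked-consecutive}.
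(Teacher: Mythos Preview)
Your plan for Copeland$^{\alpha}$ with $0\le\alpha<1$ matches the paper exactly: invoke the width-$2$ reductions of~\cite{Yang2014,DBLP:phd/dnb/Yang15} and verify by inspection that the constructed votes split into two single-peaked classes.

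For Copeland$^1$ and Maximin, however, your proposal diverges substantially from the paper, and the point you yourself flag as ``the genuinely delicate point'' is in fact an obstacle that the paper's construction sidesteps entirely. You propose to recycle the per-set vote families $\pi_s,\pi_s^x,\pi_s^y,\pi_s^z$ from Theorems~\ref{thm_ccav_r_approval_3_cp_nph} and~\ref{thm-CCAV-CCDV-Condorcet-3-CP-NP-hard} and to obtain the $2$-axes property via Lemma~\ref{lem-3-regular-bipartite-decomposable}. But that lemma, together with Lemma~\ref{lem-single-peaked-consecutive}, only controls the \emph{top block} of each vote; the single-peaked completion of the tail is forced by whichever axis the vote lands on, and different linking votes $\pi_s^x$ land on different axes depending on the edge-partition of the RX3C bipartite graph. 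For $r$-approval and for Condorcet this is harmless, because only approvals or only comparisons with~$p$ matter. For Copeland$^1$ and Maximin, by contrast, the pairwise comparisons among \emph{all} candidates enter the score, and in your architecture those comparisons between element-candidates (and between set-candidates of different sets) become an uncontrolled function of the instance-dependent edge-partition. Calibrating the baseline margins so that both scoring rules flip exactly on exact covers, uniformly over all RX3C instances, is therefore not just delicate but essentially undetermined by your construction.

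The paper avoids all of this with two ideas you are missing. First, it uses a \emph{per-element} gadget rather than a per-set one: for each $\xse_x\in\xs$ it creates three candidates $\xse_x^1,\xse_x^2,\xse_x^3$ (and no set-candidates at all), and each $s\in\xc$ contributes a single vote $\pi_s$ that encodes membership of $\xse_x$ in $s$ by the internal order of $\{\xse_x^1,\xse_x^2,\xse_x^3\}$. Second, the $2$-axes property is obtained trivially, without Lemma~\ref{lem-3-regular-bipartite-decomposable}: all registered votes are single-peaked with respect to one explicit axis and all unregistered votes with respect to another, with every vote fully specified. Correctness is then unified across Copeland$^1$ and Maximin by exploiting that both rules are \emph{weak Condorcet consistent}: one shows that any feasible solution makes $p$ tie every other candidate (hence a weak Condorcet winner), and that $p$ is the \emph{unique} weak Condorcet winner iff each triple $\{\xse_x^1,\xse_x^2,\xse_x^3\}$ is forced into a beating cycle, which happens precisely when the chosen $s$'s form an exact cover. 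This single argument replaces the dual calibration you anticipated needing.
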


{
\begin{proof}
In this proof, we first show the {\nphns} of CCAV and CCDV for Copeland$^{\alpha}$, $0\leq \alpha<1$, in $2$-axes elections. This is done by showing that the elections constructed in the {\nphns} reductions of CCAV and CCDV for Copeland$^{\alpha}, 0\leq \alpha\leq 1$, in~\cite{DBLP:phd/dnb/Yang15} (Theorem~3.2) 
are $2$-axes single-peaked. To make the proof as complete as possible, we provide the definitions of the elections but refer the correctness 
proofs to~\cite{DBLP:phd/dnb/Yang15}. 
After this, we derive reductions for CCAV and CCDV in $2$-axes elections which apply to both Copeland$^1$ and Maximin. 
All reductions are from the RX3C problem. Let $(\xs=\{\xse_1,\dots,\xse_{3\xsize}\}, \xc=\{\xce_1,\dots,\xce_{3\xsize}\})$ be an instance of the RX3C problem.


\bigskip
{\textbf{CCAV for Copeland$^{\alpha}$, $0\leq \alpha<1$.}}  The constructed election in~\cite{Yang2014} is as follows.  

{\bf{Candidates~$\mathcal{C}$}.} For each $\xse_x\in \xs$ there are two candidates~$\xse_x^L$ and~$\xse_x^R$.
In addition, there are two candidates~${\discandi}$ and~$p'$. The distinguished candidate is~${\discandi}$.

{\bf{Registered Votes~$\Pi_{\mathcal{V}}$}.} There are ${\xsize}-1$ registered votes, each of which has the preference
\[\xse_{3{\xsize}}^L \succ \xse_{3{\xsize}}^R \succ \xse_{3{\xsize}-1}^L\succ \xse_{3{\xsize}-1}^R\succ\cdots\succ p'\succ \discandi.\]
In addition, there is one vote with preference
\[\xse_{3{\xsize}}^R\succ \xse_{3{\xsize}}^L\succ \xse_{3{\xsize}-1}^R\succ \xse_{3{\xsize}-1}^L\succ\cdots\succ {\discandi}\succ {p'}.\]
%
%
%
%
%

{\bf{Unregistered Votes:~$\Pi_{\mathcal{W}}$.}} The unregistered votes are created according to~$S$:
for each $s\in \xc$, there is one vote~$\pi_{\xce}$ with preference
\[{\discandi}\succ p'\succ \{\xse_1^L, \xse_1^R\}\succ \{\xse_2^L, \xse_2^R\}\succ \cdots \succ \{\xse_{3\xsize}^L, \xse_{3\xsize}^R\},\]
such that for each $\xse_x\in \xs$ it holds that $\xse_x^L \succ \xse_x^R$ if $\xse_x\in s$ and $\xse_x^R\succ \xse_x^L$ otherwise.

We show that the $(\mathcal{C}, \Pi_{\mathcal{V}}\cup \Pi_{\mathcal{W}})$ is $2$-axes single peaked. 
To this end, let $A=(\xse_1^L,\xse_2^L,\dots,\xse_{3\xsize}^L)$ and $B=(\xse_1^R,\xse_2^R,\dots,\xse_{3\xsize}^R)$.
In addition, let $\lhd_1=(p,A,\overleftarrow{B},p')$ and $\lhd_2=(\overleftarrow{A},p,p',B)$.
It is fairly easy to check that all registered votes are single-peaked with respect to~$\lhd_1$ and all unregistered votes are single-peaked with respect to~$\lhd_2$.


\bigskip

{\textbf{CCDV for Copeland$^{\alpha}$, $0\leq \alpha<1$.}} The corresponding election given in~\cite{DBLP:phd/dnb/Yang15} is constructed based on an instance of the 
X3C problem. We replace the X3C instance with an RX3C instance and obtain the following instance. 
The candidate set is the same as the one for CCAV for Copeland$^{\alpha}$, where $0\leq \alpha<1$.
Moreover, let~$\lhd_1$ and~$\lhd_2$ be defined as above. 

{\bf{Votes~$\Pi_{\mathcal{V}}$.}} 
First, for each $s\in \xc$, there is one vote~$\pi_{\xce}$ with the preference
\[\{\xse_{3\xsize}^L, \xse_{3\xsize}^R\}\succ \cdots \succ \{\xse_1^L, \xse_1^R\}\succ \{\discandi\}\succ \{p'\}.\]
Moreover,~$\pi_{\xce}$ prefers~$\xse_x^L$ to~$\xse_x^R$ if and only if $\xse_x\in s$. 
It is fairly easy to see that the above votes are single-peaked with respect to~$\lhd_1$. 
Let $\Pi_{\xc}=\{\pi_{\xce} \setmid s\in \xc\}$. 
In there are~$2\kappa-2$ votes with the preference
\[{\discandi}\succ p'\succ \xse_1^L\succ \xse_1^R\succ \cdots \succ \xse_{3\xsize}^L\succ \xse_{3\xsize}^R,\] and two votes
with preference \[{\discandi}\succ p'\succ \xse_1^R\succ \xse_1^L\succ \cdots \succ \xse_{3\xsize}^R\succ \xse_{3\xsize}^L.\]
One can check that the above $2\kappa$ votes are single-peaked with respect to~$\lhd_2$.
\bigskip

We point out that the above reductions do not apply to Copeland$^1$ and Maximin. In particular, in both reductions, the final score of the distinguished candidate 
is $6\alpha\cdot\kappa+1$ and that of every other candidate is $\alpha \cdot (6\kappa+1)$ after adding (for CCAV) or deleting (CCDV) the votes in a solution. 
Therefore, only when~$0\leq \alpha<1$, the distinguished candidate is the unique winner. 
For Copeland$^{1}$ and Maximin, we derive the following reductions. 
\bigskip

{\textbf{CCAV for Copeland$^{1}$ and Maximin.}} We construct the following instance.  In this reduction, we assume that $\xsize\geq 3$. 

{\bf{Candidates~$\mathcal{C}$}.} We create $9{\xsize}+1$ candidates in total. More specifically, for each $\xse_x\in \xs$ we create a set $C(\xse_x)=\{\xse_x^1, \xse_x^2, \xse_x^3\}$ of three candidates. 
In addition, we create a distinguished candidate~${\discandi}$. Let $\mathcal{C}=\bigcup_{\xse_x\in \xs}C(\xse_x)\cup \{p\}$. 

{\bf{Registered Votes~$\Pi_{\mathcal{V}}$.}} We create in total~$\xsize$ registered votes. Precisely, we first create two votes with the following preference
\[\xse_{3{\xsize}}^1 \succ \xse_{3{\xsize}}^2\succ \xse_{3{\xsize}}^3 \succ \xse_{3{\xsize}-1}^1 \succ \xse_{3{\xsize}-1}^2\succ \xse_{3{\xsize}-1}^3 \succ \cdots \succ {\discandi}.\]
Then, we create $\kappa-2$ votes, each of which has the preference
\[\xse_{3{\xsize}}^2 \succ \xse_{3{\xsize}}^3\succ \xse_{3{\xsize}}^1 \succ \xse_{3{\xsize}-1}^2 \succ \xse_{3{\xsize}-1}^3\succ \xse_{3{\xsize}-1}^1 \succ \cdots \succ {\discandi}.\]
Let $A_1=(\xse_1^1,\xse_2^1,\dots,\xse_{3\xsize}^1)$ and $A_2=(\xse_1^3,\xse_1^2,\xse_2^3,\xse_2^2,\dots,\xse_{3\xsize}^3,\xse_{3\xsize}^2)$. Note that as we assumed $\xsize\geq 3$, 
$\xsize-2$ is a positive integer. 
In addition, let $\lhd_1=(A_1,\overleftarrow{A_2},p)$. It is easy to verify that the above votes are single-peaked with respect to~$\lhd_1$.

{\bf{Unregistered Votes~$\Pi_{\mathcal{W}}$.}} We create~$3\xsize$ unregistered votes based on~$S$.
For each $s\in \xc$, we create a vote~$\pi_{\xce}$ with the preference
\[{\discandi}\succ \{\xse_1^1, \xse_1^2, \xse_1^3\}\succ \cdots \succ \{\xse_{3\xsize}^1, \xse_{3\xsize}^2, \xse_{3\xsize}^3\}.\]
Moreover, for each $\xse_x\in \xs$ we make $\xse_x^2\succ \xse_x^3\succ \xse_x^1$ if $\xse_x\in s$, and $\xse_x^3\succ \xse_x^1\succ \xse_x^2$ otherwise.
Let $A_3=(\xse_1^3,\xse_1^1,\xse_2^3,\xse_2^1,\dots,\xse_{3\xsize}^3,\xse_{3\xsize}^1)$ and $A_4=(\xse_1^2,\xse_2^2,\dots,\xse_{3\xsize}^2)$.
In addition, let $\lhd_2=(\overleftarrow{A_3}, p, A_4)$. It is easy to check that all unregistered votes are single-peaked with respect to~$\lhd_2$.

We complete the construction by setting~$\ell={\xsize}$, i.e., we are allowed to add at most~$\kappa$ votes. 

The reduction can be completed in polynomial time. It remains to prove the correctness. 
We show the correctness for Copeland$^1$ and Maximin together. 
In particular, for the sufficiency direction, we show that adding~$\ell=\xsize$ votes corresponding to an exact set cover of~$\xs$ results in the distinguished candidate~$p$ being the unique 
weak Condorcet winner. It is well-known that both Copeland$^1$ and Maximin are weak Condorcet consistent in the sense that they select exactly the weak Condorcet winners whenever they exist. 
The proof for the necessity is established by showing that the only way to make~$p$ the unique Copeland$^1$/Maximin winner by adding at most~$\ell$ is to make~$p$ the unique weak Condorcet winner. 
\smallskip

$(\Rightarrow)$ Let~$S'$ be an exact set cover of~$\xs$ and let $\Pi=\{\pi_{\xce} \setmid s\in \xc'\}$ be the set of the~$\xsize$ unregistered votes corresponding to~$\xc'$. 
In addition, let $\mathcal{E}=(\mathcal{C}, \Pi_{\mathcal{V}}\cup \Pi)$. 
We claim that~$p$ is the unique weak Condorcet winner in~$\mathcal{E}$. 
Due to the above construction, all~$\xsize$ unregistered votes in~$\Pi$ rank~$p$ in the top. As all the~$\xsize$ registered votes rank~$p$ in the last, we know that~$p$ 
is a weak Condorcet winner in~$\mathcal{E}$. It remains to show that there are no other weak Condorcet winners.  
Due to the construction of the unregistered votes and the fact that~$S'$ is an exact set cover,
for every $\xse_x\in \xs$ there is exactly one vote in~$\Pi$ with preference $\xse_x^2\succ \xse_x^3\succ \xse_x^1$ and exactly $\kappa-1$ votes with preference $\xse_x^3\succ \xse_x^1\succ \xse_x^2$. Together with the registered votes, there are more than $2+(\kappa-1)=\xsize+1$
votes preferring~$\xse_x^1$ to~$\xse_x^2$, $\xsize+1$ votes preferring~$\xse_x^2$ to~$\xse_x^3$, and $(\xsize-1)+\xsize=2\xsize-1\geq \xsize+1$ (we assumed $\xsize\geq 3$) votes preferring~$\xse_x^3$ to~$\xse_x^1$, 
implying that every candidate corresponding to $\xse_x\in \xs$ is beaten by at least one candidate in~$\mathcal{E}$.
As this holds for all $\xse_x\in \xs$, we know that no candidate except~${\discandi}$ is a weak Condorcet winner in~$\mathcal{E}$.

$(\Leftarrow)$ Let~$\Pi$ be a solution of the above constructed CCAV instance and let  $\mathcal{E}=(\mathcal{C}, \Pi_{\mathcal{V}}\cup \Pi)$. 
Observe that $|\Pi|=\kappa$,
since otherwise at least one of $\xse_{3\xsize}^1$, $\xse_{3\xsize}^2$, and $\xse_{3\xsize}^3$ is a weak Condorcet winner, implying that~$p$ can be neither the unique Copeland$^1$ winner nor the 
unique Maximin winner in~$\mathcal{E}$. 
Due to the above construction, when~$\abs{\Pi}=\ell$, the number of votes ranking~$p$ in the top equals those ranking~$p$ in the last in~$\mathcal{E}$. 
Therefore,~$p$ ties all the other candidates and hence is a weak Condorcet winner  in~$\mathcal{E}$. 
As Copeland$^1$ and Maximin are weak Condorcet consistent,~$p$ must be the unique Condorcet winner in~$\mathcal{E}$, meaning that every candidate except~$p$ is beaten by at least one candidate. 
Due to the above construction, when $\abs{\Pi}=\ell=\xsize$,  all candidates created for a $\xse_x\in \xs$ ties all candidates created for  another $\xse_y\in \xs\setminus \{\xse_x\}$.
As a result, for every $\xse_x\in \xs$, every~$\xse_x^1$,~$\xse_x^2$, $\xse_x^3$ is beaten by someone in$\{\xse_x^1, \xse_x^2, \xse_x^3\}$.
As all unregistered votes preferring~$\xse_x^3$ to~$\xse_x^1$, and there are $\xsize-2$ registered votes preferring~$\xse_x^3$ to~$\xse_x^1$, there are in total $2\xsize-2\geq \xsize+1$ (recall that we assumed $\xsize\geq 3$) preferring~$\xse_x^3$ to~$\xse_x^1$ in~$\mathcal{E}$. This means that~$\xse_x^3$ beats~$\xse_x^1$ in $\mathcal{E}$. 
Therefore, it must be that~$\xse_x^1$ beats~$\xse_x^2$, and~$\xse_x^2$ beats~$\xse_x^3$ in~$\mathcal{E}$.
As all~$\kappa$ registered votes preferring~$\xse_x^2$ to~$\xse_x^3$,
there is at least one vote $\pi_{\xce}\in \Pi$ preferring~$\xse_x^2$ to~$\xse_x^3$. Due to the construction of the unregistered votes,
vote~$\pi_{\xce}$ prefers $\xse_x^2$ to~$\xse_x^3$ if and only if $\xse_x\in s$. 
As this holds for all $\xse_x\in \xs$, the subcollection corresponding to~$\Pi$, i.e., $\{s \setmid \pi_{\xce}\in \Pi\}$, covers~$\xs$. 
From $\abs{\Pi}=\xsize$, it follows that $\{s \setmid \pi_{\xce}\in \Pi\}$ is an exact set cover of~$\xs$.
%
\bigskip

{\textbf{CCDV for Copeland$^1$ and Maximin}.}  We construct the following instance. In this reduction, we assume that $\xsize\geq 4$. 
{The candidate set is the same as the above one for CCAV for Copeland$^1$ and Maximin. Moreover,~$\lhd_1$ and~$\lhd_2$ are defined the same as above.}


{\bf{Votes~$\Pi_{\mathcal{V}}$.}} First, for each $s\in \xc$, we create one vote~$\pi_{\xce}$ with the preference
\[\{\xse_1^1, \xse_1^2, \xse_1^3\}\succ \{\xse_2^1, \xse_2^2, \xse_2^3\}\succ\cdots\succ \{\xse_{3\xsize}^1, \xse_{3\xsize}^2, \xse_{3\xsize}^3\}\succ \{\discandi\}.\]
Inside each $\{\xse_x^1, \xse_x^2, \xse_x^3\}$, where $x\in [3\xsize]$, we have $\xse_x^1\succ \xse_x^2\succ \xse_x^3$ if $\xse_x\in s$, and $\xse_x^2\succ \xse_x^3\succ \xse_x^1$ otherwise. 
Let $\Pi_{\xc}=\{\pi_{\xce} \setmid s\in \xc\}$. Clearly, all votes in~$\Pi_{\xc}$ are single-peaked with respect to~$\lhd_1$.
Second, we create a multiset~$\Pi'$ of~$2\kappa$ votes that are single-peaked with respect to~$\lhd_2$. In particular, we create one vote with the preference
\[{\discandi}\succ \xse_{3\xsize}^2\succ \xse_{3\xsize}^3\succ \xse_{3\xsize}^1\succ\cdots\succ \xse_1^2\succ \xse_1^3\succ \xse_1^1,\]
and $2\kappa-1$ votes with the preference \[{\discandi}\succ \xse_{3\xsize}^3\succ \xse_{3\xsize}^1\succ \xse_{3\xsize}^2\succ\cdots\succ \xse_1^3\succ \xse_1^1\succ \xse_1^2.\]
In total, we have~$5\kappa$ votes. It is easy to see that~$\xse_1^2$ is the Condorcet winner assuming $\xsize\geq 4$. 
Finally, let $\ell=\kappa$, i.e., we are allowed to delete at most~$\kappa$ votes. The above construction clearly takes polynomial-time. We show the correctness based on that Copeland$^1$ and 
Maximin are weak Condorcet consistent similar to the above proof for CCAV. 
%
%
%
%

$(\Rightarrow)$ Let ${\xc'}\subseteq {\xc}$ be an exact set cover of~$\xs$. Let $\Pi=\{\pi_{\xce} \setmid s\in \xc'\}$ and let $\mathcal{E}=(\mathcal{C}, \Pi_{\mathcal{V}}\setminus \Pi)=(\mathcal{C}, \Pi'\cup (\Pi_{S}\setminus \Pi))$. We argue that~$p$ is the unique weak Condorcet winner in~$\mathcal{E}$. Clearly, $|\Pi|=|S'|=\kappa$. As all the~$2\xsize$ votes in~$\Pi'$ rank~$\discandi$ in the top and all the $3\xsize-\xsize=2\xsize$ votes 
in $\Pi_{S}\setminus \Pi$ rank~$\discandi$ in the last, we know that~${\discandi}$ ties all other candidates in~$\mathcal{E}$ and hence is a weak Condorcet winner.  
Due to the above construction, for each~$\xse_x\in \xs$ there are exactly $3-1=2$ votes with the preference $\xse_x^1\succ \xse_x^2\succ \xse_x^3$,
and $3\xsize-3-(\kappa-1)=2\kappa-2$ votes with the preference $\xse_x^2\succ \xse_x^3\succ \xse_x^1$ in $\Pi_{\xc}\setminus \Pi$.
Then, as there is exactly one vote with the preference $\xse_x^2\succ \xse_x^3\succ \xse_x^1$
and  $2\kappa-1$ votes with the preference $\xse_x^3\succ \xse_x^1\succ \xse_x^2$ in~$\Pi'$, it holds that~$\xse_x^1$ is beaten by~$\xse_x^3$,~$\xse_x^3$ is beaten by~$\xse_x^2$, and~$\xse_x^2$ is beaten by~$\xse_x^1$ in the election~$\mathcal{E}$, 
implying that none of~$\xse_x^1$,~$\xse_x^2$,~$\xse_x^3$ is a weak Condorcet winner in~$\mathcal{E}$. As this holds for all~$\xse_x\in \xs$, we know that~${\discandi}$ is the unique weak Condorcet winner in~$\mathcal{E}$.

$(\Leftarrow)$ Assume that there is a $\Pi\subseteq \Pi_{\mathcal{V}}$ such that $|\Pi|\leq \kappa$ and~${\discandi}$ becomes the unique Copeland$^1$/Maximin winner
in $\mathcal{E}=(\mathcal{C}, \Pi_{\mathcal{V}}\setminus \Pi)$.
Observe first that it must be that $\Pi\subseteq \Pi_{\xc}$ and $|\Pi|=\kappa$,
since otherwise~$\xse_1^2$ remains as the Condorcet winner. (Recall that in~$\Pi_S$ there are $3\xsize-3$ votes ranking~$c_1^2$ in the top. Assuming $\xsize\geq 4$,~$\xse_1^2$ remains as the Condorcet winner after deleting at most $\ell-1$ votes, or deleting at most~$\ell$ votes in total with at least one of them from~$\Pi'$)
 It follows that~${\discandi}$ ties all the other candidates and hence is a weak Condorcet winner in~$\mathcal{E}$. Because Copeland$^1$ and Maximin are weak Condorcet consistent,~$\discandi$ 
 must be the unique weak Condorcet winner. This means that every candidate except~$\discandi$ is beaten by at least one candidate in~$\mathcal{E}$. 
Due to the above construction, for every two distinct $\xse_x,\xse_y\in \xs$ such that $1\leq x<y\leq 3\xsize$, all the $2\xsize$ votes in $\Pi'$ rank all candidates in~$C({\xse}_y)$ above 
all candidates in~$C(\xse_x)$, and all the $3\xsize-\ell=2\xsize$ votes in $\Pi_S\setminus \Pi$ rank these candidates the other way around. 
Therefore, in the election~$\mathcal{E}$, all candidates in $C(\xse_x)$ tie all candidates in $C(\xse_y)$ for all $y\neq x$. 
Moreover, as all the~$2\kappa$ votes in~$\Pi'$ prefer~$\xse_x^3$ to~$\xse_x^1$, the only candidate which is able to beat $\xse_x^3$ in $\mathcal{E}$ is~$\xse_x^2$.  This further implies  that~$\xse_x^2$ is beaten by~$\xse_x^1$.
As there are exactly $2\kappa-1$ votes in~$\Pi'$ preferring~$\xse_x^1$ to~$\xse_x^2$, there are at least two votes in~$\Pi_{\xc}\setminus \Pi$ preferring~$\xse_x^1$ to~$\xse_x^2$.
Due to the construction, this means that there is at most one vote $\pi_{\xce}\in \Pi$ which prefers~$\xse_x^1$ to~$\xse_x^2$ and $\xse_x\in s$.
As this holds for all $\xse_x\in \xs$ and $|\Pi|=\kappa$, the subcollection $\{s\in \xc \setmid \pi_{\xce}\in \Pi\}$ is an exact set cover of~$\xs$.
\onlyfull{For nonunique winner model: \bf{Candidates.} For each $\xse_x\in \xs$, we create two candidates~$\xse_x'$ and~$\xse_x''$. Moreover, we create a distinguished candidate~${\discandi}$.
Hence, we have in total $6\kappa+1$ candidates. The two axes~$\lhd_1$ and~$\lhd_2$ are defined as follows.
\[\lhd_1=(\xse_{3\xsize}'',\dots,\xse_1'',\xse_1',\dots,\xse_{3\xsize}', q, \discandi)\]
\[\lhd_2=(\xse_1'',\dots,\xse_{3\xsize}'', q, \discandi, \xse_{3\xsize}',\dots,\xse_1')\]

{\bf{Votes.}} We create~$3\xsize$ votes that are single-peaked with respect to~$\lhd_1$. In particular, for each $\xce_x\in \xc$, we create a vote with the preference
\[q\succ \{\xse_1', \xse_1''\}\succ\cdots\succ \{\xse_{3\xsize}', \xse_{3\xsize}''\}\succ \discandi.\]
In particular, for each $x\in [3\xsize]$, $\xse_x'\succ \xse_x''$ if and only if $\xse_x\in \xce_x$. Clearly, all these votes are single-peaked with respect to~$\lhd_1$.
In addition, we create~$2\kappa$ votes that are single-peaked with respect to~$\lhd_2$. In particular, we create $2\kappa-1$ votes with preference
\[q\succ {\discandi}\succ \xse_{3\xsize}'\succ\cdots\succ \xse_1'\succ \xse_{3\xsize}''\succ\cdots\succ \xse_1''\]
and one vote with preference \[q\succ {\discandi}\succ \xse_{3\xsize}''\succ\cdots\succ \xse_1''\succ \xse_{3\xsize}'\succ\cdots\succ \xse_1'\]
It is easy to verify that these votes are single-peaked with respect to~$\lhd_2$.}
\hide{
\bigskip

Finally, we discuss Maximin.
\medskip

{\textbf{CCAV/CCDV for Maximin.}} The reductions for CCAV and CCDV for Copeland$^1$ above directly apply to  CCAV and CCDV for Maximin in $2$-axes elections, respectively.
The arguments for the correctness are analogously.

Consider first CCAV\@. After adding~$\kappa$ votes corresponding to an exact $3$-set cover~$S'$,~${\discandi}$ ties with every other candidate,
and hence~${\discandi}$ has Maximin score~$\kappa$.
For every $\xse_x\in \xs$, every candidate~$\xse_x^i$, where $i\in \{1,2,3\}$, ties with every candidate corresponding to a $\xse_y\in \xs\setminus \{\xse_x\}$,
and is beaten by some candidate in $\{\xse_x^1, \xse_x^2, \xse_x^3\}\setminus \{\xse_x^i\}$.
Hence, every candidate corresponding to an element in~$U$ has Maximin score at most $\kappa-1$, implying that~${\discandi}$ becomes the unique winner.
On the other hand, to make~${\discandi}$ the unique winner, any solution~$\Pi$ must contain exactly~$\kappa$ votes in~$\Pi_{\xc}$,
since otherwise one of $\{\xse_1^1, \xse_1^2, \xse_1^3\}$ would be the Condorcet winner. This implies two things:
(1) ${\discandi}$ ties with every other candidate in the final election, i.e.,~${\discandi}$ has Maximin score~$\kappa$; and
(2) all candidates corresponding to a $\xse_x\in \xs$ tie with all candidates corresponding to a $\xse_y\in \xs\setminus \{\xse_x\}$.
It then follows that for each $\xse_x\in \xs$, every candidate in $\{\xse_x^1, \xse_x^2, \xse_x^3\}$ must be beaten by another candidate in the same set.
The remaining arguments in the reduction for CCAV for Copeland$^1$ then applies.

Consider now CCDV\@.
For one direction, after deleting~$\kappa$ votes corresponding to an exact $3$-set cover,~${\discandi}$ ties with every other candidate and every candidate~$\xse_x^i$,
where $i\in \{1,2,3\}$ is beaten by some candidate in $\{\xse_x^1, \xse_x^2, \xse_x^3\}\setminus \{\xse_x^i\}$.
Hence, in the final election,~${\discandi}$ has Maximin score~$2\kappa$ and every other candidate has Maximin score at most $2\kappa-1$, implying that~${\discandi}$
becomes the unique Maximin winner after deleting the votes. It remains to argue for the other direction. Let~$\Pi_V$ be a solution. To make~${\discandi}$ the unique winner,
any solution~$\Pi$ must contain exactly~$\kappa$ votes in~$\Pi_1$. This implies two things:
(1) ${\discandi}$ ties with every other candidate in the final election, i.e.,~${\discandi}$ has Maximin score~$2\kappa$; and
(2) all candidates corresponding to a $\xse_x\in \xs$ tie with all candidates corresponding to a $\xse_y\in \xs\setminus \{\xse_x\}$.
It then follows that for each $\xse_x\in \xs$, every candidate in $\{\xse_x^1, \xse_x^2, \xse_x^3\}$ must be beaten by another candidate in the same set.
The remaining arguments for CCDV for Copeland$^1$ then applies.}
\end{proof}
}

Note that the {\nphns} of CCAV and CCDV for Copeland$^{\alpha}$, $0\leq \alpha<1$ in elections with single-peaked width~$2$, established by Yang and Guo~\shortcite{Yangaamas14a},
implies the {\nphns} of the same problems in $2$-CP elections because any election with single-peaked width~$k$ is a $k'$-CP election for some $k'\leq k$~\cite{Erdelyi2017}.

For Copeland$^1$ and Maximin in elections with single-peaked width~$2$,
Yang and Guo~\shortcite{Yangaamas14a} proved that CCAV and CCDV are polynomial-time solvable. Our results stand in contrast to theirs.

\begin{theorem}
\label{thm-CCAV-CCDV-Copeland-1-maximin-2-cp-np-hard}
CCAV and CCDV for Copeland$^1$ and Maximin in $2$-CP elections are {\nph}.
\end{theorem}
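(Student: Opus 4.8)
The plan is to reduce from \prob{Restricted Exact Cover by 3-Sets} (RX3C), which is \nph, and to handle Copeland$^1$ and Maximin simultaneously by exploiting that both are weak Condorcet consistent: they select exactly the weak Condorcet winners whenever such winners exist. Thus it suffices to build, from an RX3C instance $(\xs,\xc)$, a $2$-CP election together with a distinguished candidate~$\discandi$ and a budget $\ell=\xsize$ so that~$\discandi$ can be made the \emph{unique} weak Condorcet winner (by adding, for CCAV, or deleting, for CCDV, at most~$\ell$ votes) if and only if~$\xc$ contains an exact cover of~$\xs$. This hardness does not follow from the $2$-axes result (Theorem~\ref{thm_CCAV_CCDV_Copeland_k_additional_Axis_NP_hard}), since there is no general relation between $k$-axes and $k$-CP elections, nor from single-peaked width~$2$, for which these two rules are polynomial-time solvable; only the Copeland$^{\alpha}$, $0\le\alpha<1$, cases for $2$-CP are already known. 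Hence a genuinely new construction is required.

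The key structural constraint to respect is that a single-peaked profile has a transitive (hence acyclic) majority relation, so no part~$C_i$ of the partition can internally host a majority $3$-cycle. I would therefore use, for each $\xse_x\in\xs$, a three-candidate gadget whose candidates can be driven into a majority $3$-cycle (so that none of them is a weak Condorcet winner), and I would \emph{split} each gadget across the partition $(C_1,C_2)$, placing two of its candidates in one part and the third in the other. This is unavoidable and also desirable: the two inter-part comparisons of a gadget are completely unconstrained by single-peakedness and can be set freely, whereas the single intra-part comparison must be compatible with the common axis of its part. Accordingly, I would design the votes so that this intra-part comparison is \emph{unanimous} across the whole instance (thus trivially single-peaked and never flipping), while the two solution-sensitive comparisons that close the cycle are the inter-part ones. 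Global ``balancing'' votes, ranking~$\discandi$ at the top in some and at the bottom in others, are added so that, after a size-$\xsize$ solution is applied,~$\discandi$ ties every other candidate and becomes a weak Condorcet winner.

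For correctness I would argue as in Theorem~\ref{thm_CCAV_CCDV_Copeland_k_additional_Axis_NP_hard}. For the forward direction, applying the~$\xsize$ votes corresponding to an exact cover closes every gadget cycle and equalizes the votes ranking~$\discandi$ first and last, making~$\discandi$ the unique weak Condorcet winner and hence, by weak Condorcet consistency, the unique Copeland$^1$/Maximin winner. For the reverse direction, any size-$\le\ell$ solution that makes~$\discandi$ win must in fact make~$\discandi$ tie all candidates (otherwise some gadget candidate remains unbeaten and rivals~$\discandi$ as a weak Condorcet winner, so~$\discandi$ is not the unique winner under either rule); a counting argument then forces the budget to be spent as exactly one cover-vote per element, and any element covered twice or not at all would leave some gadget candidate unbeaten, a contradiction. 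Thus the chosen $3$-sets form an exact cover. CCDV is obtained by the mirror construction, deleting the gadget votes that are not needed rather than adding cover-votes.

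The main obstacle is the $2$-CP verification, i.e., exhibiting the partition $(C_1,C_2)$ together with two axes $\lhd_1,\lhd_2$ and checking that \emph{every} vote restricted to each part is single-peaked with respect to the axis of that part. The difficulty is that the domain constraint must hold for the full instance profile simultaneously, whereas the reduction needs comparisons that flip depending on the solution; the resolution is exactly the design choice above---route all flippable and cyclic comparisons through inter-part pairs and keep each intra-part comparison unanimous---combined with ordering the per-element gadget blocks monotonically along each axis and invoking Lemma~\ref{lem-single-peaked-consecutive} to realize the prescribed top sets as single-peaked votes. I expect verifying these two axes to be the most delicate and error-prone step, whereas the winner-determination and counting arguments transfer almost verbatim from the $2$-axes proof.
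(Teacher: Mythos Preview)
Your overall strategy---reduce from RX3C, exploit weak-Condorcet consistency of Copeland$^1$ and Maximin, balance the votes so that~$\discandi$ ties everyone, and use per-element gadgets that become cyclic exactly when the chosen votes encode an exact cover---is the same as the paper's. But your central design choice contains a genuine error that makes the construction fail.

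You propose a three-candidate gadget $\{a_x,b_x,c_x\}$ for each $\xse_x\in\xs$, split $2$--$1$ across $(C_1,C_2)$, with the single intra-part comparison (say $a_x$ versus $b_x$) \emph{unanimous}. This is fatal: if every vote has $a_x\succ b_x$, then every vote restricted to $\{a_x,b_x,c_x\}$ is one of $c_x a_x b_x$, $a_x c_x b_x$, or $a_x b_x c_x$, with counts $n_1,n_2,n_3$. For the intended $3$-cycle you need $b_x$ to beat $c_x$ (so $n_3>n_1+n_2$) and $c_x$ to beat $a_x$ (so $n_1>n_2+n_3$); adding these gives $0>2n_2$, a contradiction. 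Hence no $3$-cycle can arise, some gadget candidate stays unbeaten, and~$\discandi$ is never the \emph{unique} weak Condorcet winner. The same obstruction hits the ``tied'' variant: if $a_x$ ties $b_x$, then both must be beaten by $c_x$, leaving $c_x$ unbeaten. So with a $2$--$1$ split the intra-part edge can neither be unanimous nor tied, and a strict non-unanimous intra-part beat would force the $C_1$-restriction of a single vote to flip the $a_x/b_x$ order across different $x$ depending on~$s$, which is exactly what you were trying to avoid.

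The paper sidesteps this by using \emph{four}-candidate gadgets $\{c_x^1,c_x^2,c_x^3,c_x^4\}$ split $2$--$2$, with $c_x^1,c_x^3\in C_1$ and $c_x^2,c_x^4\in C_2$. After a size-$\kappa$ solution the two intra-part pairs $(c_x^1,c_x^3)$ and $(c_x^2,c_x^4)$ are \emph{tied}, and the $4$-cycle $c_x^1\to c_x^2\to c_x^3\to c_x^4\to c_x^1$ runs entirely along inter-part edges; each gadget candidate is thus beaten while~$\discandi$ ties all. Crucially, the $C_1$- and $C_2$-restrictions of the set-encoding votes are \emph{independent of~$s$} (both cases yield $c_x^1\succ c_x^3$ on $C_1$ and $c_x^2\succ c_x^4$ on $C_2$), so single-peakedness on each part is uniform across all such votes; the intra-part comparisons are nevertheless not unanimous because the registered ``balancing'' votes supply the opposite order. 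Your plan can be repaired by moving to this $4$-candidate, $2$--$2$ design (or otherwise abandoning the unanimity requirement), but as written the $3$-candidate gadget cannot close a majority cycle.
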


{
\begin{proof}
We prove the theorem by reductions from the RX3C problem. Let $(\xs,\xc)$ be a given RX3C instance where $\abs{\xs}=\abs{\xc}=3\xsize$. 
Without loss of generality, we assume that $\xsize\geq 3$.
Let $(\xse_1, \xse_2, \dots, \xse_{3\xsize})$ be an arbitrary but fixed order of~$\xs$. 
Similar to Theorem~\ref{thm_CCAV_CCDV_Copeland_k_additional_Axis_NP_hard}, our proofs are based on that Copeland$^1$ and Maximin are weak Condorcet consistent. 
\bigskip

{\textbf{CCAV for Copeland$^1$ and Maximin}.} We create an instance as follows.

{\textbf{Candidates $\mathcal{C}$.}} For each $\xse_x\in \xs$, $x\in [3\xsize]$, we create a set $C_x=\{\xse_x^1,\xse_x^2,\xse_x^3, \xse_x^4\}$ of four candidates.
 In addition to these candidates, we create a candidate~$\discandi$ which is the distinguished candidate. In total, we have $12\xsize+1$ candidates.

In the following, we create the votes so that they are $2$-CP single-peaked with respect to the following axis. Particularly, let
\[\lhd_1=(p, {\xse}_1^1, {\xse}_1^3, {\xse}_2^1, {\xse}_2^3, \dots, {\xse}_{3\xsize}^1, {\xse}_{3\xsize}^3)\]
and
\[\lhd_2=({\xse}_1^2, {\xse}_1^4, {\xse}_2^2, {\xse}_2^4, \dots, {\xse}_{3\xsize}^1, {\xse}_{3\xsize}^4).\]
Let $\lhd=(\lhd_1, \lhd_2)$. Clearly, all created candidates are in the linear order~$\lhd$ which is served as the axis.

{\textbf{Registered Votes $\Pi_{\mathcal{V}}$.}} First, we create $\xsize+1$ votes with the preference 
\[C_{3\xsize}\succ C_{3\xsize-1}\succ \cdots\succ C_{1}\succ p.\]
Among these votes, we specify the preferences inside all~$C_x$, $x\in [3\xsize]$, such that the following requirements are fulfilled (number of votes: preferences)
\medskip

\begin{tabular}{rl}
$\xsize-2$: & ${\xse}_x^3\succ {\xse}_x^1\succ {\xse}_x^4\succ {\xse}_x^2$\\
$1$: & ${\xse}_x^4\succ {\xse}_x^2\succ {\xse}_x^3\succ {\xse}_x^1$\\
$2$: & ${\xse}_x^3\succ {\xse}_x^4\succ {\xse}_x^1\succ {\xse}_x^2$\\
\end{tabular}
\medskip

Additionally, we create one vote with the preference
\[p\succ C_1\succ C_2\succ\cdots\succ C_{3\xsize}\] so that inside each~$C_x$, $x\in [3\xsize]$, it holds that
\[{\xse}_x^1\succ {\xse}_x^2\succ {\xse}_x^3\succ {\xse}_x^4.\]

The pairwise comparisons among candidates in each~$C_x$ with respect to the registered votes are shown in Figure~\ref{fig-comparison}.
\begin{figure}
\begin{center}
\includegraphics[width=2cm]{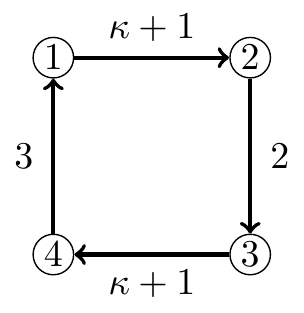}
\end{center}
\caption{The pairwise comparisons among candidates in each~$C_x$ with respect to the registered votes in the proof of CCAV for Copeland$^1$ in
Theorem~\ref{thm-CCAV-CCDV-Copeland-1-maximin-2-cp-np-hard}.
The number beside an arc from a candidate to another candidate is the number of registered votes ranking the former above the latter.
The node with number~$j$ inside presents the candidate ${\xse}_x^j$.
The comparison between ${\xse}_x^1$ and ${\xse}_x^3$, and the comparison ${\xse}_x^2$ and ${\xse}_x^4$ are not given since their comparisons can be easily analyzed base on the single-peakedness in
the correctness proof given below.}
\label{fig-comparison}
\end{figure}
\smallskip

{\textbf{Unregistered Votes $\Pi_{\mathcal{W}}$.}} The unregistered votes are created according to~$\xc$. Particularly, for each $\xce\in \xc$, we create one vote~$\pi_{\xce}$ with the preference
\[p\succ C_1\succ C_2\succ \cdots \succ C_{3\xsize}.\]
In side each~$C_x$ where $x\in [3\xsize]$, we set

\begin{tabular}{rl}
${\xse}_x^1\succ {\xse}_x^2\succ {\xse}_x^3\succ {\xse}_x^4$ & if ${\xse}_x\in \xse$, and\\
${\xse}_x^2\succ {\xse}_x^4\succ {\xse}_x^1\succ {\xse}_x^3$ & if ${\xse}_x\notin \xse$.\\
\end{tabular}

One can check all the above created votes are single peaked with respect to each of~$\lhd_1$ and~$\lhd_2$, and hence the election is a $2$-CP election.

Finally, we set $\ell=\xsize$, i.e., we are allowed to add at most~$\xsize$ unregistered votes.

The construction clearly can be done in polynomial time. In the following, we prove the correctness of the reduction.

$(\Rightarrow)$ Assume that there is an exact set cover $\xc'\subseteq \xc$ of~$\xs$. Consider the new election after adding the~$\xsize$
votes corresponding to~$\xc'$. In the new election, there are exactly $2\xsize+2$ votes among which $\xsize+1$ have the preference
\[C_{3\xsize}\succ \cdots \succ C_1\succ p\] and $\xsize+1$ votes with the preference
\[p\succ C_1\succ \cdots\succ C_{3\xsize}.\]
Therefore, in the new election,~$p$ is a weak Condorcet winner. 
Due to the above construction of the unregistered votes, we know that inside each~$C_x$, $x\in [3\xsize]$,
exactly one unregistered vote with the preference ${\xse}_x^1\succ {\xse}_x^2\succ {\xse}_x^3\succ {\xse}_x^4$ and exactly $\xsize-1$ unregistered votes
with the preference ${\xse}_x^2\succ {\xse}_x^4\succ {\xse}_x^1\succ {\xse}_x^3$ are added, which results in total one vote with ${\xse}_x^1\succ {\xse}_x^2$,~$\xsize$
many with ${\xse}_x^2\succ {\xse}_x^3$, one with ${\xse}_x^3\succ {\xse}_x^4$, and $\xsize-1$ many with ${\xse}_x^4\succ {\xse}_x^1$ in the added votes.
With the help of Figure~\ref{fig-comparison}, we can easily check that in the
new election~${\xse}_x^1$ beats~${\xse}_x^2$,~${\xse}_x^2$ beats~${\xse}_x^3$,~${\xse}_x^3$ beats~${\xse}_x^4$, and ~${\xse}_x^4$ beats~${\xse}_x^1$, 
meaning that none of~$C_x$ is a weak Condorcet winner in the new election. Therefore,~$p$ is the unique weak Condorcet winner.

$(\Leftarrow)$ Assume that we can add at most $\ell=\xsize$ unregistered votes so that~$p$ becomes the unique Copeland$^1$/Maximin winner.
Observe that we have to add exactly~$\ell$ votes since otherwise someone in~$C_1$ will be the winner as a majority of votes rank~$C_x$ in
the top in this case. Analogous to the analysis in the above direction, this implies that~$p$ ties all the other candidates and hence is a weak Condorcet winner 
in the final election. As Copeland$^1$ and Maximin are weak Condorcet winner, everyone else must be beaten by at least one candidate. 
Note that in the new election, for different $x,y\in [3\xsize]$, candidates in~$C_x$ tie all candidates in~$C_y$. Moreover, for all $x\in [3\xsize]$, the candidate~$\xse_x^1$ ties the candidate~$\xse_x^3$, and $\xse_x^2$ ties $\xse_x^4$. 
As all unregistered votes and multiple registered votes prefer ${\xse}_x^2$ to ${\xse}_x^3$ for all $x\in [3\xsize]$, ${\xse}_x^2$ beats ${\xse}_x^3$
in the new election. 
Therefore, the only candidate which is  able to beats ${\xse}_x^2$ in the new election is~${\xse}_x^1$.
As there are $\xsize+1$ registered votes preferring ${\xse}_x^1$ to ${\xse}_x^2$, we know that there is at least one vote $\pi_{\xce}$ 
which prefers ${\xse}_x^1$ to ${\xse}_x^2$ and is added in the election.
Due to the above construction, a vote $\pi_{\xce}$ prefers~${\xse}_x^1$ to~${\xse}_x^2$ if and only if ${\xse}_x\in \xce$.
As this holds for all ${\xse}_x\in \xs$, the subcollection corresponding to the added votes must be an set cover of~$\xs$.
As we add exactly~$\ell=\xsize$ votes, it must be an exact set cover.


\bigskip

{\textbf{CCDV for Copeland$^1$/Maximin}} The reduction here is similar to the above reduction. The candidates are exactly the same as in the above reduction.
In addition, we adopt all the registered and unregistered votes created in the above reduction but in each of them we reverse the position of the candidate~$\discandi$.
In particular, we have $\xsize+1$ votes with
the preference
\[p\succ C_{3\xsize}\succ C_{3\xsize-1}\succ \cdots \succ C_1\]
where the preferences inside each~$C_x$ are the same as in the above reduction.
In addition, we create one vote~$\pi$ with the preference 

\[C_1\succ C_2\succ \cdots\succ C_{3\xsize}\succ p\]
with the preference inside each~$C_x$, $x\in [3\xsize]$, being ${\xse}_x^1\succ {\xse}_x^2\succ {\xse}_x^3\succ {\xse}_x^4$, as in the above reduction.
Then, for each $\xse\in \xs$, we create a vote~$\pi_{\xse}$ with the preference
\[C_1\succ C_2\succ \cdots \succ C_{3\xsize}\succ p\]
with the preference inside each~$C_x$, $x\in [3\xsize]$, being the same as in the above reduction. Let $\Pi_{\xc}=\{\pi_{\xce}\setmid \xce\in \xc\}$ be the set of the votes corresponding to~$\xc$. 
Let $\Pi_{\mathcal{V}}$ denote the multiset of all votes in this reduction. 
These votes are $2$-axes single-peaked
with respect to a ``locally reversed''~$\lhd_1$ defined above.
Precisely, the new axis is
\[({\xse}_1^1, {\xse}_1^3, {\xse}_2^1,{\xse}_2^3,\dots, {\xse}_{3\xsize}^1, {\xse}_{3\xsize}^3, p, {\xse}_1^2, {\xse}_1^4, {\xse}_2^2, {\xse}_2^4, \dots, {\xse}_{3\xsize}^2, {\xse}_{3\xsize}^4).\]

Finally, we set $\ell=2\xsize$. The reduction can be done in polynomial time. Now we show the correctness.

$(\Rightarrow)$ If there is an exact set cover $\xc'\subseteq \xc$, we delete all the~$2\xsize$ votes corresponding to~$\xc\setminus \xc'$. After deleting
these votes, it holds that~$p$ ties all candidates, and everyone in some~$C_x$ ties everyone in some~$C_y$ such that $x\neq y$. As the preferences inside each~$C_x$
remain unchanged in all votes, for all~$C_x$, $x\in [3\xsize]$ in the new election, everyone in~$C_x$ is beaten by someone in the same set~$C_x$. Therefore,~$p$
is the unique winner in the new election.

$(\Leftarrow)$ Assume that we can remove a multiset~$\Pi$ of at most $\ell=2\xsize$ votes so that~$p$ becomes the unique Copeland$^1$/Maximin winner in $\mathcal{E}=(\mathcal{C}, \Pi_{\mathcal{V}}\setminus \Pi)$. 
As we have exactly $\xsize+1$ votes where~$p$ is ranked in the last position,
and we have in total $4\xsize+2$ votes, it must be that exactly~$\ell=2\xsize$ votes with~$p$ being ranked in the top are deleted (Otherwise,~$p$ is beaten by all candidates).
Recall that only votes in $\Pi_{\xc}\cup \{\pi\}$ rank~$p$ in the top and hence it holds that $\Pi\subseteq \Pi_{\xc}\cup \{\pi\}$. We first claim that the vote~$\pi$ is not deleted, i.e., $\pi\not\in \Pi$. 
To see this, check that for every $x\in [3\xsize]$, all votes in $\Pi_{\xc}\cup \{\pi\}$ prefer ${\xse}_x^2$ to ${\xse}_x^3$,
and in addition to these votes there is one more vote ranking ${\xse}_x^2$ above ${\xse}_x^3$.
This implies that in the election $\mathcal{E}$, there are in total $(3\xsize+1)+1-2\xsize=\xsize+2$ votes ranking ${\xse}_x^2$ above ${\xse}_x^3$.
This means that ${\xse}_x^2$ beats ${\xse}_x^3$ in~$\mathcal{E}$. Moreover, similar to the above analysis, we know that after deleting $2\xsize$ votes ranking~$\discandi$ in the top,~$\discandi$ ties all the other candidates, all candidates from different $C_x$ and $C_y$ where $x,y\in [3\xsize]$ are tied, and for all $x\in [3\xsize]$ the candidate $\xse_x^1$ ties $\xse_x^3$, and $\xse_x^2$ ties $\xse_x^4$. 
Given that Copeland$^1$ and Maximin are weak Condorcet consistent, everyone except~$\discandi$ is beaten by at least one candidate. Due to the above analysis, for each $x\in [3\xsize]$, the only candidate which beats~${\xse}_x^2$ must be~${\xse}_x^1$ for all $x\in [3\xsize]$.
Assume for the sake of contradiction that the vote~$\pi\in \Pi$. Let~$\pi_{\xce}\in \Pi_{\xc}\cap \Pi$ be any arbitrary vote in~$\Pi_{\xc}$ that is deleted too. Such a vote must exist since $2\xsize-1>1$. 
Let~$\xse_x$ be any element in~$\xce$. Due to the definition of $\pi_{\xce}$,~$\pi_{\xce}$ ranks~${\xse}_x^1$ above~${\xse}_x^2$. That is, there is at least one $\xse_x\in \xs$ such that
at least two votes ranking~${\xse}_x^1$ above~${\xse}_x^2$ are deleted, resulting in at most $\xsize+1$ votes ranking ${\xse}_x^1$ above ${\xse}_x^2$ in~$\mathcal{E}$,
implying that~${\xse}_x^2$  is not beaten by~${\xse}_x^1$, a contradiction. 
The claim is proved. Given the claim, we know that $\Pi\subseteq \Pi_{\xc}$. As $\xse_x^1$ beats $\xse_x^2$ where $x\in [3\xsize]$ in~$\mathcal{E}$, 
at most two votes with the preference ${\xse}_x^1\succ {\xse}_x^2$ can be included in~$\Pi$, leaving at least one vote with the preference ${\xse}_x^1\succ {\xse}_x^2$ in $\Pi_S\setminus \Pi$. Let $\xc'=\{\xce\in \xc \setmid \pi_{\xce}\in \Pi_S\setminus \Pi\}$ be the subcollection corresponding to the votes in~$\Pi_S$ that are not deleted. Clearly, $\abs{\xc'}=\xsize$. Because for every $\xce\in \xc$ and every $\xse_x\in \xs$, the vote $\pi_{\xce}$ ranks~${\xse}_x^1$ above~${\xse}_x^2$ if and only if $\xse_x\in \xce$,~$\xc'$ covers~$\xs$. As~$\abs{\xc'}=\xsize$,~$\xc'$ is an exact set cover of~$\xs$. 
\end{proof}
}

\section{Recognition of Nearly Single-Peakedness}
It is known that determining whether an election is single-peaked ($1$-axis) is polynomial-time solvable~\cite{Bartholdi1986T,Doignon:1994:PTA:182528.182531,DBLP:conf/ecai/EscoffierLO08}.
For every $k\geq 3$, Erd\'{e}lyi, Lackner, and Pfandler~\shortcite{Erdelyi2017} proved that determining whether an election is a $k$-axes election is {\nph}.
We complement these results by showing that determining whether an election is a $2$-axes election is polynomial-time solvable,
completely filling the complexity gap of the problem with respect to~$k$.
To this end, we reduce the problem to the {\prob{$2$-Satisfiability}} problem (2SAT).

\EP
{{\prob{$2$-Satisfiability}}}
{A CNF conjunction formula with such that each clause consists of exactly two literals.}
{Is there a truth-assignment which satisfies all clauses?}

It is well-known that the {\prob{$2$}-Satisfiability} problem can be solved in linear time in the number of
clauses~\cite{DBLP:journals/siamcomp/EvenIS76,DBLP:journals/ipl/AspvallPT79,DBLP:conf/dimacs/GuPFW96}.

Single-peaked elections have a nice characterization~\cite{DBLP:journals/scw/BallesterH11} which is useful in our study.

\begin{definition}[Worst-diverse structure (WD-structure)] A WD-structure $(\pi_x, \pi_y, \pi_z, a,b,c)$ in and election $(\mathcal{C}, \Pi_{\mathcal{V}})$ is a $6$-tuple such that
\begin{itemize}
\item $\pi_x,\pi_y,\pi_z\in \Pi_{\mathcal{V}}$, $a,b,c\in \mathcal{C}$;
\item $\pi_x(a)>\max\{\pi_x(b),\pi_x(c)\}$;
\item $\pi_y(b)>\max\{\pi_y(a),\pi_y(c)\}$; and
\item $\pi_z(c)>\max\{\pi_z(a),\pi_z(b)\}$.
\end{itemize}
\end{definition}

Three votes {\it{WD-conflict}} if there are three candidates forming a WD-structure with them.

\begin{definition}[$\alpha$-structure] An $\alpha$-structure $(\pi_x, \pi_y, a,b,c, d)$ in an election $(\mathcal{C}, \Pi_{\mathcal{V}})$ is a $6$-tuple such that
\begin{itemize}
\item $\pi_x,\pi_y\in \Pi_{\mathcal{V}}$, $a,b,c,d \in \mathcal{C}$;
\item $\pi_x(a)<\pi_x(b)<\pi_x(c)$, $\pi_x(d)<\pi_x(b)$; and
\item $\pi_y(c)<\pi_y(b)<\pi_y(a)$, $\pi_y(d)<\pi_y(b)$.
\end{itemize}
\end{definition}

Two votes {\it{$\alpha$-conflict}} if there are four candidates forming an $\alpha$-structure with them.
The following lemma was studied by Ballester and Haeringer~\shortcite{DBLP:journals/scw/BallesterH11}.

\begin{lemma}[\cite{DBLP:journals/scw/BallesterH11}]
\label{lem-single-peaked-charcteristics}
An election $(\mathcal{C}, \Pi_{\mathcal{V}})$ is single-peaked if and only if there are no WD-structure and $\alpha$-structure in $(\mathcal{C},\Pi_{\mathcal{V}})$.
\end{lemma}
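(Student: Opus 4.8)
The plan is to prove the two implications separately. The forward (necessity) direction is straightforward from the definition of single-peakedness, while the converse (sufficiency) is the crux and requires reconstructing an axis out of the two forbidden configurations; I expect the latter to be the main obstacle.

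For necessity I would first isolate the following elementary fact: if $\pi$ is single-peaked with respect to an axis $\lhd$ and $a\lhd b\lhd c$, then $\pi$ never ranks $b$ strictly below both $a$ and $c$; equivalently, among any three candidates the one lying in the middle of $\lhd$ is never ranked worst. This is immediate from the defining implication $\pi(c)<\pi(b)\Rightarrow\pi(b)<\pi(a)$ together with its mirror image. A WD-structure is then excluded at once: it exhibits three candidates $a,b,c$ each of which is ranked worst among the three (by $\pi_x$, $\pi_y$, $\pi_z$ respectively), so none of them can be the $\lhd$-middle of the triple, yet every triple has a $\lhd$-middle. For an $\alpha$-structure the conditions $\pi_x(a)<\pi_x(b)<\pi_x(c)$ and $\pi_y(c)<\pi_y(b)<\pi_y(a)$ make $c$ the worst among $\{a,b,c\}$ in $\pi_x$ and $a$ the worst in $\pi_y$; by the fact above $b$ must be the $\lhd$-middle, say $a\lhd b\lhd c$. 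Because $a\succ b\succ c$ along $a\lhd b\lhd c$, candidate $b$ lies to the right of the peak of $\pi_x$, so every candidate to its right is less preferred; hence $\pi_x(d)<\pi_x(b)$ forces $d\lhd b$. Symmetrically $b$ lies to the left of the peak of $\pi_y$, so $d\lhd b$ yields $\pi_y(d)>\pi_y(b)$, contradicting $\pi_y(d)<\pi_y(b)$. Thus neither structure can occur in a single-peaked election.

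For sufficiency I would reconstruct an axis from a betweenness relation read off the profile. Assuming there is no WD-structure and no $\alpha$-structure, I would call $b$ \emph{between} $a$ and $c$ when no vote ranks $b$ worst among $\{a,b,c\}$; WD-freeness guarantees that every triple contains at least one between-candidate, which is exactly the role the $\lhd$-middle should play. The heart of the argument is to show that this local data is globally consistent, i.e.\ that there is a single linear order $\lhd$ of $\mathcal{C}$ realizing all of these betweenness relations, after which one verifies directly, via the per-triple ``middle is never worst'' property, that every vote is single-peaked with respect to $\lhd$. Since a ternary betweenness relation is induced by a linear order exactly when its restriction to every four-element subset is so induced, the task reduces to checking consistency on quadruples, and this is precisely where $\alpha$-freeness enters: the $\alpha$-structure is the canonical obstruction in which two votes impose incompatible ``middle'' assignments on overlapping triples that share the candidates $b$ and $d$.

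The main obstacle is this sufficiency step. Two points need care. First, the naive ``never ranked worst'' betweenness can be ambiguous when a candidate is an axis-endpoint that merely never happens to be ranked worst, so a triple may have more than one between-candidate; one must argue that such ambiguities never force conflicting orders. Second, one must show that the only way quadruple-consistency can fail is through a configuration that already embeds an $\alpha$-structure (or, after relabelling and using WD-freeness, a WD-structure). I would organize this as a case analysis over the relative rankings of four candidates in two witnessing votes, proving that every inconsistent pattern contains one of the two forbidden structures, and then invoke the standard fact that quadruple-consistency of a betweenness relation yields a global linear order. This portion is the most delicate, and it is exactly the content for which the excerpt appeals to Ballester and Haeringer~\cite{DBLP:journals/scw/BallesterH11}.
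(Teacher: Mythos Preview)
The paper does not prove this lemma at all: it is stated with the citation to Ballester and Haeringer and then used as a black box in the algorithm for recognizing $2$-axes elections. So there is no ``paper's own proof'' to compare against; you correctly note at the end of your proposal that the excerpt simply appeals to~\cite{DBLP:journals/scw/BallesterH11}.

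On the substance of your sketch: the necessity direction is clean and correct. For sufficiency, your betweenness-relation plan is in the right spirit, but as written it is still only a plan. The two obstacles you flag are genuine, and in particular the reduction to quadruple-consistency plus the case analysis showing that every inconsistent quadruple pattern embeds a WD- or $\alpha$-structure is where all the work lies; you have not actually carried this out. The original proof in~\cite{DBLP:journals/scw/BallesterH11} proceeds more constructively---it builds the axis incrementally by placing candidates one at a time and arguing that the absence of the two forbidden configurations guarantees a consistent placement at each step---rather than via an abstract betweenness-realizability criterion. Either route can be made to work, but neither is short, which is presumably why the present paper simply cites the result.
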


Armed with the above lemma, we are able to develop the polynomial-time algorithm for recognizing $2$-axes elections.

\begin{theorem}
Determining whether an election is a $2$-axes election is polynomial-time solvable.
In particular, it can be solved in $\bigo{n^2\cdot m^3\cdot (n+m)}$ time, where~$m$ and~$n$ denotes the number of candidates and the number of votes in the given election.
\end{theorem}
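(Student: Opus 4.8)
The plan is to reduce the recognition of $2$-axes elections to 2SAT, which is solvable in linear time. First I would use the equivalent formulation stated in the introduction: an election is $2$-axes single-peaked exactly when its votes can be partitioned into two groups, each of which induces a single-peaked subelection. By Lemma~\ref{lem-single-peaked-charcteristics}, a set of votes is single-peaked if and only if it contains neither a WD-structure nor an $\alpha$-structure. Hence the task becomes: assign each vote one of two colors so that no monochromatic WD-structure and no monochromatic $\alpha$-structure survives. I would introduce one Boolean variable $x_\pi$ per vote $\pi$, read as ``$\pi$ is assigned to the first axis''.

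The $\alpha$-structures are easy to encode. Since an $\alpha$-structure involves exactly two votes, for every pair of $\alpha$-conflicting votes $\pi,\pi'$ I would add the clauses $(x_\pi \vee x_{\pi'})$ and $(\neg x_\pi \vee \neg x_{\pi'})$, forcing them into different groups. All $\alpha$-conflicting pairs are found by inspecting every pair of votes together with every $4$-subset of candidates, which costs $O(n^2 m^4)$ time and yields $O(n^2 m^4)$ clauses.

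The hard part, and the \emph{main obstacle}, is the WD-structures. A WD-structure involves three distinct votes, so the naive constraint ``not all three in the same group'' is a pair of $3$-clauses, which does not fit into 2SAT; indeed, $2$-coloring an arbitrary $3$-uniform hypergraph so that no edge is monochromatic is NP-hard, so a direct encoding is hopeless and the special structure of WD-conflicts coming from the two axes must be exploited. The geometric fact I would build on is that, for a fixed candidate triple $\{a,b,c\}$, each vote has a unique worst element among the three, and on any axis the \emph{middle} of $\{a,b,c\}$ can never be anyone's worst; thus on each of the two axes exactly one of $a,b,c$ is forbidden as a worst. The structural claim I would aim to establish is that this forces the votes whose worst-in-triple is the middle of one axis onto the other axis, so that the WD-constraints collapse into pairwise (indeed often unit) implications among the $x_\pi$ variables rather than genuine $3$-clauses. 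Proving that these pairwise clauses are simultaneously necessary and sufficient for the existence of two axes---so that a satisfying assignment of the 2SAT instance really corresponds to a valid partition and vice versa---is the technical heart of the argument, and is exactly where Lemma~\ref{lem-single-peaked-charcteristics} and the betweenness structure of single-peakedness have to be invoked with care.

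Finally, I would note that all WD-structures can be listed by scanning every triple of votes and every triple of candidates in $O(n^3 m^3)$ time, so the full 2SAT instance has $O(n^3 m^3 + n^2 m^4) = O(n^2 m^3 (n+m))$ clauses and is solvable in time linear in its size. A satisfying assignment yields the partition into two single-peaked groups; an explicit pair of witnessing axes is then recovered by running the known polynomial-time single-peaked recognition algorithm on each group, while unsatisfiability certifies that no two axes exist. The dominant cost is the enumeration of the forbidden structures, giving the claimed $O(n^2 m^3 (n+m))$ running time.
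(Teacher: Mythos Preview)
Your framework is right: reduce to 2SAT, handle $\alpha$-structures by the obvious inequality clauses, and recognize that the WD-structures are the obstacle because a naive encoding gives 3-clauses. Your running-time bookkeeping also matches the paper. However, the heart of the argument---how to turn the three-vote WD-constraints into 2-clauses---is not actually supplied. You gesture at a ``structural claim'' that the constraints ``collapse into pairwise (indeed often unit) implications'', but you never specify which clauses to generate or why they are both sound and complete. Without a concrete rule producing 2-clauses and a proof that every monochromatic WD-triple is ruled out by those clauses, the reduction is not a reduction.

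The paper supplies exactly this missing mechanism, and it is not the global structural lemma you seem to be reaching for. Instead it \emph{branches}: pick one WD-structure on candidates $a,b,c$ and partition \emph{all} votes into $\Pi_a,\Pi_b,\Pi_c$ according to which of $a,b,c$ each vote ranks last. Any valid 2-partition must put each side's votes into at most two of these classes, so up to symmetry there are three cases, e.g.\ $\Pi_a\subseteq\Pi_T$ and $\Pi_b\subseteq\Pi_F$. Once two classes are anchored, only $\Pi_c$ is free, and now every residual WD-conflict involves at most two free votes: a single $\pi\in\Pi_c$ that WD- or $\alpha$-conflicts with something in $\Pi_a$ (resp.\ $\Pi_b$) yields a unit clause, and two votes $\pi,\pi'\in\Pi_c$ that together WD-conflict with one vote in $\Pi_a$ (resp.\ $\Pi_b$) yield a 2-clause. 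A short argument (which the paper gives) shows that triples entirely inside $\Pi_c$ are already covered by these clauses. This anchoring-and-branching step is the idea your proposal is missing; without it, there is no evident way to avoid genuine 3-clauses.
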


\begin{proof}
Let $(\mathcal{C}, \Pi_{\mathcal{V}})$ be an election. Let $m=\abs{\mathcal{C}}$ and $n=\abs{\Pi_{\mathcal{V}}}$ denote the number of candidates and the number of votes respectively.
The problem is equivalent to seeking a partition $(\Pi_{T}, \Pi_{F})$ of $\Pi_{\mathcal{V}}$ so that both $(\mathcal{C}, \Pi_{T})$ and $(\mathcal{C}, \Pi_{F})$ are single-peaked.
We reduce the problem to the 2SAT problem as follows.
For each vote $\pi\in \Pi_{\mathcal{V}}$, we create a variable~$x(\pi)$.
Hence, a partition $(\Pi_{T}, \Pi_{F})$ of~$\Pi_{\mathcal{V}}$ corresponds to a truth assignment of the variables, and vice versa:
variables corresponding to votes in~$\Pi_{T}$ are assigned {\sf{true}} and variables corresponding to votes in~$\Pi_{F}$ are assigned {\sf{false}}.

If there are no WD-structures in $(\mathcal{C}, \Pi_{\mathcal{V}})$, we create the clauses as follows. For every two votes~$\pi$ and~$\pi'$ which $\alpha$-conflict,
we create two clauses $\left(x(\pi), x(\pi')\right)$ and $\left(\overline{x(\pi)}, \overline{x(\pi')}\right)$. In order to satisfy both clauses,~$x(\pi)$ and~$x(\pi')$
must be assigned differently, and thus~$\pi$ and~$\pi'$ are partitioned into different sets. From Lemma~\ref{lem-single-peaked-charcteristics},
there is a truth assignment satisfying all clauses if and only if $(\mathcal{C}, \Pi_{\mathcal{V}})$ is a $2$-axes election.
Assume now that there are WD-structures in $(\mathcal{C}, \Pi_{\mathcal{V}})$. Let~$a$,~$b$, and~$c$ be three candidates in a WD-structure.
Let $(\Pi_a, \Pi_b, \Pi_c)$  be a partition of~$\Pi_{\mathcal{V}}$ such that, among~$a$,~$b$, and~$c$,~$\Pi_a$ consists of all votes ranking~$a$ last,~$\Pi_b$
all votes ranking~$b$ last, and~$\Pi_c$ all votes ranking~$c$ last. Observe that~$\Pi_a$,~$\Pi_b$, and~$\Pi_c$ are all nonempty.
Moreover, if $(\mathcal{C}, \Pi_{\mathcal{V}})$ is $2$-axes single-peaked, then none of~$\Pi_{T}$ and~$\Pi_{F}$ contains three votes from~$\Pi_a$,~$\Pi_b$, and~$\Pi_c$,
respectively, where~$\Pi_{T}$ and~$\Pi_{F}$ are as discussed above. Due to symmetry, we have three cases to consider:
\begin{description}
\item[Case~1] $\Pi_a\subseteq \Pi_{T}$, $\Pi_b\subseteq \Pi_{F}$.
\item[Case~2] $\Pi_a\subseteq \Pi_{T}$, $\Pi_c\subseteq \Pi_{F}$.
\item[Case~3] $\Pi_b\subseteq \Pi_{T}$, $\Pi_c\subseteq \Pi_{F}$.
\end{description}
If $(\mathcal{C}, \Pi_{\mathcal{V}})$ is $2$-axes single-peaked, at least one of the above cases holds, and vice versa. Due to symmetry, we analyze only Case~1.
The analysis for the other two cases are similar. First, if $(\mathcal{C}, \Pi_a)$ or $(\mathcal{C}, \Pi_b)$ are not single-peaked,
we discard this case. Assume now that both $(\mathcal{C}, \Pi_a)$ and $(\mathcal{C}, \Pi_b)$ are single-peaked. Then, we create the clauses as follows.
We shall ensure that there is a partition $(\Pi_{T}, \Pi_F)$ of $\Pi_{\mathcal{V}}$  such that $(\mathcal{C}, \Pi_T)$, $(\mathcal{C}, \Pi_F)$ are single-peaked,
$\Pi_a\subseteq \Pi_T$ and $\Pi_b\subseteq \Pi_F$, if and only if the 2SAT instance has a truth assignment satisfying all the following clauses.

\begin{itemize}
\item If there is a vote $\pi\in \Pi_c$ which $\alpha$-conflicts with one vote in $\Pi_a$ (resp.\ $\Pi_b$), or
WD-conflicts with two votes in~$\Pi_a$ (resp.\ $\Pi_b$), then~$\pi$ must be included in~$\Pi_{F}$ (resp.\ $\Pi_{T}$).
In this case, we create a clause $(\overline{x(\pi)})$ (resp.\ $(x(\pi))$).

\item If there are two votes $\pi, \pi'\in \Pi_c$ which WD-conflict with one vote in~$\Pi_a$ (resp.\ $\Pi_b$),
we create one clause $(\overline{x(\pi)}, \overline{x(\pi')})$ (resp.\ $({x(\pi)}, {x(\pi')})$), to ensure that at least one of $\{\pi, \pi'\}$ is in~$\Pi_{F}$ (resp.\ $\Pi_{T}$).
\end{itemize}

One may wonder that there might be three votes $\pi, \pi', \pi''\in \Pi_c$ that WD-conflict.
We don't need to consider this case since it has been implicitly dealt with in the second type of clauses. Assume that $(\pi, \pi', \pi'', d, d', d'')$ is a WD-structure,
where $d, d', d''\in \mathcal{C}$ and $\pi, \pi', \pi''\in \Pi_c$. Let~$\pi_1$ and~$\pi_2$ be two arbitrary votes from~$\Pi_a$ and~$\Pi_b$, respectively.
If the last ranked candidates among~$d$,~$d'$,~$d''$ in~$\pi_1$ and~$\pi_2$ are the same, say,~$d$, then,~$\pi'$ and~$\pi''$ WD-conflict with both~$\pi_1$ and~$\pi_2$.
Hence, two clauses~$(x(\pi'), x(\pi''))$ and~$(\overline{x(\pi')}, \overline{x(\pi'')})$ have been created according to the above discussion.
In order to satisfy these two clauses,~$x(\pi')$ and~$x(\pi'')$ must be assigned different values and the votes~$\pi'$ and~$\pi''$ are partitioned into different sets accordingly.
On the other hand, assume that the last ranked candidates among~$d$,~$d'$,~$d''$ in~$\pi_1$ and~$\pi_2$ are different.
Without loss of generality, assume that~$\pi_1$ ranks~$d$ in the last and~$\pi_2$ ranks~$d'$ in the last.
Then,~$\pi'$ and~$\pi''$ WD-conflict with~$\pi_1$, and~$\pi$ and~$\pi''$ WD-conflict with~$\pi_2$.
According to the above discussion, we have two clauses~$(\overline{x(\pi')}, \overline{x(\pi'')})$ and $({x(\pi)}, {x(\pi'')})$.
Again, to satisfy these two clauses,~$x(\pi)$,~$x(\pi')$, and~$x(\pi'')$ cannot be assigned the same value, leading to~$\pi$,~$\pi'$,~$\pi''$ not being partitioned into the same set.

Now we analyze the running time of the algorithm which is dominated by the following steps.

(1) Find a WD-structure and the three candidates~$a$,~$b$, and~$c$ as discussed above. This can be done in $\bigo{n^3\cdot m^3}$ time. The partition $(\Pi_a, \Pi_b, \Pi_c)$
as defined in the algorithm can be constructed in $\bigo{m\cdot n}$ time.

(2) Check if $(\mathcal{C}, \Pi_a)$ and $(\mathcal{C}, \Pi_b)$ are single-peaked, which can be done in $\bigo{m\cdot n}$ time~\cite{DBLP:conf/ecai/EscoffierLO08}.

(3) Find all pairs $(\pi, \pi')$ such that~$\pi$ and $\pi'$ $\alpha$-conflict and $\pi\in \Pi_c$ and $\pi'\in \Pi_a\cup\Pi_b$, and create a clause for each such pair. This can be done in
$\bigo{n^2\cdot m^4}$ by enumerating all possibilities.

(4) Find all $3$-tuple $(\pi, \pi', \pi'')$ such that $\pi$, $\pi'$, and $\pi''$ WD-conflict and $\pi\in \Pi_a$, $\pi'\in \Pi_b$, $\pi''\in \Pi_c$,
and create one clause for each $3$-tuple
as described above. This can be done in $\bigo{n^3\cdot m^3}$.

In summary, we need $\bigo{n^2\cdot m^3\cdot (n+m)}$-time to create $\bigo{n^2\cdot m^3\cdot (n+m)}$ clauses.
Then, the whole algorithm runs in $\bigo{n^2\cdot m^3\cdot (n+m)}$ time as the 2SAT can be solved in linear time in the number of clauses~\cite{DBLP:journals/siamcomp/EvenIS76}.
\end{proof}


\onlyfull{Erd\'{e}lyi, Lackner, and Pfandler posted in~\shortcite{Erdelyi2017} the open question whether determining an election is a $k$-CP election is {\nph}.
We resolve this question in the affirmative. Without stopping here, we further show that this problem is {\fpt} with respect to~$k$.
\begin{theorem}
Determining whether an election is a $k$-CP election is {\nph}, but {\fpt} with respect to~$k$.
\end{theorem}
}

\section{Conclusion}
Aiming at pinpointing the complexity border of CCAV and CCDV between single-peaked elections and general elections,
we have studied these problems in $k$-axes elections and $k$-CP elections and obtained many tractability and intractability results.
We particularly studied the voting correspondences $r$-approval, Condorcet, Maximin, and Copeland$^{\alpha}, 0\leq \alpha\leq 1$. Our study closed many gaps left in the literature. 
We refer to Table~\ref{tab_our_results} for a summary of our results. 
Though that our focus in this paper is the unique-winner model of CCAV and CCDV, it should be pointed out that all our results, including polynomial-time solvability results,
{\fpt} results, and {\nphns} results hold for the nonunique-winner model of CCAV and CCDV as well. Recall that in the nonunique-winner model,  the goal is to make the distinguished
candidate~${\discandi}$ a winner, but not necessarily the unique winner. 

In addition, we proved that determining whether an election is a $k$-axes election is polynomial-time solvable for $k=2$. Given that the problem is polynomial-time solvable
for $k=1$~\cite{DBLP:journals/scw/BallesterH11} and {\nph} for every $k\geq 3$~\cite{Erdelyi2017}, our result closes the final complexity gap of the problem with respect to~$k$.


There remain several open questions (see Table~\ref{tab_our_results}) for future research. In particular, the complexity of CCAV and CCDV for Copeland$^{\alpha}$, $0\leq \alpha<1$,
restricted to single-peaked elections remained open.
In addition, investigating the complexity of other voting problems, e.g., {\sc{Destructive Control By Adding/Deleting Votes/Candidates, Bribery}},
in nearly single-peaked elections would be another promising topic for future research.

\section*{Acknowledgement}
The author would like to thank the anonymous reviewers of ECAI 2020 for their constructive comments.
A $3$-page extended abstract of this paper appeared in the proceedings of AAMAS 2018~\protect\cite{DBLP:conf/atal/Yang18}.
This version has a much expanded introduction section, provides many proofs, and resolves some open questions left in the previous version.

{
\section*{Appendix}
In this section, we provide the proof of Lemma~\ref{lem-3-regular-bipartite-decomposable}.


\begin{proof}
Let~$G$ be a $3$-regular bipartite graph with vertex set~$\vset{G}$ and edge set~$\eset{G}$.
Due to a classic theorem of K\"{o}nig~\shortcite{Koenig1916},~$\eset{G}$ can be partitioned into three perfect matchings~$M_1$,~$M_2$,~and~$M_3$ of~$G$. 
Let~$H_1$ be the graph with vertex set~$\vset{G}$ and edge set $M_1\cup M_2$, and let~$H_2$ be the graph with vertex set~$\vset{G}$ and edge set~$M_3$.
As~$M_1$ and~$M_2$ are perfect matchings, each vertex in~$H_1$ is of degree~$2$.
In other words,~$H_1$ consists of vertex-disjoint cycles. We shall remove some edges from~$H_1$ to~$H_2$
so that both~$H_1$ and~$H_2$ consist of only vertex-disjoint paths. Obviously,~$H_2$ is already such a graph. We use the following procedure to suit our purpose.

\begin{procedure}
\ForEach{cycle~$L$ in~$H_1$\;}
{
Let~$\edge{u}{w}$ be an edge in~$L$ such that~$u$ and~$w$ are in different paths and are both of degree-$1$ in~$H_2$\;\label{line-hl}
 Remove the edge~$\edge{u}{w}$ from~$H_1$ into~$H_2$\; \label{line-h2}
}
\end{procedure}

Clearly, after adding an edge as described in Line~\ref{line-h2} in the above procedure into~$H_2$, the graph~$H_2$ still consists of only vertex-disjoint paths.
After removing one edge in a cycle, the cycle becomes a path. So, after the above procedure,~$H_1$ consists of only vertex-disjoint paths too.
The proof is not completed because we have not shown that when a cycle in~$H_1$ is considered in the above procedure,
there always exists an edge as required in Line~\ref{line-hl}. We shall prove this statement now by induction.
Let $L_1, L_2, \dots, L_t$ be the vertex-disjoint cycles in~$H_1$, and without loss of generality, assume that~$L_i$ is considered before~$L_{i+1}$ in the procedure for each $i\in [t-1]$.
Moreover, for ease of exposition, for each $i\in [t]$, let~$H_2^i$ denote the graph~$H_2$ after exactly~$i$ cycles in~$H_1$ are considered.
So, the edge set of~$H_2^1$ is exactly~$M_3$.
As~$M_1$,~$M_2$, and~$M_3$ are disjoint perfect matchings of~$G$, for any edge in~$L_1$ the two endpoints of this edge are in different paths and are of degree~$1$ in~$H_2^1$.
Assume that, for some $i\in [t]$, after cycles $L_1, L_2, \dots, L_{i-1}$ are dealt with,~$H_2^{i-1}$ consists of vertex-disjoint paths. Consider now the cycle~$L_{i}$.
We first have the following claim.
\medskip

{\textbf{Claim.}} Every vertex in the circle~$L_i$ is a degree-$1$ vertex in $H_2^{i-1}$.
\smallskip

We prove the claim as follows. Assume for the sake of contradiction there is a vertex~$u$ in~$L_i$ who has at least two neighbors in the graph~$H_2^{i-1}$.
Recall that all cycles in~$H_1$ are vertex-disjoint, and hence~$u$ does not belong to any of~$L_1, L_2, \dots, L_{i-1}$.
Due to the above procedure, we know that all edges incident to~$u$ in~$H_2^{i-1}$ must be from~$M_3$.
However, if~$u$ has at least two neighbors in~$H_2^{i-1}$,~$M_3$ cannot be a matching, a contradiction.
\smallskip

Based on the above claim, we find a desired edge in~$L_i$ as follows.
Let~$\edge{u}{w}$ be an arbitrary edge in~$L_i$. If $\edge{u}{w}$ fulfills the condition in Line~\ref{line-hl}, we are done.
Otherwise, due to the above claim, it must be that~$u$ and~$w$ are the two ends of a path in~$H_2^{i-1}$.
Let~$w'$ be the other neighbor of~$w$ in the cycle~$L_i$.
Due to the above claim,~$w'$ has degree~$1$ in $H_2^{i-1}$. This directly means that~$w'$ cannot be in the same path as~$u$ and~$w$.
Then,~$\edge{w}{w'}$ is the desired edge which can be removed from~$L_i$ to $H_2^{i-1}$ as described in
Line~\ref{line-h2}.

Now, let~$H_1^t$ and~$H_2^t$ be the graphs returned by the above procedure.
Then, we can define the desired linear orders~$\lhd_1$ and~$\lhd_2$ based on the fact that both~$H_1^t$ and~$H_2^t$ consists of only vertex-disjoint paths.
Precisely, let $(P_1, P_2, \dots, P_t)$ be an arbitrary but fixed order of all paths in~$H_1^t$. For each~$P_i$ where $1\in [t]$, let $(P_i(1), P_i(2), \dots, P_i(\ell))$ denote
the path~$P_i$, i.e.,~$\ell$ is the number of vertices in the path and there is an edge between~$P_i(j)$ and~$P_i(j+1)$ for all $j\in [\ell-1]$ in~$H_1^t$.
Then, we define~$\lhd_1$ so that all vertices in~$P_i$ are ordered before all vertices in~$P_{i+1}$ for all $i\in [t-1]$.
Moreover, the relative order of the vertices in each~$P_i$ in~$\lhd_1$ is exactly $(P_i(1), P_i(2), \dots P_i(\ell))$.
The linear order~$\lhd_2$ can be constructed analogously.
\end{proof}
}%

\endgroup


\end{document}